\newcommand{\ketbra}[2]{\ket{#1}\!\!\bra{#2}}
\newcommand{\tr}{\textup{tr}}
\newcommand{\be}{\begin{equation}}
\newcommand{\ee}{\end{equation}}
\newcommand{\nn}{{\mathbbm{N}}}
\newcommand{\nnp}{{\mathbbm{N}_{> 0}}}
\newcommand{\nno}{{\mathbbm{N}_{\geq 0}}}
\newcommand{\rr}{{\mathbbm{R}}}
\newcommand{\rro}{{\mathbbm{R}_{\geq 0}}}
\newcommand{\cc}{{\mathbbm{C}}}
\newcommand{\ccr}{{\mathbbm{R}}}
\newcommand{\zz}{{\mathbbm{Z}}}
\newcommand{\me}{\mathrm{e}}
\newcommand{\mi}{\mathrm{i}}
\newcommand{\id}{{\mathbbm{1}}} 
\newcommand{\bo}{\mathcal{O}}
\theoremstyle{definition}
\newtheorem{proposition}{Proposition}
\newtheorem{lemma}{Lemma}
\newtheorem{definition}{Definition}
\newcommand{\Mspace}{\vspace{0.2cm}} 
\def\ba#1\ea{\begin{align}#1\end{align}} 
\newcommand{\app}{\text{appendix}}
\newcommand{\Apps}{\text{Appendices}}
\newcommand{\Meth}{\text{Supplementary}}
\newcommand{\meth}{\text{supplementary}}
\newcommand{\ifree}{counterfactual}
\newcommand{\Ifree}{Counterfactual}
\newcommand{\ifreeoutcome}{counterfactual outcome}
\newcommand{\ifreeoutcomes}{counterfactual outcomes}
\newcommand{\mpw}[1]{\textcolor{blue}{MW: #1}}
\newcommand{\mpwf}[1]{\textcolor{blue}{MW\footnote{\textcolor{blue}{MW: #1}}}}
\newcommand{\hiddenremark}[1]{}
\newcommand\mpwS[1]{MW:{\let\helpcmd\sout\parhelp#1\par\relax\relax} }
\long\def\parhelp#1\par#2\relax{%
	\helpcmd{#1}\ifx\relax#2\else\par\parhelp#2\relax\fi%
}
\newcommand{\nocontentsline}[3]{}
\newcommand{\tocless}[2]{\bgroup\let\addcontentsline=\nocontentsline#1{#2}\egroup}
\newcommand{\proj}[1]{|#1\rangle\!\langle #1|}
\newcommand{\Sys}{\textup{S}}
\newcommand{\cl}{\textup{Cl}}
\newcommand{\RL}{\textup{2}}
\newcommand{\SW}{\textup{S}}
\newcommand{\R}{\textup{R}}
\newcommand{\A}{\textup{C}}
\newcommand{\on}{\textup{on}}
\newcommand{\off}{\textup{off}}
\newcommand{\clockw}{{clockwork}}
\newcommand{\rg}{{register}}
\newcommand{\An}{\textup{A}}
\newcommand{\NT}{{N_T}}
\newcommand{\measure}{\textup{m}}
\newcommand{\effect}{\mathtt E}
\newcommand{\sete}{\mathtt e}
\newcommand{\sets}{\mathtt s}
\newcommand{\cf}{\textup{cf}}
\newcommand{\mainr}{\mathcal R}
\newcommand{\squig}{$\scriptsize$\sim$\normalsize$\!}
\newcommand{\rsquigend}{$\scriptsize\rule{.1ex}{0ex}$\rhd$\normalsize$}
\newcommand{\myarrow}[1]{$\stackrel{#1}{\text{$\squig\squig\squig\squig\rsquigend$}}$}
\DeclareMathOperator\erf{erf}
\DeclareMathOperator\sinc{sinc}
\DeclareMathOperator\sign{sign}
\renewcommand*\env@matrix[1][*\c@MaxMatrixCols c]{%
	\hskip -\arraycolsep
	\let\@ifnextchar\new@ifnextchar
	\array{#1}}
\begin{document}
\title{
Measuring time with stationary quantum clocks\\
}
\begin{abstract}
	Time plays a fundamental role in our ability to make sense of the physical laws in the world around us.  The nature of time has puzzled people | from the ancient Greeks to the present day | resulting in a long running debate between philosophers and physicists alike to whether time needs change to exist (the so-called relatival theory), or whether time flows regardless of change (the so-called substantival theory). One way to decide between the two is to attempt to measure the flow of time with a stationary clock, since if time were substantival, the flow of time would manifest itself in the experiment. Alas, conventional wisdom suggests that in order for a clock to function, it cannot be a static object, thus rendering this experiment seemingly impossible. Here we show, with the aid of \ifree{} measurements, the surprising result that a quantum clock can measure the passage of time even while being switched off, thus lending constructive support for the substantival theory of time.
\end{abstract}	

\author{Sergii Strelchuk}
\affiliation{DAMTP, University of Cambridge, UK}
\author{Mischa P. Woods}
\affiliation{Institute for Theoretical Physics, ETH Zurich, Zurich, Switzerland}


\maketitle 

\section{Introduction}
Time is an essential ingredient in the world we inhabit. Not so surprisingly, it has played a central role in all of our physical theories. 
Yet this role was rather mundane until the advent of modern physics. In particular, out of the three pillars of modern physics | quantum mechanics, special and general relativity | it was only the latter two which forced us to change our preconceptions about the nature of time. Quantum mechanics, on the other hand, while very strange and mysterious in many ways, did not bring any novel insights as far as time is concerned: time is just a parameter which increases in line with any mundane classical clock | just like in Newton's laws. Even more recently, it has been proven, in the context of quantum mechanics, that the time-analogue of Bell's inequality always has a perfectly sound classical explanation~\cite{Temporal_Bell_inequality_Purves,Alistair_causal}.\Mspace

Conversely, in the philosophy of physics domain, the notion of time occupied a prominent role in debates dating back millennia which are still very active today. One of the ongoing debates is whether time necessitates change to exist. The substantival theory of time says that time exists and provides an invisible container in which matter lives, regardless of whether the matter is moving. On the other hand, the relational theory of time says that time is a set of relationships among the events of physical material in space | that is to say, time ceases to exist if matter ceases to change.\Mspace

Views have continuously shifted through the ages: Many believe that Greek atomists such as Democritus thought time was substantival. The first written account dates back to Aristotle, who advocated for a relational theory: ``\emph{But neither does time exist without change;\,\dots}''~\cite{aristotle2006physics}. Newton and Leibniz had opposed views, with Newton on the substantive camp~\cite{sep-newton-stm}, and Leibniz on the relational camp~\cite{sep-leibniz-physics}. Ernst Mach attacked Newton's arguments 
in favour of a more relative theory~\cite{sep-ernst-mach}. Einstein credited Mach's views as being highly influential in his guiding principles when developing his theory of general relativity; although later shifted his stance to a more substantival interpretation of his theory~\cite{Hoefer1994EinsteinsSF}. 

Arguments for and against either theories are still ongoing~\cite{Maudlin2007,rovelli2018order,Smolin2013,barbour2001end}. One of the biggest problems in this debate is the apparent inability to distinguish experimentally between the two scenarios | indeed, it is widely accepted that any clock (or physical matter used as a rudimentary clock), would need to change its state in order to measure the passing of time, hence rendering a clock useless for detecting the substantival's hypothesised passage of time without change.\Mspace

Here we show that nonrelativistic quantum mechanics renders this widely held belief wrong. This is to say, we demonstrate by designing special clocks and adhering to well established interpretations of quantum theory, that it is possible to measure the passage of time between two events even when said clock has been always off, in other words, never evolving. Our result thus allows one to precisely detect the presence of the flow of time without evolution, which we argue provides strong and experimentally testable evidence for the substantival theory of time.

\section{Results}\label{sec:Clock model}
Overview: We will start by considering the simplest systems naturally occurring in nature which can serve as elementary clocks when dynamically evolving. We then show how they can be used to tell the time while being switched off via a more advanced protocol and measurement scheme. This is followed by an analysis of the interpretations used and a demonstration of how this strange phenomena cannot be explained if one assumes an underlying  ``real'' description of nature, using classical variables. Finally, in \cref{sec:discussion}, we explore the implications of our results in relation to the long standing historical debates on the nature of time.\Mspace

\noindent\textbf{Elementary timekeeping systems}:
While we tend to think of timekeeping devices as engineered systems, we can use the dynamics of naturally occurring processes as elementary clocks. The simplest of which can be thought of as an arbitrary state $\ket{\psi}$ at time $t=0$ whose evolution takes it through a sequence of mutually orthogonal states $\ket{\tau_0}\!,\, \ket{\tau_1}\!, \ldots,\, \ket{\tau_\NT}$ at times $\tau_0, \tau_1,\ldots, \tau_\NT$ consecutively. 
If we know that the system is initially in the state $\ket{\psi}$, and we let it evolve for an unknown time $t$ given the promise that $t$ belongs to the set $\{\tau_0, \tau_1, \tau_2,\ldots, \tau_\NT\}$, then we can determine precisely what the elapsed time is by measuring in the basis $\ket{\tau_0}\!,\,\ket{\tau_1}\!,\, \ket{\tau_2}\!, \ldots,\, \ket{\tau_\NT}$, since at said times, the measurement will deterministically allow one to distinguish between said states. To use this timekeeping system in practice, consider two external events, which we will call \emph{1st event} and \emph{2nd event}, with some unknown elapsed time $t\in\{\tau_0, \tau_1, \tau_2,\ldots, \tau_\NT\}$ between them. One initialises the system in the state $\ket{\psi}$ when the 1st event occurs and then waits until the 2nd event occurs, at which point the measurement would be performed; revealing the elapsed time. When a clock is revealed to have been dynamically evolving upon measurement as in this example, we say it is operating in \emph{standard fashion}.\Mspace

This is arguably one of the most elementary types of timekeeping devices one can conceive of. It has no inherently quantum features and should be quite ubiquitous in nature since the states of systems tend to become completely distinguishable if one waits long enough. As a simple illustration of how it could be used, imagine being on a boat at sea heading in to port. There is a lighthouse whose pulsing flashes of light reach you every 3 seconds, but thick fog rolls in obscuring the flashes for some time before clearing. 
By setting the times $\{\tau_n\}$ to coincide with the flashes and choosing the 1st and 2nd events to be flashes before and after the presence of the fog, the clock can deduce the duration of the foggy period.\Mspace

Suppose that, in addition, there is a state denoted $\ket{E}$ which is invariant under time evolution, in other words, it remains in the state $\ket{E}$ at all times. Furthermore, suppose it is orthogonal to $\ket{\psi}$ and its time evolved state at all times. We refer to this state as an \emph{off} state, in contrast to any state which evolves in time, which we refer to as an \emph{on} state. The state $\ket{E}$ could be, for example, an energy eigenstate of the system. If one starts in the state $\ket{E}$ and then measures the state later, clearly there is no possible measurement one can perform which would reveal any information about the elapsed time 
 | it would be analogous to taking the batteries out of a wall clock, and then trying to use it to tell the time.\Mspace
 
 \noindent\textbf{Telling the time when the clock is off:} We will now show, with the aid of some additional stationary ancilla qubits, how quantum mechanics allows one to determine precisely which time $\tau_n$ it is, even though the timekeeping device was never on (in the runs of the experiment in which we determined $\tau_n$). For the purpose of illustration, consider the most basic example of $\NT\!=\!1$, in which the system in question merely has two times $\tau_0, \tau_1$. We relegate the $\NT \geq 2$ version to \app~\ref{sec app: clocks in nature}. In this case we will only need one ancillary state, which we denote $\ket{A}$. It could be, for example, chosen to be another energy eigenstate which is  orthogonal to the other states, or if such a state is not available, we can use an ancilla qubit with a trivial Hamiltonian so as to ensure its states do not evolve. In the latter case, denoting a basis for the ancilla by $\ket{\uparrow}\!,\,\ket{\downarrow}$, the states would then be associated with $\ket{\psi(t)}\equiv \ket{\psi(t)}\ket{\uparrow}$,  $\ket{E}\equiv \ket{E}\ket{\uparrow}$, and the new ancilla state with $\ket{A}= \ket{E}\ket{\downarrow}$ (Here $\ket{\psi(t)}$ denotes $\ket{\psi}$ after evolving for a time $t$).\\


We will use the same retrodictive arguments as in other famous experiments, such as the Elitzur-Vaidman bomb tester or Hardy's paradox. However, let us first describe the protocol, before discussing the interpretation: Initially the system is set to the off state $\ket{E}$. Then, when the 1st event occurs, we apply a unitary $U_0$ such that the system is now in a superposition of on and off, namely, of $\ket{E}$ and $\ket{\psi}$. We use branching notation to indicate the orthogonal branches of the superposition associated with static and dynamical terms (Upper branch is static, lower branch is dynamic). The state reads:

\begin{tikzpicture}
\tikzstyle{level 1}=[level distance=1cm, sibling distance=1.5cm]
\tikzstyle{end} = [circle, minimum width=3pt,fill, inner sep=0pt]
\node (root) {$\ket{E}$}[grow'=right]
child {
	node [end, label=right:{$c\ket{E}$}] {}}
child {
	node [end, label=right:{$s\ket{\psi}$}] {}};
\end{tikzpicture}\label{eq:before time old}
\\where $c=\cos(\theta),$ $s=\sin(\theta)$. One then waits until the 2nd event occurs, at either time $\tau_0$ or $\tau_1$, at which point one applies a judiciously chosen unitary $U_\measure$ followed by immediately measuring in the $\ket{E}$, $\ket{\tau_0}$, $\ket{\tau_1}$, $\ket{A}$ basis, which we will call the measurement basis for short. This completes the protocol | modulo specification of the required constraints on $U_0$ and $U_\measure$. Diagrammatically, up to the point of measurement, this protocol and unitary $U_\measure$ have the form:
\begin{widetext}
\begin{align}
	\begin{tikzpicture}
	\tikzstyle{level 1}=[level distance=1cm, sibling distance=1.5cm]
	\tikzstyle{end} = [circle, minimum width=3pt,fill, inner sep=0pt]
	\node (root) {$\ket{E}$}[grow'=right]
	child {
		node [end, label=right:{$c\ket{E}$ \myarrow{\tau_0} $c\ket{E}\overset{U_\measure\,}{\xrightarrow{\hspace*{0.7cm}}}\,\, cA_1^0\ket{E}+cA_2^0\ket{A}+cA_3^0\ket{\tau_0}+cA_4^0\ket{\tau_1}$}] {}}
	child {	node [end, label=right:{$s\ket{\psi}$ \myarrow{\tau_0} $s\ket{\tau_0}\overset{U_\measure\,}{\xrightarrow{\hspace*{0.7cm}}}\,\, \qquad\quad\,\,-cA_2^0\ket{A}+sA_3^1\ket{\tau_0}+sA_4^1\ket{\tau_1}$} ] {}};
	\end{tikzpicture}\label{eq:before time}
\end{align}
at time $\tau_0$ and
\begin{align}
	\begin{tikzpicture}
	\tikzstyle{level 1}=[level distance=1cm, sibling distance=1.5cm]
	\tikzstyle{end} = [circle, minimum width=3pt,fill, inner sep=0pt]
	\node (root) {$\ket{E}$}[grow'=right]
	child {
		node [end, label=right:{$c\ket{E}$ \myarrow{\tau_1} $c\ket{E}\overset{U_\measure\,}{\xrightarrow{\hspace*{0.7cm}}}\,\,cA_1^0\ket{E}+cA_2^0\ket{A}+cA_3^0\ket{\tau_0}+cA_4^0\ket{\tau_1}$}] {}}
	child {	node [end, label=right:{$s\ket{\psi}$ \myarrow{\tau_1} $s\ket{\tau_1}\overset{U_\measure\,}{\xrightarrow{\hspace*{0.7cm}}}\,\, -cA_1^0\ket{E}\qquad\quad\,\,+s{A_3^1}'\ket{\tau_0}+s{A_4^1}'\ket{\tau_1}$} ] {}};
	\end{tikzpicture}\label{eq:after time}
\end{align}
\end{widetext}

\noindent at time $\tau_1$, where $|c A_1^0|>0$, $|c A_2^0|>0$ and other amplitudes are arbitrary. Squiggly arrows represent the passing of an amount of time $\tau_0$ or $\tau_1$, while straight arrows the application of $U_\measure$.\Mspace

Suppose the $E$ outcome is obtained.\footnote{We use the convention in which ``outcome $x$'' means that the post-measurement state is $\ket{x}$.} At time $\tau_0$, a non zero amplitude associated with $\ket{E}$ exists only for the off branch, while for time $\tau_1$, this amplitude from the off branch cancels with an on branch amplitude due to destructive interference. As such, when $E$ is obtained, we can deduce that the clock has collapsed to a branch of the wave function that was always off \emph{and} that the unknown time $t$ must be $\tau_0$. Similarly, when outcome $A$ is obtained, we deduce that the clock was collapsed to a branch of the wave function that was always off \emph{and} that the time $t$ must be $\tau_1$. However, if outcome $\tau_0$ or $\tau_1$ is obtained, we cannot conclude anything useful: the clock may have been on and we cannot deduce whether the time is $\tau_0$ or $\tau_1$.\Mspace

Hence whenever outcome $E$ or $A$ is obtained, we can deduce what the time is, even though the system was always off, that is to say, always in a stationary state (since $\ket{E}$ and $\ket{A}$ are orthogonal to all dynamical branches whenever $E$, $A$ have a non zero probability of being obtained).\Mspace

 \noindent\textbf{{\bf Interpretation:}} One may object that the system has certainly been in a superposition of on and off, and in this sense, has actually been evolving in time. However, the outcomes $E$ and $A$ can only arise via an always off branch of the wave function so if it is seen then we have been confined to a part of the total quantum state in which the system is always stationary. Furthermore, this interpretation is independent of the basis used to represent the state during the different stages of the protocol | we represented it in the measurement basis purely for convenience.\Mspace


Our explanation of the results has relied on a certain kind of retrodictive interpretation of quantum mechanics. This interpretation can be most readily captured by Schr\"odinger's proverbial cat experiment: one starts with an alive cat, denoted $\ket{\text{alive}}$, which when put in the closed box, takes on the form $\frac{1}{\sqrt{2}} \left( \ket{\text{alive}}+\ket{\text{dead}} \right)$. Suppose that upon opening the box, we make a measurement of alive versus dead, and obtain the outcome ``alive''. Then in the conventional collapse of the wave function formalism, the state of the cat changes discontinuously into $\ket{\text{alive}}$ and the $\ket{\text{dead}}$ component ceases to have any further physical existence or further consequence | see \cref{fig:clock and cat}. This is a form of retrodiction, since it says that upon observing an alive cat we collapse to a state in which the cat was \emph{always} alive, since the dead and alive branches were orthogonal at all times. In our setup, this retrodiction takes on the form of a so-called \emph{\ifree{} measurement},\footnote{Also known as ``interaction-free measurement'' or ``quantum interrogation''.} since when outcome $E$ or $A$ is obtained, it allows one to deduce properties of a dynamical clock by reasoning counterfactually, while collapsing the wave function to a state where those eventualities never actually took place.  \Ifree{} measurements were first discovered by Elitzur and Vaidman in their famous bomb tester experiment~\cite{Elitzur1993,Vaidman2003,penrose1994shadows}, and later applied to other scenarios including counterfactual computation~\cite{Jozsa1999,Mitchison2001,2010.00623,PhysRevLett.74.4763}, and experimentally verified~\cite{Weinfurter95,PhysRevLett.83.4725,White:98}. Prior examples have played out in space, while ours is the first protocol to do so in time. Another example of retrodiction and \ifree{} measurement in quantum mechanics is the celebrated Hardy paradox~\cite{hardy92}. Weakly measuring the distinct branches can provide theoretical and experimental evidence for the validity of such retrodictive interpretations~\cite{Aharonov2002,PhysRevLett.102.020404,Yokota_2009}.\Mspace

We will call our clock when operated in this retrodictive fashion a \emph{counterfactual clock}. Likewise, the measurement outcomes of interest in a \ifree{} measurement ($E$ and $A$), are referred to as \ifreeoutcomes. If the above protocol is implemented and a \ifreeoutcome{} is obtained, we say that the clock was \emph{always off}. In~\cref{fig:clock and bomb} we compare one instance of the Elitzur and Vaidman bomb test with one instance of the counterfactual clock.\Mspace

Our analysis thus far has been presented \emph{forward in time}, namely starting from the initial state $\ket{E}$ and analysing the different steps of the protocol until post-selecting on a \ifreeoutcome. We can also examine the \emph{backwards in time} process where we start with one of the post-selected states, and examine how it evolves backwards in time to the initial state $\ket{E}$. Whenever analysing paradoxes involving post-selection, it is prudent to check that the paradox still exits when the problem is analysed backwards in time, since this can reveal any ``hidden'' assumptions inadvertently made due to post-selecting. It is important to do so in the case of counterfactual experiments, as shown in \cite{Vaidman_PostSelection07}. We prove in \meth~\cref{sec:backwards in time analysis} that when the \ifreeoutcomes~are analysed backwards in time, the pre-selected wave function is collapsed onto an always off branch. This guarantees that our \ifree{} measurements have been properly implemented.

Of course, one can apply alternative interpretations of quantum mechanics. In the many-worlds interpretation of Schr\"odinger's cat, upon measurement, reality splits into two parallel worlds | one in which the component $\ket{\text{alive}}$ prevails and the cat is alive, the other, in which the component $\ket{\text{dead}}$ prevails and the cat is dead. Conversely, the explanation of the counterfactual clock also changes if one uses this latter interpretation. Specifically, in the language of many-worlds, whenever the outcome $E$ or $A$ is obtained, we will be living in a world in which the system never dynamically evolved (i.e. never in an on state) yet in this world we learn information about the dynamical states (i.e. the on branch), namely what the time is. The fact that the system was evolving in another world is of no consequence to \emph{us}.\Mspace
\captionsetup[figure]{justification=justified}
\captionsetup[subfigure]{labelfont=bf,justification=raggedright}
\renewcommand{\thesubfigure}{\Alph{subfigure}}
\begin{figure}[h!]
		\includegraphics[width=1\linewidth]{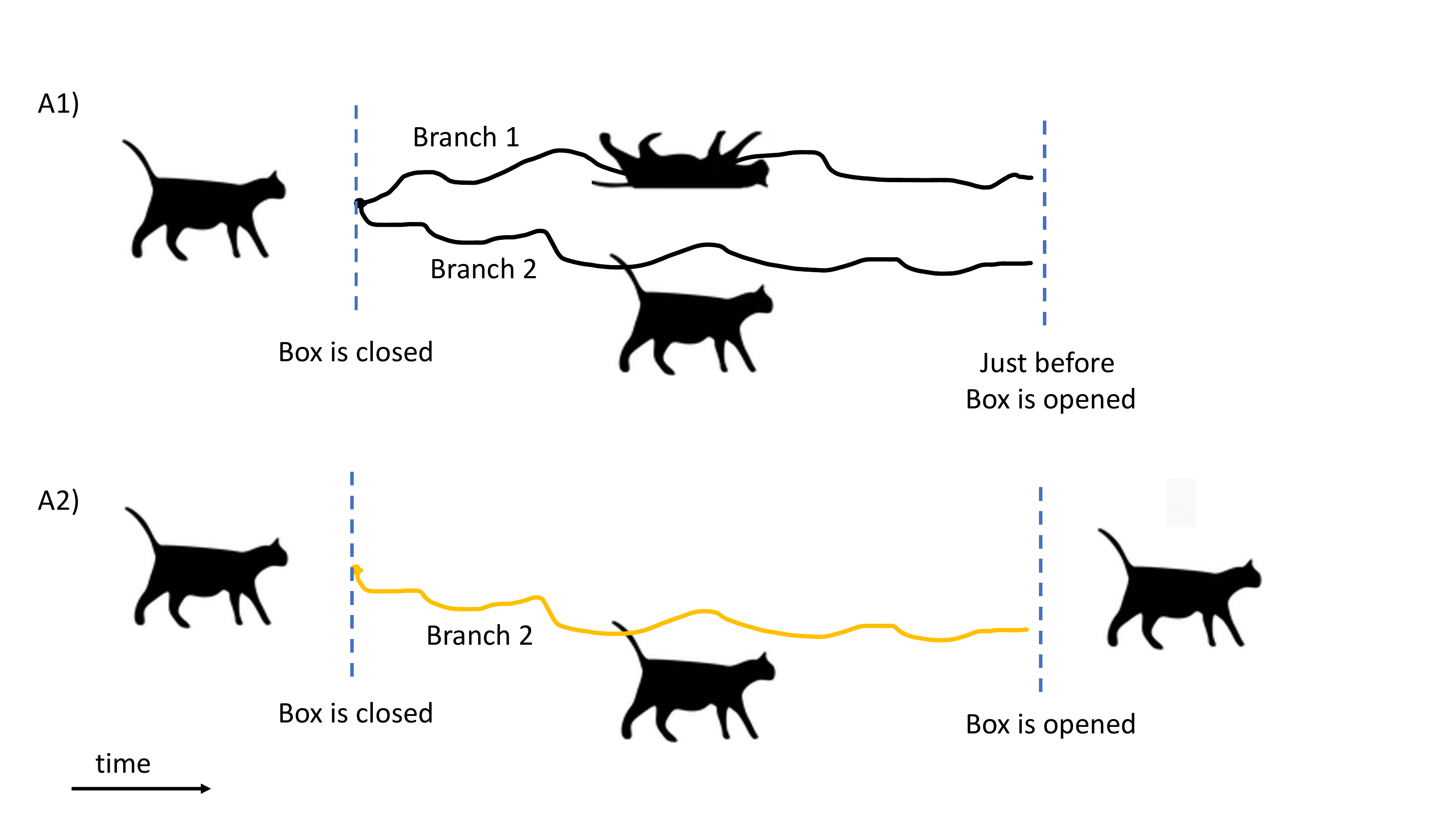}
		\caption{\small{{\bf Schr\"odinger's cat.}
	{\bf A1)} Schr\"odiger's cat starts out alive. At the 1st dotted line, the box is closed and the cat is in a superposition of dead and alive until just before the box is opened (2nd dotted line). {\bf A2)} The box has now been opened, causing the cat to be measured. Here we show the case in which we see an alive cat. The history of the cat from before the box was closed to present is now set in stone, with the dead cat branch of the wave function ceasing to have ever existed.
	}}\label{fig:clock and cat}
\end{figure}

\begin{figure}[htb!]
	
	\begin{subfigure}[b]{0.517\textwidth}
		\includegraphics[width=1\linewidth]{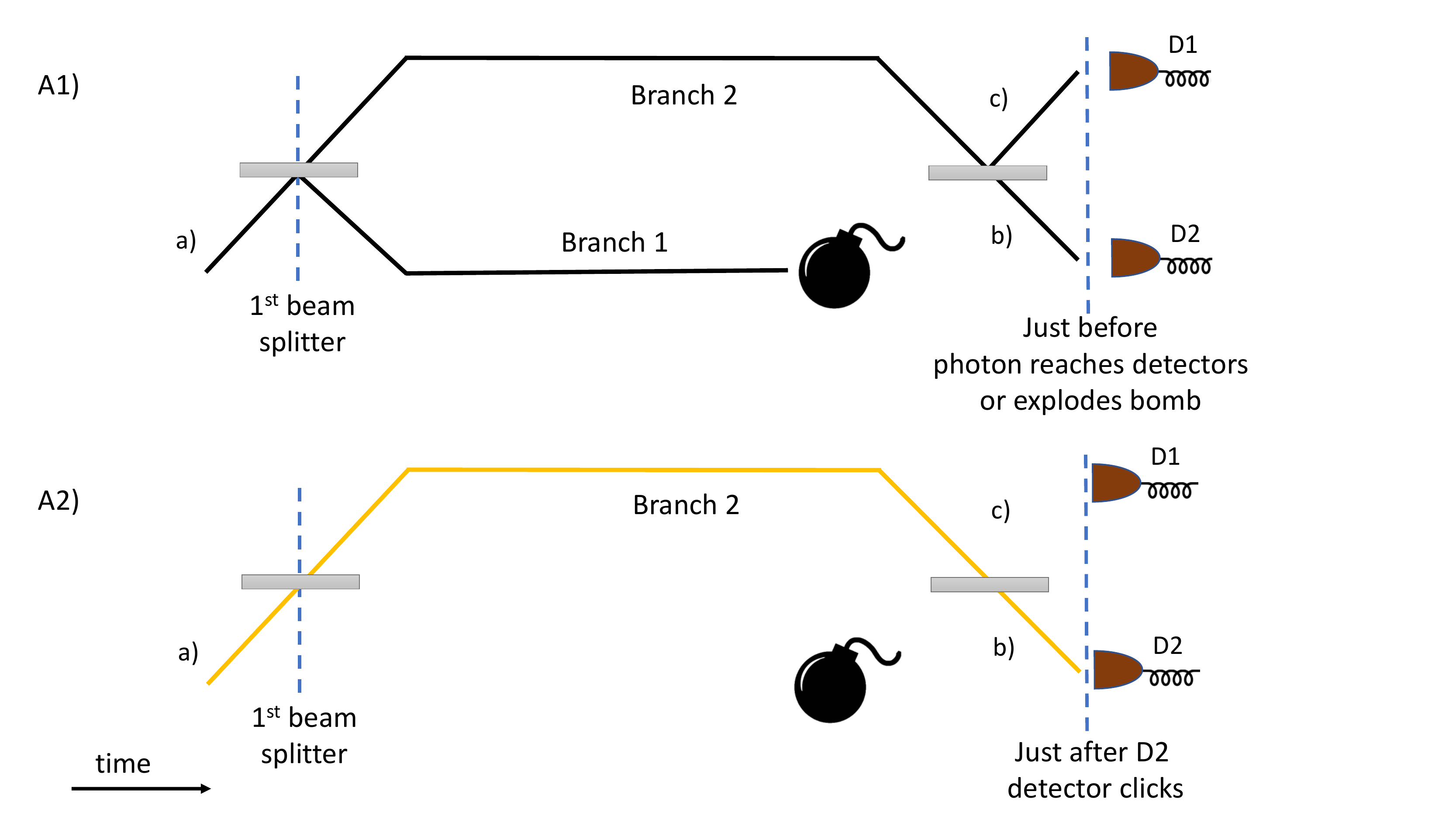}
		\caption{Elizur-Vaidman bomb test} 	
	\end{subfigure}
	
	\begin{subfigure}[b]{0.516\textwidth}
		\includegraphics[width=1\linewidth]{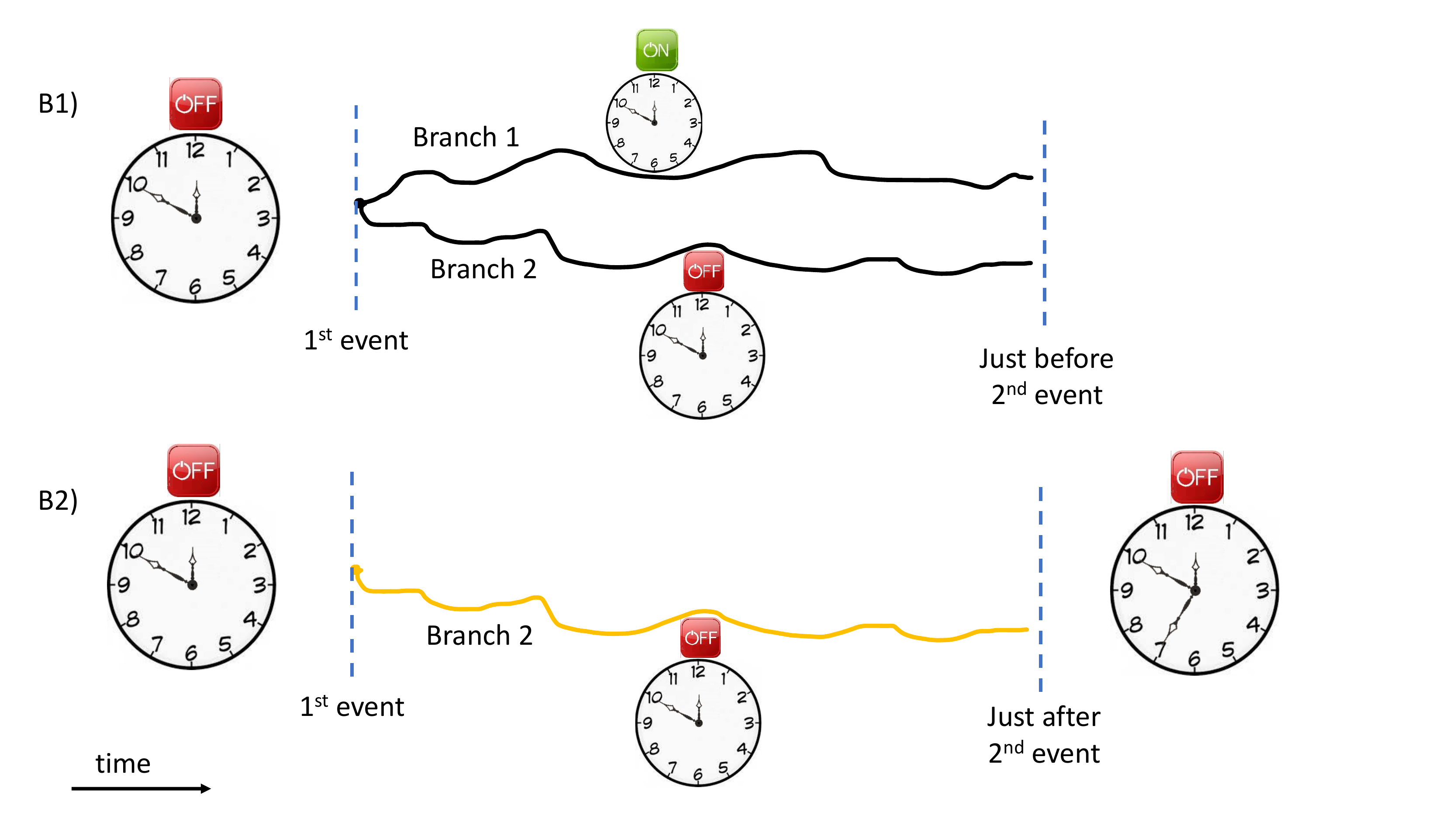}
		\caption{The counterfactual clock} 	
	\end{subfigure} 
	\caption[justification=raggedright]{{\bf Comparison between the Elizur-Vaidman bomb test and the counterfactual clock}\\
\small{		{\bf A1)} 
	Mach-Zehnder interferometer with 50/50 beam splitters and a photosensitive bomb placed in the bottom arm. Photon trajectories are black lines. A photon enters at a) and splits into a superposition travelling down the upper and lower arms upon interacting with the 1st beam splitter. The photon in the bottom arm is coherently absorbed by the bomb, hence exploding. The photon in the~upper~arm~encounters~the~2nd~beam~splitter~and~coherently~splits~into~trajectories~b),~c)~without~bomb~exploding. 
	
	{\bf A2)}	 Continuing on from A1), a measurement is now performed via the photodetectors D1 and D2. There are 3 possible outcomes: If photon took path c), D1 clicks. If photon took path b), D2 clicks. If bomb exploded, photon is lost and neither photodetector clicks. Fig A2 depicts the scenario that D2 clicks (the \ifreeoutcome). Upon clicking, the wave function collapses into an eventuality where the bomb never exploded.
	{\bf B1)} The counterfactual clock starts out off. At the 1st dotted line, the 1st event occurs and the clock is put into a superposition of on and off until just before the 2nd event occurs (2nd dotted line). {\bf B2)} The 2nd event now occurs, causing the clock to be measured. Here we show the case in which we see a \ifreeoutcome. The history of the clock from before the 1st event occurred to present is not set in stone, with the on clock branch of the wave function ceasing to have ever existed.
	{\bf Comparison between A and B}: Both use a \ifree{} measurement to avoid the eventualities of branch 1 of the wave function occurring when obtaining certain outcomes.} 
		
	}\label{fig:clock and bomb}
\end{figure}

The role of the ancilla is more subtle than may appear at first sight: indeed, one might envisage a situation with no ancilla and where we interpret outcomes $E$ and $\tau_1$ as \ifreeoutcomes{} by choosing the unitary $U_\measure$ differently to as presented.
Notwithstanding, we prove in \app~\ref{sec:Optimality and achievability proofs}, that there does not exist a counterfactual clock for which there is no ancilla and the probabilities of the two \ifreeoutcomes{} are both non zero. This implies a more fundamental role for the ancilla state: by including it, we can reach a larger class of unitary transformations which are necessary for the counterfactual clock to function. This is important for understanding what is the minimal model for making the appropriate comparison with classical descriptions, which we will do shortly.\Mspace 

In this set-up, the optimal protocol is the one which maximises the probabilities of obtaining either measurement outcomes $E$ or $A$ over unitaries $U_\measure$ and branching amplitudes $c,s$. In the case in which both outcomes are equally likely, $|c A_1^0|^2= |c A_2^0|^2$; we prove in \app~\ref{sec:Optimality and achievability proofs} that there exists a protocol which assigns probability $1/6$ to obtaining the outcome $E$ at time $\tau_0$ and probability $1/6$ to outcome $A$ at time $\tau_1$. For later reference, we will refer to the sum of the probabilities corresponding to the \ifreeoutcomes, the \emph{total \ifree{} probability}. In this case, it is~$1/3$.\Mspace

One may wonder whether a way out of this conundrum is via the existence of an underlying description respecting realism, that is, a theory where the relevant physical properties have well defined values at all times prior to the measurement in our protocol. For example, such a theory could have hidden variables which evolve dynamically when the clock is supposedly off. We prove that there cannot exist a non-contextual ontic model for the relevant degrees of freedom for the above counterfactual clock with probability $1/6$ for each \ifreeoutcome. This rules out all classical models under natural assumptions; see \meth~\ref{sec:non contextual proof} for details.\Mspace

\noindent{\bf Engineered clocks}: in this section we will discuss a more advanced type of engineered clock. This is more of an aside, and one can skip this section heading straight to the discussion in \cref{sec:discussion}, if only interested in interpretation and consequences for the nature of time.\Mspace

While the protocol thus far discussed allows one to use the most elementary of systems as a timekeeping device, it is limited in that it can only distinguish between times $\tau_0,$ $\tau_1, \ldots, \tau_\NT$ given the guarantee that the elapsed time between the two events coincides with one of these times. This limitation is due to an elementary clock design | indeed, this feature was present even when the clock was run in standard fashion. This limitation means that while it can be used for some applications (recall, e.g., the lighthouse example), it cannot be used for others, such as determining whether the winner of a race set a new record, because there in no reason to believe that the interval between starting the race and finishing it, will belong to any prior chosen set $\{\tau_0,$ $\tau_1, \ldots, \tau_\NT\}$. To overcome these restrictions, we demonstrate the existence of a counterfactual clock which has \ifreeoutcomes{} with a protocol identical to the one described prior, modulo a few distinctions: 1) The elapsed time between the two events can be arbitrary. 2) The unitaries $U_0$ and $U_\measure$ are chosen differently. 3) The projective measurement is in a different basis.\Mspace

The physically significant difference in the protocol is clearly 1). Furthermore, it enjoys the analogous physical interpretation when the \ifreeoutcomes{} are obtained to that described in the previous case above. Moreover, the Hamiltonian we use in the construction is time independent. This is important since otherwise the protocol would likely require an external timekeeping device for its implementation | rendering the entire experiment pointless. Another physically relevant feature of the engineered clock, is that when the clock is on, it cannot change instantaneously between distinct states representing the distinct ticks of the clock. This introduces a small error | regardless of whether it is run counterfactually or in standard fashion. The error is analogous to the situation we face with a wall clock: the second hand (i.e. the hand the records the passing of seconds) cannot instantaneously transition between one clock face marking and the next; therefore, if you happen to read the clock around the milliseconds interval in which the second hand is transitioning, you will likely be off by a second.\Mspace

In particular, when the clock is run counterfactually, and one of the \ifreeoutcomes{} is obtained, this error means that there is a small probability of a false positive: the quantum destructive interference is not perfect, leading to the possibility that the clock was actually on after all. Luckily, this error can be made arbitrarily small  | although at the expense of a decrease in the probability of obtaining a \ifreeoutcome. Notwithstanding, reasonable probabilities can be obtained. For example, in the case of one tick, and a total \ifree{} probability of $1/12$, the probability that the clock was on when one of the \ifreeoutcomes{} is obtained is of order $10^{-14}$. See \Meth~\ref{sec:Engineered clocks} for details.\Mspace

Physically speaking, this tiny error is due to imperfect destructive interference between on and off breaches. Such events are key to all quantum experiments using \ifree{} measurements and due to engineering imperfections, will always be present. For example, in the quantum counterfactual experiments involving photons~\cite{Weinfurter95,White:98,Hosten2006,PhysRevLett.83.4725,PhysRevA.59.2322}, small deviations in the reflectivity of the beam splitters lead to ports with photodetectors which were not completely dark.

\section{Discussion}\label{sec:discussion}

With the aid of \ifree{} measurements, we have shown that quantum mechanics allows one to predict the time passed between two events, even though no dynamics has taken place between the two events. The most elementary form of this clock presented in the main text should be obtainable from ubiquitous systems in nature (a four dimensional system with two stationary and two non-stationary states), while the more advanced clocks providing a larger number of distinguishable times (large $\NT$) require more ancillas and should be realisable with state of the art current technologies such as those developed for quantum computation.\Mspace

While it is readily clear that a classical clock cannot tell the time without dynamical evolution, it could have been that our quantum clock had hidden variables which were changing when the clock was supposedly off | we ruled out such a possibility under reasonable assumptions.\Mspace

We will now discuss the implications for the relational and substantival theories of time which were presented in the introduction. As explained previously, one of the reasons why the debate has been ongoing for more that two thousand years, is due to the lack of the possibility to experimentally distinguish between the two theories due to an apparent inability to measure the flow of time without evolution. Our results can remedy this dilemma by the following  experiment:\Mspace

Alice has a random number generator which outputs heads or tails. If she finds heads (tails), she sends two light pulses to Bob separated by an elapsed time $\tau_0$ ($\tau_1$). Bob possesses a counterfactual clock in the energy eigenstate $\ket{E}$, which can distinguish between two elapsed times $\tau_0$ and $\tau_1$, but does not know the outcome of the random number generator. Alice tunes her two light pulses so that when they come into contact with Bob's clock, the 1st pulse implements $U_0$ while the 2nd implements $U_\measure$. Bob measures his clock in the basis $\ket{E}$, $\ket{\tau_0}$, $\ket{\tau_1}$, $\ket{A}$  upon seeing the second light pulse. Alice repeats this experiment until Bob reports to her that he found a \ifreeoutcome. There is a second system near Bob which remains in an energy eigenstate during the entire experiment and does not interact with the light pulses; bob will use it for reference purposes only. Consider the last run of the experiment: the clock did not change between the two light pulses (relative to the second system) but was nevertheless responsible for the detection and quantification of an elapsed time which is identical to what a clock which was dynamically evolving would have predicted (had one been present). We argue that in a relative theory of time, from the perspective of the clock relative to the second system, any value for the elapsed time would have to be arbitrary, since time would not exist from it's perspective. Contrarily, the experiment is perfectly consistent with a substantival theory of time, since in this case, from the clock's perspective, there is a well-defined elapsed time between the events, as time passes independently of the clock's state of motion.\Mspace
\newpage
We can formalise this observation via the following assumptions and theorem: 
\begin{itemize}
	\item [(A)] \Ifree{} measurements exist. \vspace{-0.3cm}
	\item [(B)] If time can be measured via a clock which is always off, then time is substantival.
\end{itemize}

\begin{restatable}{theorem}{thrmSubstantival}\label{thm:time is substantival}
Time is substantival if assumptions (A) and (B) hold.
\end{restatable}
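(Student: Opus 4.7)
The proof plan is essentially a direct syllogism that chains the paper's constructive result with the two labelled assumptions, so I would keep it short and strictly logical rather than attempt any new derivation. First I would recall that the body of the paper (in the ``Telling the time when the clock is off'' section, together with the analysis in Appendix~\ref{sec:Optimality and achievability proofs}) has already exhibited an explicit protocol in which, conditional on a counterfactual outcome $E$ or $A$, the observer learns which of $\tau_0,\tau_1$ is the elapsed time while the clock's state sits entirely in a branch orthogonal to every dynamical state $\ket{\psi(t)}$. Thus I would state the intermediate claim: a counterfactual clock, as defined in the paper, constitutes an operational procedure for measuring elapsed time using a system that is always off, provided that counterfactual measurements are admitted as a legitimate tool.

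Second, I would invoke assumption (A) to legitimise the use of counterfactual measurements in the protocol. This is the only place where (A) enters: without (A), one would have to worry that the retrodictive interpretation of the outcomes $E, A$ is unjustified and that the clock might really have been dynamically evolving in the runs we care about. With (A), the paper's construction yields, unconditionally, a physical procedure for measuring the passage of time with a clock that is always off.

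Third, I would apply (B) as a direct implication: the antecedent ``time can be measured via a clock which is always off'' has just been established, so its consequent ``time is substantival'' follows by modus ponens. This completes the proof. I would close with a brief remark that all of the physical content sits in the constructive protocol earlier in the paper and in assumption (A); the step from (A) to the antecedent of (B) is exactly what the counterfactual clock delivers.

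The main ``obstacle'' here is not technical at all but interpretational, and I would flag this explicitly rather than try to hide it: the nontrivial work is packaged into assumption (B), and a reader sceptical of (B) (for instance, one who argues that the ancilla or the unitary pulses $U_0, U_\measure$ themselves count as ``matter that changes'' and so might still admit a relational reading) will reject the conclusion. Since the theorem statement simply asserts the conditional, however, no further argument is required inside the proof; any discussion of how robust (B) is to such objections belongs in the surrounding text (for example, the experiment with Alice, Bob, and the reference energy eigenstate already provided), not in the formal proof itself.
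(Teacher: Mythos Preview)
Your proposal is correct and follows essentially the same approach as the paper's own proof: invoke the explicit $\NT=1$ counterfactual clock construction together with assumption (A) to establish the antecedent of (B), then apply (B) by modus ponens. The paper's proof is even terser than yours and omits the interpretational caveats, but the logical structure is identical.
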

\!\!We prove the theorem in \meth{}~\ref{Proof of thm:time is substantiva}. If (A) is false, then the other phenomena which uses \ifree{} measurements, such as counterfactual computing or the bomb test, would require serious re-evaluation. Similarly, if (A) holds yet (B) is false, then arguably, the very meaning of a substantival theory of time requires substantial review.\Mspace

Finally, we conclude with a discussion how one might circumvent or modify our conclusions. Our results relied on the validity of the \ifree{} measurement interpretation in quantum mechanics, and we discussed how in the alternative interpretation of many-worlds, when one of the \ifreeoutcomes{} is obtained, while \emph{we} would be in a world where the clock was always off, there would be another world where it would be on. 
Regarding the theory of time debate, this interpretation would still lead to difficulties for a relatival theory of time, since in our thought experiment the second system remains in Bob's world, so when he obtains the counterfactual outcome, there would have been no relative change between the clock and the second system in Bob's world, yet Bob would know how much time had passed in his world from the measurement outcome. 
 Another well-known alternative interpretation of quantum theory is Bohmian mechanics. Here there is always a real state associated with the wave function even when not observed. It is a contextual interpretation of quantum mechanics, and thus is not ruled out via our no-go results regarding a classical interpretation.

As alluded to in the introduction, our beliefs about the nature of time have been highly influential in the development of our physical theories. Going forward, perhaps one of the key ingredients to finding a correct theory of quantum gravity is contingent of asserting an appropriate belief about the nature of time. We hope that our work will bring much needed attention to this debate. 

\onecolumngrid
\acknowledgements

We are grateful to Caslav Brukner, Rob Spekkens and Victor Gitton for discussions.
M.W. acknowledges funding from the Swiss National Science Foundation (AMBIZIONE Fellowship, No. PZ00P2 179914) in addition to supported from the Swiss National Science Foundation via the National Centre for Competence in Research “QSIT”. S.S. acknowledges support from the QuantERA ERA-NET Cofund in Quantum
Technologies implemented within the European Union's
Horizon 2020 Programme (QuantAlgo project), and administered through EPSRC Grant No. EP/R043957/1,
and support from the Royal Society University Research Fellowship scheme.
M.W. and S.S. both acknowledge funding from the Royal Society International Exchanges program (grant No. IES\textbackslash R1\textbackslash1801060).
\Mspace
\begin{center}
	\textbf{{\small AUTHOR CONTRIBUTIONS}}
\end{center}
M.W. derived the protocols and wrote the paper. S.S. implemented the numerical proof for the non existence of a noncontextual ontic model. Both authors contributed to ideas and discussions.

\bibliography{References}
\bibliographystyle{apsrev4-1}

\section{\Meth}

Here we provide more details concerning the claims made in the main text. There are three subsections. The first, \cref{sec:backwards in time analysis}, describes the two state formalism analysis in which the states evolve both forwards and backwards in time. The second, \cref{sec:non contextual proof}, describes the setup and how we performed our calculations which led us to conclude that our counterfactual clock has no noncontextual ontic description. The third, \cref{sec:Engineered clocks}, provides more details about the engineered clocks. The fourth, \cref{Proof of thm:time is substantiva}, contains a proof of \cref{thm:time is substantival}.

\subsection{Backwards in time analysis}\label{sec:backwards in time analysis}
Here we analyse our protocol backwards in time for the post-selected outcomes of interest. The objective is to show that the clock was off when viewed from this perspective. For simplicity, we analyse the case $\NT=1$ presented in the main text where the probability of obtaining either \ifreeoutcome{} is $1/6$. The general case is relegated to \app~\ref{sec:Backwards in time analysis in the multiple tick scenario}. The physical interpretation is the same in all cases.

The amplitudes associated with obtaining the \ifreeoutcomes{} are given by
\begin{align}
\braket{E| U_\measure U(\tau_0) U_0 |E}\qquad \text{ and } \qquad \braket{A| U_\measure U(\tau_1) U_0 |E},
\end{align}
for times $t=\tau_0$ and $t=\tau_1$ respectively, where $U(\tau_j)$ is the unitary dynamics of the clock (when it is on) for a time $\tau_j$. In the main text we interpreted these processes going forwards in time, namely starting in state $\ket{E}$ and analysing the state after applying $U_0$, $U(\tau_j)$, $U_\measure$ and finally post-selecting on measurement outcomes $\ket{E}$ and $\ket{A}$. Here we will analyse the reverse process: starting with the post-selected state $\bra{E}$ or $\bra{A}$, evolving backwards in time by analysing the affect of the sequence of unitaries $U_\measure$,  $U(\tau_j)$, $U_0$, followed by pre-selecting onto the initial state $\ket{E}$. This process is known as the two-state vector formalism. This formalism, and many other variants on this idea, permit time-symmetric interpretations of quantum mechanics (see review~\cite{TwoStateFormalisumAharonov2008} for more details).

For the example under consideration, $U_\measure$ is provided at the end of \app~\ref{sec:Measurement basis characterisation}. Using it we find that the backwards evolved post-selected \ifreeoutcome{} state just prior to dynamics take on the form:
\begin{align}\label{eq:bawards 1}
\bra{E} U_\measure = \frac{1}{\sqrt{3}} \big( \bra{E}-\bra{\tau_1} - \bra{A}  \big)  \qquad \text{ and } \qquad \bra{A} U_\measure = \frac{1}{\sqrt{3}} \big( \bra{E}-\bra{\tau_0} - \bra{A}  \big).
\end{align}
We now have to evolve backwards in time for an amount $\tau_0$. Recall that $U(\tau_0)$ induces the following transitions: $U(\tau_0)\ket{E}=\ket{E}$, $U(\tau_0)\ket{A}=\ket{A}$, $U(\tau_0)\ket{\psi}=\ket{\tau_0}$. Consider the ansatz $\bra{\tau_1}U(\tau_0)= c_0 \bra{\psi_\perp}+c_2\bra{E}+c_3\bra{\psi}$, where $\ket{\psi_\perp}$ is some state which is perpendicular to $\ket{E}$ and $\ket{\psi}$. We have $0=\braket{\tau_1|E}=\braket{\tau_1|U(\tau_0) U(\tau_0)^\dag | E}= c_2$. Similarly we find that $c_3=0$ and thus that $\bra{\tau_1}U(\tau_0)= \bra{\psi_\perp}$. Furthermore, since $U(\tau_1)\ket{E}=\ket{E}$, $U(\tau_1)\ket{A}=\ket{A}$,  $U(\tau_1)\ket{\psi}=\ket{\tau_1}$, it follows in an analogous fashion that $\bra{\tau_0}U(\tau_1)= \bra{\psi_\perp'}$, where $\bra{\psi_\perp'}$ is some state perpendicular to $\ket{E}$ and $\ket{\psi}$. Therefore, from \cref{eq:bawards 1},  the backwards in time dynamics yields the states 
\begin{align}\label{eq:bawards 2}
\bra{E} U_\measure U(\tau_0) = \frac{1}{\sqrt{3}} \big( \bra{E}-\bra{\psi_\perp} - \bra{A}  \big)  \qquad \text{ and } \qquad \bra{A} U_\measure U(\tau_1) = \frac{1}{\sqrt{3}} \big( \bra{E}-\bra{\psi_\perp'} - \bra{A}  \big).
\end{align}
The unitary  $U_0$ only acts non trivially in the subspace spanned by $\ket{E}$ and $\ket{\psi}$:  $U_0\ket{E}=c\ket{E}+s\ket{\psi}$, $U_0\ket{\psi}=-s\ket{E}+c\ket{\psi}$, and thus it acts trivially on the dynamical branches $\bra{\psi_\perp}$ and $\bra{\psi_\perp'}$. We thus have
\begin{align}\label{eq:bawards 3}
\bra{E} U_\measure U(\tau_0) U_0 = \frac{1}{\sqrt{3}} \big( c\bra{E}-s\bra{\psi}-\bra{\psi_\perp} - \bra{A}  \big)  \quad \text{ and } \quad \bra{A} U_\measure U(\tau_1)U_0 = \frac{1}{\sqrt{3}} \big( c\bra{E}-s\bra{\psi}-\bra{\psi_\perp'} - \bra{A}  \big).
\end{align}
Finally, by post-selecting on $\ket{E}$, and recalling that $c=1/\sqrt{2}$, we find that the amplitudes associated with the \ifreeoutcomes{} is $1/\sqrt{6}$ in both cases, hence leading to a probability of $1/6$ for each \ifreeoutcome{} as stated in the main text. Importantly, during the entire backwards evolution, we see that the dynamical branches of the wave function has not contributed to the pre-selected state $\ket{E}$.  Another way to see this is to note that had we measured whether the clock was on or off during the evolution in either of the above pre and post selected cases, we would have found that the probability of finding the clock on was zero. Namely, we can define the binary outcome projective measurement $P_{\on}:=\proj{\psi}+\proj{\tau_0}+\proj{\tau_1}$ and $P_{{\scriptscriptstyle \backslash}\on}:=\proj{E}+\proj{A}$, where $P_{\on}$ corresponds to a projection onto the on subspace. From~\cite{TwoStateFormalisumAharonov2008}, it follows that the probability of the clock being on at any time during the dynamical phase, that is any $t\in(0,\tau_0)$, for the run of the experiment corresponding to post-selecting on $\ket{E}$ is
\begin{align}
\textup{Prob}\big(\on, E\big)&:= \frac{\big{|}\braket{E|U_\measure U(\tau_0-t) P_\on U(t) U_0 |E}\big{|}^2 }{\big{|}\braket{E|U_\measure U(\tau_0-t) P_\on U(t) U_0 |E}\big{|}^2+ \big{|}\braket{E|U_\measure U(\tau_0-t) P_{{\scriptscriptstyle \backslash}\on} U(t) U_0 |E}\big{|}^2}.
\end{align}
Note that since the ancilla and energy eigenstate are stationary, we have that $U(t)=U_{\perp E, A}(t) + \id_{E,A}$, where $\id_{E,A}$ is the identity operator on the subspace spanned by $\ket{E}$ and $\ket{A}$, and that $U_{\perp, A}(t)$ is orthogonal to said subspace. Therefore, if we assume that the dynamics is Markovian, namely $U(t_1)U(t_2)=U(t_1+t_2)$ for all $t_1, t_2\in\rr$, it follows that $U(\tau_0-t) P_\on U(t)=U_{\perp E, A}(\tau_0)$ and hence that 
\begin{align}
\textup{Prob}\big(\on, E\big)=0
\end{align}
for all $t\in(0,\tau_0)$.
Similarity, it follows that the probability of the clock being on at any time during the dynamical phase, that is any $t\in(0,\tau_1)$, for the run of the experiment corresponding to post-selecting on $\ket{A}$ is
\begin{align}
\textup{Prob}\big(\on, A\big)&:= \frac{\big{|}\braket{A|U_\measure U(\tau_1-t) P_\on U(t) U_0 |E}\big{|}^2 }{\big{|}\braket{A|U_\measure U(\tau_1-t) P_\on U(t) U_0 |E}\big{|}^2+ \big{|}\braket{A|U_\measure U(\tau_1-t) P_{{\scriptscriptstyle \backslash}\on} U(t) U_0 |E}\big{|}^2}=0.
\end{align}

\subsection{No classical analogue}\label{sec:non contextual proof}

The interpretation of measuring time when the clock was off using  counterfactual reasoning, relied on a basic concept in quantum mechanics, namely the superposition principle, in which if a system is in a superposition of two states, it cannot be regarded as being in either until measured. As we discussed, Schr\"odinger famously popularised this point with his thought experiment concerning a cat. However, what if underlying our counterfactual clock protocol, there was a ``real'' description; that is to say if one could associate the states in our clock protocols with probability distributions $\lambda$ over some underlying states | a so-called \emph{ontic state apace} $\Lambda$ | and the measurements with update rules for said distributions, in a meaningful way? While such a result would not invalidate the current interpretation provided, it would, at least in-principle, provide for an alternative interpretation in which the clock might have been dynamically evolving even when a measurement collapsing the clock to the off branch is obtained. We will now consider such a possibility for the simplest elementary timekeeping systems. 

To start with, lets consider the case where we only run the clock when it is on; that is, the clock is never in a superposition of on and off. Then the question is, can we find an ontic state space on which there exists appropriate probability distributions that represent our states $\{ \proj{E}, \proj{\tau_0}, \proj{\tau_1}  \}$ and measurements (with PVM elements $\{ \proj{E}$, $ \proj{\tau_0}$, $\proj{\tau_1}$, $\proj{A}\}$). Moreover, if an underlying ontic state space does exist, one should expect to be able to take probabilistic mixtures of the states and measurements in it: if your lab assistant walked in to your lab at some unknown stage of the protocol's implementation, the assistant would attribute such a description. Thus denoting the convex hull by $\textup{conv}$, we required that when a state $\rho\in \sets_\on= \textup{conv}\big(  \{ \proj{E}, \proj{\tau_0}, \proj{\tau_1}  \} \big)$ is prepared in our protocol, the probability that the ontic state is in state $\lambda$, is $P\big[\lambda\big|\rho\big]$. Likewise, when we make the canonical measurement with corresponding POVM element $\effect\in \sete_\on= \textup{conv}\big(\{ \proj{E}, \proj{\tau_0}, \proj{\tau_1}, \proj{A}  \} \big)$, then the probability that the outcome associated with $\effect$ occurred, given that the ontic state is in state $\lambda$, is $P\big[\effect\big|\lambda\big]$.
Furthermore, we require the usual convexity relationships of mixtures of quantum states and measurements hold at the ontic level: for all $\lambda\in\Lambda$, for all $p\in[0,1]$, for all $\rho_1,\rho_2\in \sets_\on$: $P\big[\lambda \big|p \rho_1+(1-p) \rho_2\big]= p P\big[\lambda\big| \rho_1\big]+ (1-p) P\big[\lambda\big|\rho_2\big]$ and similarly, that for all $\lambda\in\Lambda$, for all $p\in[0,1]$, for all $\effect_1,\effect_2\in \sete_\on$: $P\big[p \effect_1+(1-p) \effect_2\big|\lambda\big]= p P\big[ \effect_1\big|\lambda\big]+ (1-p) P\big[\effect_2\big|\lambda\big]$.  Finally, we require that the ontic states can reproduce the measurement statistics of our protocol: for all $\rho\in\sets_\on$, for all $\effect\in\sete_\on$,
\begin{align}
\tr\big[\rho\, \effect\big] =\int_\Lambda\!\! d\lambda\, P\big[ \lambda \big| \rho  \big] P\big[\effect \big| \lambda \big], 
\end{align}
where $\tr[\cdot]$ denotes the trace, and the r.h.s. is the probability of obtaining outcome associated with $\effect$ for quantum state $\rho$, according to quantum mechanics.\Mspace

Notice how we have assumed that the probability distributions $P\big[\lambda\big|\rho\big]$ do not depend on how the quantum state $\rho$ was prepared nor do $P\big[\effect\big|\lambda\big]$ depend on how the measurements were implemented. Different preparation procedures which lead to the same quantum state and different measurement procedures which lead to the same POVM elements, are called different \emph{contexts}. We are thus looking for a non-contextual ontic variable model. This notion of contextuality was pioneered by Spekkens \cite{Spekkens05}. Prior notions of classicality based on local realism~\cite{Bell} or its generalisation to non-contextual realism \cite{KS}, cannot be tested in our case since we do not dispose of a local structure, nor relevant communing measurements, in our clock protocols. However, in keeping with this philosophy of preparation and measurement non-contextuality, our current formulation of an ontic variable model requires some refinement: since the set $\sete_\on$ is not tomographically complete, nor the set $\sets_\on$ span all quantum states in the Hilbert space, neither all the d.o.f. of the density matrices nor those in the POVM elements, contribute to the measurement statistics. Indeed, $\tr\big[\rho \effect\big]=\tr\big[\mathcal{P}_\mainr(\rho) \mathcal{P}_\mainr (\effect)\big] $ for all $\rho\in\sets_\on$, $\effect\in\sete_\on$, where $\mathcal{P}_\mainr$ is a projection onto the vector space $\mainr$, generated by projecting the span of $\sets_\on$ onto the span of $\sete_\on$. We should thus make the replacements $P\big[ \lambda \big| \cdot  \big] \mapsto P\big[ \lambda \big| \mathcal{P}_\mainr(\cdot ) \big]$ and $P\big[\cdot \big| \lambda \big] \mapsto P\big[\mathcal{P}_\mainr(\cdot) \big| \lambda \big]$ in the above theory. This completes the description of our would-be non-contextual ontic variable model. Finally, if a classical description exists, one would also need a stochastic map describing the dynamics of the ontic variables throughout the protocols. This imposes an additional constraint on our would-be ontic description, which we will not need to consider.\Mspace

In this case, it can readily be seen that such a model \emph{does} exist; for example, one may simply choose the vectors $\{ \ket{E}\!, \ket{\tau_0}\!, \ket{\tau_1}\!, \ket{A}  \}$ as a basis for the ontic state space $\Lambda$ with each step and measurement allocated to deterministic distributions $\lambda$ on it.


We now turn to the case is which the clock can also be used to tell the time when off via \ifree{} measurements. We need to supplement the sets $\sets_\on,\sete_\on$ with the other relevant elements which are now required, namely, for states,
\begin{align}\label{eq:states}
\sets_\cf= \textup{conv}\big(\sets_\on\cup \{ \proj{\cf_0}, \proj{\cf_1}, U_\measure\proj{\cf_0}U_\measure^\dag, U_\measure\proj{\cf_1}U_\measure^\dag \} \big),
\end{align}
where $\ket{\cf_0}:=c \ket{E}+s\ket{\tau_0}$, $\ket{\cf_1}:=c \ket{E}+s\ket{\tau_1}$, since these are the states which appear in our protocol. When it comes to the measurements, in addition to those required in the final measurement, namely $\sete_\on$, we want to include measurements corresponding to ontic degrees of freedom which are able to describe the paradoxical aspects of our protocol. Before applying $U_\measure$, at times $\tau_0$, $\tau_1$, when we measure in the measurement basis, the measurement determined which branch (off or on) we collapsed to, but we can only deduce the time when we happen to collapse onto an on branch. This situation is analogous to a statistical mixture over on and off. What is surprising, is that after $U_\measure$ is applied, we can deduce not only which branch we were on, but also the time when we collapse to an off branch. So our would-be ontic model should have a variable which determines whether we are in the off branch, or the on branch at the times $\tau_0$, $\tau_1$. In other words, a variable which, after the application of $U_\measure$, plays the same role that the measurement basis played before the application of $U_\measure$. Including this in the set of things which are measurable in our would-be non-contextual theory, gives us 
\begin{align}\label{eq:effects}
\sete_\cf= \textup{conv}\big(\sete_\on\cup \{ U_\measure^\dag\proj{E}U_\measure, U_\measure^\dag\proj{\tau_0}U_\measure, U_\measure^\dag\proj{\tau_1}U_\measure, U_\measure^\dag\proj{A}U_\measure \} \big).
\end{align}
An alternative motivation for the inclusion of the additional ontic degrees of freedom, is that, quantum mechanically, the unitary channel generated by $U_\measure$ could have been applied to rotate the measurement basis, rather than being applied to the state. This alternative protocol, is equivalent to the one we study here, from the perspective of quantum mechanics. We can thus think of these two alternative implementations of our protocol as different contexts which are indistinguishable according to the laws of quantum mechanics, in an analogous way to how would-be ontic variable model is preparation and measurement non-contextual by design.\Mspace

In~\cite{Gitton2020} an algorithm was developed which, given a set $\sets$ of states and set $\sete$ of POVM elements, can determine whether an ontic variable model as described here exits. When said sets have finitely many extremal points, as in the case for $\sets_\cf$, $\sete_\cf$, it provably runs in finite time. For the simple case where the probability of each \ifreeoutcome{} is $1/6$ as described in the main text, the unitary $U_\measure$ is given by~\cref{table:special form} and $c, s$ by~\cref{eq:eq:c equals s equal 1}. For this case, we used the algorithm to provide a computer assisted proof that the sets $\sets_\cf, \sete_\cf$ | and hence our counterfactual clock | do \emph{not} admit a non-contextual ontic variable model as per the above description; see \app~\ref{sec:non contextual proof numerics} for details.\Mspace

A crucial aspect of our protocol for using the clock to tell the time when off, via \ifree{} measurements, was the existence of negative amplitudes allowing for destructive interference between the on and off branches. Since probabilities are nonnegative, one might have thought that this aspect of our protocol automatically rules out any non contextual realistic theory. However, this is definitely not the case since other experiments using \ifree{} measurements, such as the Elizur-Vaidman bomb test~\cite{Elitzur1993}, have been shown to admit a non contextual ontic model description analogous to the one ruled out here for our setup; see~\cite{classical_bomb}.

\subsection{Engineered clocks}\label{sec:Engineered clocks} Here we give a more detailed account of the engineered clocks outline in the main text. As before, we will discuss the simplest case here of just one tick while relegating the full details of the construction and multiple tick scenario, which is qualitatively the same, to the \app{} (\cref{sec:telling time when all is off}). As with the elementary clock, we have a stationary state, $\ket{\psi_\off}=\me^{-\mi t \hat H} \ket{\psi_\off}$, and a dynamical one, $\ket{\psi_\on(t)}=\me^{-\mi t \hat H} \ket{\psi_\on}$, which are mutually orthogonal at all times: $\braket{\psi_\off|\psi_\on(t)}=0$. We can run the clock in standard fashion by applying a unitary to $\ket{\psi_\off}$ which maps it to $\ket{\psi_\on}$ when the 1st event occurs, followed by measurement via an appropriate projection-valued measure when the 2nd event occurs. We use the convention that when the clock is on, it ticks at time $t_1>0$, and that the elapsed time between the 1st and 2nd events is at most $2 t_1$. Parameter $t_1$ can be chosen to be any value in our construction. Unlike with the elementary clock, we now have that the state $\ket{\psi_\on(t)}$ before the tick takes place at time $t_1$, will not be exactly orthogonal to the state after the tick occurs. Therefore, in order to unambiguously predict whether the clock has ticked, one must perform an unambiguous quantum discrimination measurement \cite{Barnett09}. This is a projective measurement with three possible outcomes: clock has not ticked yet, it has ticked, or I do not know. The last outcome can be thought of as an error, since when it is obtained we cannot say what time it is. This setting allows for more flexibility than in the prior clock setups. In the counterfactual case, this aspect will not be detrimental to its functioning, since, as with previous cases, the \ifreeoutcomes{} will only occur with some probability.

We will now explain how to run this clock in a counterfactual manner. The protocol proceeds similarly to in prior cases: we start the clock in $\ket{\psi_\off}$ and apply a unitary which maps it to a suitable superposition of off and on when the 1st event occurs, namely to $c\ket{\psi_\off}+s\ket{\psi_\on}$. When the 2nd event occurs at some unknown time $t\in[0, 2 t_1)$, we apply a unitary $U_\measure$ and measure using the projection-valued measure $\proj{\psi_\off}$, $\proj{A}$,  $\text{\large $\id$} -\proj{\psi_\off}-\proj{A}$, where $\ket{A}$ is stationary under the Hamiltonian evolution. After applying $U_\measure$ to the on and off clock states at time $t$, the states take on the form
\begin{subequations}
	\begin{align}
	U_\measure \ket{\psi_\off}= & \frac{A_1}{N} \ket{\psi_\off}+ \frac{A_1}{N} \ket{A} +A_2 \ket{A_\off},\label{eq:psi off t}\\
	U_\measure \ket{\psi_\on(t)}= &
	-\frac{c}{s} A_1 \ket{\bar \psi_\off(t)} -\frac{c}{s} A_1 \ket{\bar A(t)} +A_3 \ket{A_\on(t)} ,\label{eq:psi on t}
	\end{align}
\end{subequations}
where $A_2=\sqrt{1-2(A_1/N)^2}$, $A_3=\sqrt{1-2 (c/s)^2 A_1^2}$ are normalisation parameters and the other parameters will be discussed shortly. All kets in the superpositions are orthonormal except for the overlaps $\braket{\psi_\off| \bar \psi_\off(t)}$, $\braket{A| \bar A(t)}$ and  $\braket{A_\off|A_\on(t)}$. The kets $\ket{A_\off}$, $\ket{A_\on(t)}$ play the role of ancilla states that are chosen to guarantee $\braket{\psi_\off | \psi_\on (t)}=0$ holds at all times. The overlaps $\braket{\psi_\off | \bar \psi_\off(t)}$, $\braket{A| \bar A(t)}$ are chosen such that the counterfactual clock functions properly: consider the state of the clock just before the measurement 
\begin{align}\label{eq: Um applied to engineered clock}
U_\measure\me^{-\mi t\hat H} \left( c \ket{\psi_\off} +s \ket{\psi_\on}\right)= c\, U_\measure\ket{\psi_\off} +s\, U_\measure\ket{\psi_\on(t)},
\end{align}
there the latter quantities are provided by \cref{eq:psi off t,eq:psi on t}. If the 2nd event occurs in interval $t\in[0,t_1)$, we require
\begin{subequations}
	\begin{align}\label{eq:x 0 overlap}
	\braket{\psi_\off | \bar \psi_\off(t)}=
	\begin{cases}
	0 &\mbox{ for } t\in[0,t_1)\\
	1/N &\mbox{ for } t\in[t_1,2t_1)\,,
	\end{cases}
	\end{align}
	in order to be sure that the clock was off and  $t\in[0,t_1)$, when the measurement outcome $\psi_\off$ is obtained (This can readily be seen from \cref{eq:psi on t,eq:psi off t,eq: Um applied to engineered clock,eq:x 0 overlap} and the same \ifree{} measurement reasoning presented in the analysis of the elementary timekeeping systems). Similarly, if the 2nd event occurs in time interval $t\in[t_1,2t_1)$, we require
	\begin{align}\label{eq:x 1 overlap}
	\braket{A| \bar A(t)}=
	\begin{cases}
	1/N &\mbox{ for } t\in[0,t_1)\\
	0 &\mbox{ for } t\in[t_1,2t_1)\,,
	\end{cases}
	\end{align}
\end{subequations}
in order to be sure that the clock was off and $t\in[t_1,2t_1)$, when the outcome $A$ is obtained, according to a \ifree{} measurement. The role of $N$ now becomes apparent: it quantifies the overlap between the projectors associated with the \ifreeoutcomes{} and the dynamical kets $\ket{\bar \psi_\off(t)}$ and $\ket{\bar A(t)}$.\Mspace

\begin{figure}
	\includegraphics[scale=0.31]{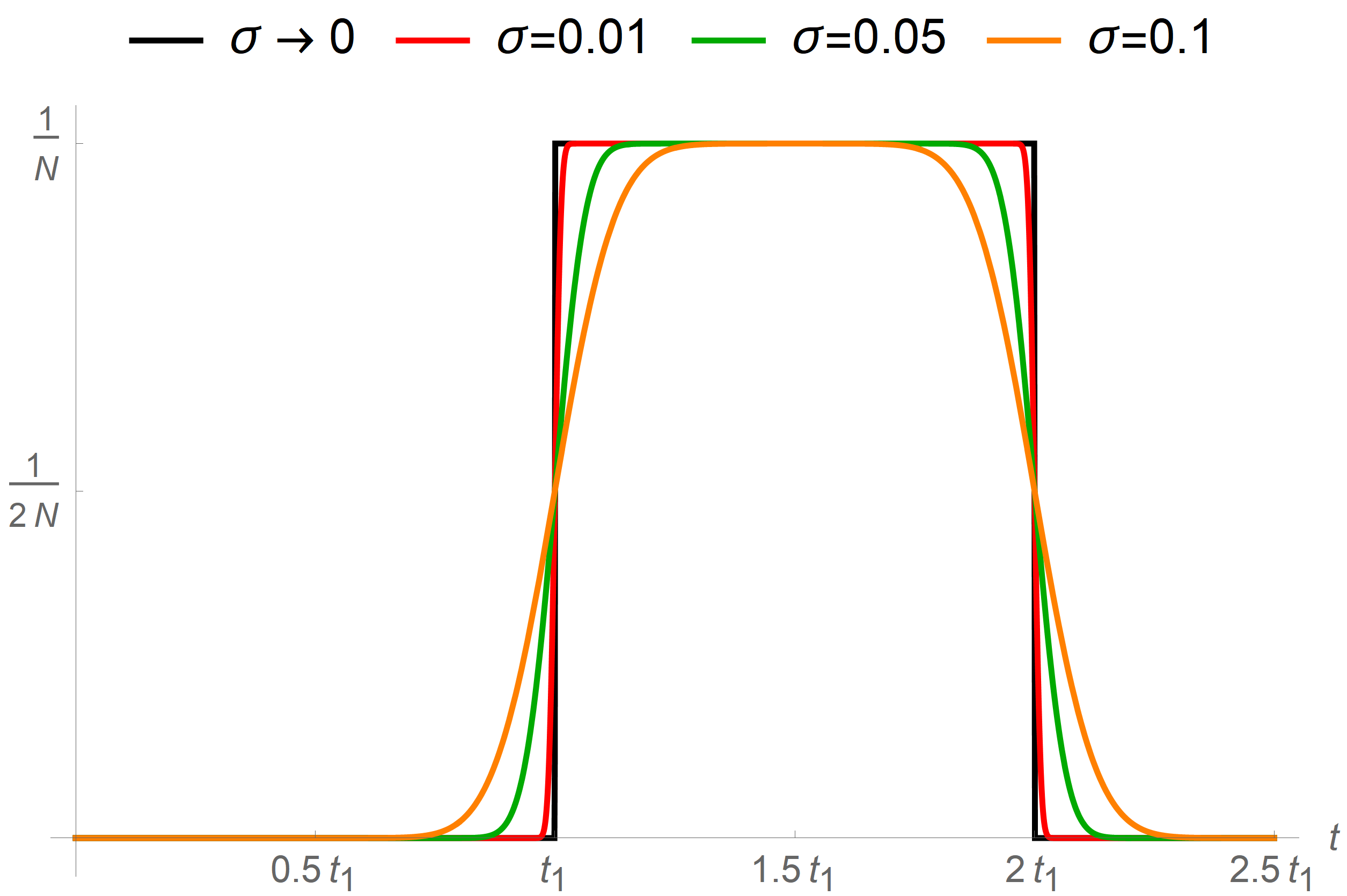}
	\centering
	\caption{Plot of the overlap $\braket{\tilde x_0|\bar x_0(t)}$ in \cref{eq:overlaps as a fucion of sigma} for different values of $\sigma$ \big(note that $\braket{\tilde x_1|\bar x_1(t)}$ is identical upto a shift to the left by an amount $t_1$\big). Observe how the difference between the ideal case (black) and that of $\sigma=0.01$ (red) are practically indistinguishable.  The small deviation between the black and coloured plots induces a small error. Note how the error is centred around $t_1$, which is the time at which the clock would tick if it were on, and $2t_1$, which is the time interval corresponding to the 1st tick would ends if the clock were on.}\label{fig:erfV1}
\end{figure}

It readily follows that the probability of knowing whether the 2nd event occurred in time interval $[0,t_1)$ or $[t_1, T_0)$ when the clock was off, that is to say, the probability of obtaining measurement outcomes $\psi_\off$ or $A$ in our protocol, is $P_\textup{cf}= 2 c^2 A_1^2/N^2$. One would want to choose $A_1, s$ and $N$ such as to maximise this probability while satisfying all the constraints. It happens that one can find states $\ket{\psi_\off}$, $\ket{\psi_\on}$, and a Hamiltonian $\hat H$, such that all the constraints can be satisfied other than \cref{eq:x 0 overlap,eq:x 1 overlap}, which appears to lead to unnormalisable states $\ket{\psi_\off}$, $\ket{A}$. From a physical perspective, the problem appears to be related to requiring the overlaps to transition from zero to a finite constant instantaneously while having a time evolution governed by a time independent Hamiltonian. However, a minor modification resolves the conflict: One can replace the r.h.s. of \cref{eq:x 0 overlap,eq:x 1 overlap} with approximate version using the \emph{error function} $\erf$. Specifically, $\braket{\psi_\off| \bar \psi_\off(t)}=f_0(t)$, $\braket{A| \bar A(t)}=f_1(t)$, with
\begin{align}
f_l(t)=  \frac{1}{2N}\! \left[ \erf\!\left(\frac{t/t_1\!+l\!-\!1}{\sqrt{2} \sigma}\right)\!- \erf\!\left(\frac{t/t_1\!+l\!-\!2}{\sqrt{2} \sigma}\right)\!\right],\label{eq:overlaps as a fucion of sigma}
\end{align}
and where $\sigma>0$ controls the approximation. In the limit $\sigma \to 0$ the above overlaps are equal to \cref{eq:x 0 overlap,eq:x 1 overlap}. We can now find solutions for all $\sigma>0$. We plot \cref{eq:overlaps as a fucion of sigma} in \cref{fig:erfV1} for varying approximation values $\sigma>0$.
We can quantify the error by the difference in probabilities associated with the \ifreeoutcomes, between the actual clock state which is measured, and an ``idealised'' clock state: $\textup{Dif}_p(\sigma,t):=\Big|\, \big|  \bra{x_p} \big(c \ket{\psi_\off}+s \ket{\psi_\on(t)}\!\big) \big|^2 -\big|  \bra{x_p} \big(c \ket{\psi_\off}+s \ket{\psi_\on I(t)}\!\big) \big|^2\, \Big|$, where we have used the short-hand ${x_0}={\psi_\off}$ for $p=0$, ${x_1}={A}$ for $p=1$. Here the ``idealised on state'', namely $ \ket{\psi_\on I(t)}$, is identical to $ \ket{\psi_\on (t)}$ given by \cref{eq:psi on t}, except that the kets $\ket{\bar \psi_\off(t)},\ket{\bar A(t)}$, satisfy \cref{eq:x 0 overlap,eq:x 1 overlap} rather than \cref{eq:overlaps as a fucion of sigma}. Importantly, the quantification of the error is meaningful, since if one were to use this idealised state in our protocol, it would result in zero false predictions: it would always predict the correct time and the clock would have always been off when a \ifreeoutcome{} had been obtained. Additionally, since the time $t$ at which one measures the clock could be any time in $[0,2 t_1)$, we time-average $\textup{Dif}_p(\sigma,t)$ over the time interval $[p t_1,(p+1)t_1)$ in which the outcome $x_p$ should have occurred, resulting in what we call the type-1 error (This error is thus due to the clock predicting the correct time, but being on); and we time-average over the remaining time, in which the outcome $x_p$ should not have occurred, resulting in what we call the type-2 error (This error is thus due to the clock predicting the incorrect time). In the case considered here, in which the clock ticks once when turned on, one only needs to consider the type-1 and type-2 errors for the $p=0$ case, since the error types are identical for the $p=1$ case. Hence we will forgo the label $p$ in the following discussion.\Mspace

We show in the \app{}~\ref{Precision quantification} that the difference in fidelities can be made arbitrarily small for both error types, while always having a non zero probability of obtaining the \ifreeoutcome. In particular, if we choose $\sigma=0.019$, we can obtain a total \ifree{} probability of half what it was in the previous optimal case, namely to $1/6$, and only incur type one and two errors of $1.0\times 10^{-3}$ and $3.7\times 10^{-4}$ respectively. If we further reduce $\sigma$ to $\sigma=0.0012$, the total \ifree{} probability only drops by half again, to $1/12$, and the type one and two error probabilities are now merely $7.3\times 10^{-14}$ and $2.8 \times 10^{-4}$ respectively.

\subsection{Proof of \Cref{thm:time is substantival}}\label{Proof of thm:time is substantiva}

Here we prove the theorem from the main text:

\thrmSubstantival*

Note that while we will only need to consider the counterfactual clock explained in the main text (i.e. the $\NT=1$ case), the proof is analogous in the case one uses the clocks generalised for $\NT\geq 2$ found in \app~\ref{sec app: clocks in nature}. In the latter case, the definitions of a \emph{\ifree{} measurement} and of an \emph{always off} clock which are found in the main text and used in assumptions (A) and (B), would be substituted for the more general definitions found in the \app; i.e. \cref{def:IFM,def:IFM 2}.

\begin{proof}
	Assumption (A) together with the example of a counterfactual clock for $\NT=1$ provided in the main text (the values for $U_\measure$ and $c,s$ which fully determine the protocol are provided at the end of \ref{sec:Measurement basis characterisation} in the \app), prove that for any two elapsed times $\tau_0$, $\tau_1$ between two events, a counterfactual clock exists and can determine said elapsed time with the clock always off. Therefore, the theorem follows by invoking assumption (B). 
\end{proof}

\newpage
\appendix


\begin{center}
	{\huge \Apps}
\end{center}

\section{Elementary time keeping devices}\label{sec app: clocks in nature}
In subsections \ref{sec:Setup and protocol} and \ref{sec:Interaction-free measurement (Formal definition)}, we detail the protocol one implements when using the clock to tell the time counterfactually. In subsection \ref{sec:Measurement basis characterisation}, we characterise the measurement basis used at the end of the protocol from subsections \ref{sec:Setup and protocol} and \ref{sec:Interaction-free measurement (Formal definition)}. In subsection \ref{sec:Backwards in time analysis in the multiple tick scenario} we prove some consequences of the protocols and definitions made in the previous subsections. Finally in subsection \ref{sec:Optimality and achievability proofs}, we provide two structure theorems which prove that ancillas are necessary and that counterfactual clocks exist for arbitrary $\NT\in\nnp$.

\subsection{Setup and protocol}\label{sec:Setup and protocol}

In this subsection, we explain the full protocol which allows one to determine the time $t\in\{\tau_0, \tau_1, \tau_2,\ldots, \tau_\NT\}$ using \ifree{}  measurements. 
For the purpose of performing general projective measurements, we append an $m$-dimensional ancilla space $\mathcal{H_\An}$ to the Hilbert $\mathcal{H}_\Sys$ of the system, leading to a total Hilbert space $\mathcal{H}_\Sys \oplus  \mathcal{H}_\An$. We will require the number of ancillas to be grater or equal to the number of distinguishable time intervals, namely $m\geq \NT$. The ancilla states are always stationary, namely they do not change in time. When the clock is tuned on, it is initialised to another state denoted $\ket{\psi}$. Thus in total, we have the orthonormal basis states
$\big\{ \ket{\psi}, \ket{E}, \ket{\tau_0}, \ket{\tau_1}, \ldots, \ket{\tau_\NT}, \ket{A_1}, \ket{A_2}, \ldots, \ket{A_m} \big\}$, which we call the measurement basis (Since the clock never has support on the basis state $\ket{psi}$ when measured, it will often be omitted for simplicity). Observe that the $m$ ancillas could be other energy eigenstates of the system which are orthogonal to the states $\ket{\tau_0}, \ket{\tau_1},\ldots, \ket{\tau_\NT}, \ket{E}$, or they could be produced via a separate $m+1$ dimensional system via the identification $\ket{\tau_0}\equiv \ket{\tau_0}\otimes \ket{d_0}, \ket{\tau_1}\equiv \ket{\tau_1}\otimes \ket{d_0},\ldots, \ket{\tau_\NT}\equiv \ket{\tau_\NT}\otimes \ket{d_0}$, and $\ket{E}\equiv \ket{E}\otimes \ket{d_0}$, for original states of the system and $\ket{A_1}\equiv \ket{E}\otimes \ket{d_1}, \ket{A_2}\equiv\ket{E}\otimes \ket{d_2}, \ldots, \ket{A_m}\equiv \ket{E}\otimes \ket{d_m}$, for the ancillary states, where $\{ \ket{d_0}, \ket{d_1}, \ldots,  \ket{d_m} \}$, are orthonormal states of the separate $m+1$ dimensional system.

Initially we set the system to be in the energy eigenstate, $\ket{E}$. Then, when the 1st event occurs, a unitary is applied to take the state to a superposition of $\ket{E}$ and $\ket{\psi}$. Using the branching notation from the main text to distinguish between the two orthogonal states, this has the form: 
\begin{align}
\begin{tikzpicture}
\tikzstyle{level 1}=[level distance=1cm, sibling distance=1.5cm]
\tikzstyle{end} = [circle, minimum width=3pt,fill, inner sep=0pt]
\node (root) {$\ket{E}$}[grow'=right]
child {
	node [end, label=right:{$c\ket{E}$}] {}}
child {	node [end, label=right:{$s\ket{\psi}$} ] {}};
\end{tikzpicture}
\end{align}
where $c:=\cos \theta$ and $s:=\sin{\theta}$. We then wait until the 2nd event occurs at unknown time $\tau_l$. The system now is of the form
\begin{align}\label{eq:app:before meas}
\begin{tikzpicture}
\tikzstyle{level 1}=[level distance=1cm, sibling distance=1.5cm]
\tikzstyle{end} = [circle, minimum width=3pt,fill, inner sep=0pt]
\node (root) {$\ket{E}$}[grow'=right]
child {
	node [end, label=right:{$c\ket{E}$ \myarrow{\tau_l} $c\ket{E}$}] {}}
child {	node [end, label=right:{$s\ket{\psi}$ \myarrow{\tau_l} $s\ket{\tau_l}$} ] {}};
\end{tikzpicture}
\end{align}
where $\tau_l\in\{ \tau_0, \tau_1,\ldots, \tau_\NT\}$. We now perform a unitary $U_\measure$ over the system. The quantum states have support on the ancillas after the application of the unitary.  
The system now takes on the form:
\begin{align}\label{eq:app:genreal proto}
	\begin{tikzpicture}
	\tikzstyle{level 1}=[level distance=1cm, sibling distance=1.5cm]
	\tikzstyle{end} = [circle, minimum width=3pt,fill, inner sep=0pt]
	\node (root) {$\ket{E}$}[grow'=right]
	child {
		node [end, label=right:{$c\ket{E}$ \myarrow{\tau_l} $c\ket{E}\overset{U_\measure\,}{\xrightarrow{\hspace*{0.7cm}}}\,\, c A_0^0\ket{E}+\sum_{j=1}^{\NT} cA_{j}^0\ket{A_j}+\sum_{k=\NT+1}^m cB_{-\!1,k}\ket{A_{k}}+\sum_{k=1}^{\NT+1} cB_{-\!1,k}\ket{\tau_{k\!-\! 1}}$}] {}}
	child {	node [end, label=right:{$s\ket{\psi}$ \myarrow{\tau_l} $s\ket{\tau_l}\overset{U_\measure\,}{\xrightarrow{\hspace*{0.7cm}}}\,sA_0^1(\tau_l) \ket{E}+\sum_{j=1}^\NT s A_{j}^1(\tau_l)\ket{A_{j}}+\sum_{k=\NT+1}^m s B_{l,k}\ket{A_{k}}+\sum_{k=1}^{\NT+1} s B_{l,k}\ket{\tau_{k\!-\! 1}}$} ] {}};
	\end{tikzpicture}
\end{align}
where the amplitudes $A_q^0$, $A_q^1(\tau_l)$ are to be characterised in \app~\ref{sec:Measurement basis characterisation}. To finalise the protocol, we immediately measure in the measurement basis after $U_\measure$ is applied. Here we will use the convention that the \ifreeoutcomes{} are associated with the post-measurement states $\big\{\ket{E}, \ket{A_1},\ldots, \ket{A_{\NT}} \big\}$. 

\subsection{\Ifree{} measurement (Formal definition)}\label{sec:Interaction-free measurement (Formal definition)}
Given the presentation of the general protocol in \app~\ref{sec:Setup and protocol}, we can now formally define the \ifree{} measurement which is appropriate for our setting. These form a generalisation of the main text to the $\NT>1$ case. In the following, we label the energy $E$ by $A_0:=E$ for simplicity of notation (since the symbol $A_0$ has not been used previously, this should not lead to confusion).
\begin{definition}[\Ifree{} measurement for clocks for $\NT\in\nnp$]\label{def:IFM}
	We call a projective measurement a \emph{\ifree{} measurement} if it has outcomes $A_0$, $A_1$, $A_2$, \ldots, $A_\NT$ (e.g. post-measurement states $\ket{A_0}$, $\ket{A_1}$, $\ket{A_2}$, \ldots, $\ket{A_\NT}$ respectively) of which at least one of these outcomes is a \emph{\ifreeoutcome}. Furthermore, we say that the outcome $A_k$ is a \ifreeoutcome{} if the following two equations are both satisfied.
	\begin{align}\label{eq:cont 1 2}
	A_k^1(\tau_k)=0 , 
	\end{align}
	and 
	\begin{align}\label{eq:cont 2 2}
	A_k^1(\tau_l)=- \frac{c}{s}A^0_k 
	\end{align}
	for all $l= 0,1,2,\ldots, k-1,k+1,\ldots, \NT$ and where $k\in \{0,1,2,\ldots, \NT\}$.  
\end{definition}
\begin{definition}[Clock always off for $\NT\in\nnp$]\label{def:IFM 2}
	When a \ifreeoutcome{} $A_k$ is obtained, we say that the elapsed time between the two events was $\tau_k$, and that the clock was \emph{always off}.
\end{definition}
	The reasoning behind these definitions is based on the usual arguments of \ifree{} measurements: \cref{eq:cont 1 2} allows us to conclude that if one obtains measurement outcome $A_k$ \emph{and} $t=\tau_k$, the clock must have been always off, since the amplitude in the on branch associate with the ket $\ket{A_k}$ is zero during said time interval (see \cref{eq:app:genreal proto}). Meanwhile, \cref{eq:cont 2 2} guarantees that if $\tau_k$ is obtained, then we \emph{must} have $t=\tau_k$, since the amplitude  $A_k^1(\tau_l)$ cancels out with the amplitude $A_k^0$ for all times $t\neq \tau_k$, (see \cref{eq:app:genreal proto}). In \app~\ref{sec:Optimality and achievability proofs} we show that these definitions are not vacuous, specifically, that clocks exist for any $\NT\in\nnp$ with the probabilities of obtaining their \ifreeoutcomes{} all non-zero.  
	In \app~\ref{sec:Backwards in time analysis in the multiple tick scenario} we will see that these definitions have other interpretational implications.

\subsection{Characterisation of unitary $U_\measure$.}\label{sec:Measurement basis characterisation}

We consider the case in which we associate all of the measurement outcomes $E$, $\tau_1$, $\tau_2$, \ldots, $\tau_\NT$ with the elapsed time being $\tau_0$, $\tau_1$, $\tau_2$, \ldots, $\tau_\NT$ respectively, and the clock being off (this is to say, we associate $E$, $\tau_1$, $\tau_2$, \ldots, $\tau_\NT$ with \ifreeoutcomes). The constraints imposed by \cref{eq:cont 1 2,eq:cont 2 2} constrain the matrix coefficients of the unitary matrix $U_\measure$. Specifically, by writing them out explicitly, we have that $U_\measure$ is of the form $U_\measure=U_\textup{ex} U_\measure'$ where $U_\textup{ex}$ implements a simple change of basis, defined via the relations: $U_\textup{ex} \ket{E}=\ket{E}$, $U_\textup{ex} \ket{\tau_0}=\ket{A_1}$, $U_\textup{ex} \ket{\tau_1}=\ket{A_2}$, $\ldots$, $U_\textup{ex} \ket{\tau_{\NT-1}}=\ket{A_\NT}$, $U_\textup{ex} \ket{A_1}=\ket{\tau_0}$, $U_\textup{ex} \ket{A_2}=\ket{\tau_1}$,  $\ldots$, $U_\textup{ex} \ket{A_\NT}=\ket{\tau_{\NT-1}}$, and $U_\textup{ex} \ket{\tau_\NT}=\ket{\tau_\NT}$, $U_\textup{ex} \ket{A_{\NT+1}}=\ket{A_{\NT+1}} $, $U_\textup{ex} \ket{A_{\NT+2}}=\ket{A_{\NT+2}}$, \ldots, $U_\textup{ex} \ket{A_{m}}=\ket{A_{m}}$. The other matrix $U_\measure'$ has the following matrix representation:
\begin{align}\label{table:generix form}
&\resizebox{1 \textwidth}{!} 
{
	$\begin{matrix}[c|cccccccccc]
	{}  & \ket{E} & \hspace{0.1cm}\ket{\tau_0} & \hspace{0.1cm}\ket{\tau_1} & \hspace{0.1cm}\ket{\tau_2} & \quad  \ldots\quad & \hspace{-0.1cm}\ket{\tau_{\NT-1}}
	& \hspace{-0.1cm}\ket{\tau_\NT}   & \quad \hspace{-0.6cm}\ket{A_1}  & \quad \ldots
	& \hspace{0.1cm} \ket{A_m} \\
	\hline
	\ket{E}  & A_0^0 &  A_1^0  &  A_2^0 &  A_3^0 & \quad \ldots\quad & A_\NT^0 & \gamma_0 \quad   & \hspace{-1.88cm}\hspace{0.91cm} \large{\brokenvert}\hspace{0.25cm}\hspace{0.5cm} B_{-\!1,1} & \quad \ldots  & B_{-\!1,m} \\ 
	\ket{\tau_0}   & 0  &  - A_1^0\, r  &  - A_2^0\, r &  - A_3^0\, r & \quad \ldots \quad & - A_\NT^0\, r & \gamma_1 \quad   &  \hspace{-2cm}\hspace{0.9cm}\large{\brokenvert}\hspace{0.1cm}\,\,\,\hspace{0.5cm} B_{1,1}  &  \quad \ldots & B_{1,m} \\ 
	\ket{\tau_1}  & -A_0^0\, r &  0  &    - A_2^0\, r & - A_3^0\, r & \quad \ldots \quad & - A_\NT^0\, r & \gamma_2 \quad   &  \hspace{-2cm}\hspace{0.9cm}\large{\brokenvert}\hspace{0.1cm}\,\,\,\hspace{0.5cm} B_{2,1}  &  \quad \ldots & B_{2,m} \\ 
	\ket{\tau_2}  & - A_0^0\, r  &  - A_1^0\, r  &    0 & - A_3^0\, r & \quad \ldots \quad & - A_\NT^0\, r & \gamma_3 \quad   &  \hspace{-2cm}\hspace{0.9cm}\large{\brokenvert}\hspace{0.1cm}\,\,\,\hspace{0.5cm} B_{3,1}  &  \quad \ldots & B_{3,m} \\ 		
	\vdots  & \vdots & \vdots & \vdots & \vdots &  & \vdots & \hspace{-0.3cm}\vdots & \hspace{-0.3cm} \vdots  &  &  \vdots\\
	\ket{\tau_{\NT-1}} & - A_0^0\, r  &  - A_1^0\, r  &   - A_2^0\, r  & - A_3^0\, r & \quad \ldots\quad & - A_\NT^0\, r &  \gamma_\NT   & \hspace{0.27cm}\hspace{-2cm} \hspace{0.9cm}\large{\brokenvert}\hspace{0.1cm}\,\,\,\hspace{0.5cm}  B_{\NT,1}  &  \quad \ldots & B_{\NT,m} \\ 
	\ket{\tau_{\NT}} &- A_0^0\, r  &  - A_1^0\, r  &    - A_2^0\, r & - A_3^0\, r & \quad \ldots\quad & 0 &   \gamma_{\NT+1}   &  \,\,\,\,\hspace{0.38cm}\hspace{-2cm}\hspace{0.9cm}\large{\brokenvert} \hspace{0.1cm}\hspace{0.17cm}\hspace{0.5cm} B_{\NT+1,1}  &  \quad \ldots & B_{\NT+1,m}\vspace{0.1cm}	\\ 
	\cline{2-8} 
	\ket{A_1} & B_{\NT+2,0}  &  B_{\NT+2,1}  &   B_{\NT+2,2} & B_{\NT+2,3} & \quad\ldots\quad & B_{\NT+2,\NT} &   B_{\NT+2,\NT+1}   & \hspace{0.49cm}\hspace{-1cm} \hspace{0.5cm} B_{\NT+2,\NT+2}\hspace{-0.5cm}  &  \quad \ldots & B_{\NT+2,\NT+m+1}	\\
	\vdots  & \vdots & \vdots & \vdots & \vdots &  & \vdots & \vdots & \hspace{-0.8cm} \vdots  &  &  \vdots\\
	\ket{A_{m}}  & B_{\NT+m+1,0}  &  B_{\NT+m+1,1}  &   B_{\NT+m+1,2} & B_{\NT+m+1,3} & \quad \ldots\quad & B_{\NT+m+1,\NT} &   B_{\NT+m+1,\NT+1}   &  \hspace{0.49cm}\hspace{-0.6cm} \hspace{0.5cm}B_{\NT+m+1,\NT+2}\hspace{-0.5cm}   &  \quad \ldots & B_{\NT+m+1,\NT +m+1}	
	\end{matrix}
	$
}\nonumber\\
&\vspace{15cm} {}^{}
\end{align}
where we have denoted  $r:=c/s=\cos(\theta)/\sin(\theta)$, and all matrix entries are arbitrary complex numbers such that $U_\measure' \,{U_\measure'}^{\!\dag}={U_\measure'}^{\!\dag}\, U_\measure' =\id$. In writing matrix \cref{table:generix form}, we have used the convention that kets are row vectors and bras are column vectors. We will used this matrix representation convention throughout. The horizontal solid inner line and the vertical dotted inner line are visual aids only. \Mspace

The probability of finding the \rg{} off and the time to be $\tau_l$ upon measurement is $P_\cf^{(l)}=|c A^0_l|^2$, $l\in\nno$.

The example in the main text where there are just two distinguishable times (i.e. $\NT=1$) and requires one ancilla state corresponds to the unitary matrix $U_\measure=U_\textup{ex} U_\measure'$, with $U_\textup{ex}$ determined by $U_\textup{ex} \ket{E}=\ket{E}$, $U_\textup{ex} \ket{\tau_0}=\ket{A}$, $U_\textup{ex} \ket{A}=\ket{\tau_0}$,  $U_\textup{ex} \ket{\tau_1}=\ket{\tau_1}$. The unitary $U_\measure'$ is given by
\begin{subequations}
\begin{equation}\label{table:special form}
\begin{matrix}
\hspace{0.66cm}\vline & \ket{E} & \quad\ket{\tau_0} & \quad\ket{\tau_1}  & \quad \ket{A} \\
\hline\ket{E}\hspace{0.15cm} \vline & \sqrt{1/3} &  \sqrt{1/3}  &  \sqrt{1/3} &  \hspace{-0.73cm}\hspace{0.0cm} \large{\brokenvert}\hspace{0.45cm} 0 \\ 
\ket{\tau_0}\,\, \vline & 0  &  -\sqrt{1/3}  &  \sqrt{1/3}   &\hspace{0.24cm}  \hspace{-0.3cm}\large{\brokenvert}\hspace{0.29cm}\hspace{-0.3cm} -\sqrt{1/3} \\ 
\ket{\tau_1} \,\,\vline & -\sqrt{1/3}  &  0  &    \sqrt{1/3}  & \hspace{-0.45cm} \hspace{0.55cm}\hspace{-0.3cm}\large{\brokenvert}\hspace{0.39cm}\hspace{-0.1cm} \sqrt{1/3} 	\\
\cline{2-4} 
\ket{A\hspace{0.01cm}}\hspace{0.155cm} \vline & \sqrt{1/3}  &  -\sqrt{1/3}  &  0 &  \hspace{0.3cm}\hspace{-0.1cm} \sqrt{1/3}\\
\end{matrix}
\end{equation}
and coefficients $c$ and $s$ by
\begin{align}
c=s=1/\sqrt{2}.\label{eq:eq:c equals s equal 1}
\end{align}
\end{subequations}

\subsection{Backwards in time analysis in the $\NT\geq 2$ scenario}\label{sec:Backwards in time analysis in the multiple tick scenario}
Here we extend the analysis presented in \cref{sec:backwards in time analysis} to the general case of multiple distinguishable times ($\NT\in\nnp$). Since the results are quantitatively the same as in~\cref{sec:backwards in time analysis}, we will not discuss at length the interpretation to avoid repetition. The amplitude corresponding to the $k$th \ifreeoutcome{} is 
\begin{align}\label{eq:two time state general}
\braket{A_k| U_\measure U(\tau_k) U_0 |A_0},
\end{align}
for $k=0,1,2,\ldots, \NT$ and where we are again using the convention $A_0 :=E$.  Using $U_\measure=U_\textup{ex} U_\measure'$ and~\cref{table:generix form}, we have 
\begin{align}\label{eq:A k on U m general}
\begin{split}
\bra{A_k} U_\measure = & A_k^0 \bra{A_0}-\!A_k^0 r\bra{\tau_0} - \!A_k^0 r\bra{\tau_1}-\ldots-\!A_k^0 r\bra{\tau_{k-1}} - \!A_k^0 r\bra{\tau_{k+1}}-\! A_k^0 r\bra{\tau_{k+2}}- \!A_k^0 r\bra{\tau_{k+3}}-\ldots -\! A_k^0 r\bra{\tau_\NT} \\
& + B_{\NT+2,k} \bra{A_1} + B_{\NT+2,k} \bra{A_2}+ B_{\NT+3,k} \bra{A_3}+ \ldots+ B_{\NT+m,k} \bra{A_m}  
\end{split} 
\end{align}
where $\bra{\tau_{-1}}:=0$. Recall that $U(\tau_k)\ket{A_0}=\ket{A_0}$, $U(\tau_k)\ket{A_1}=\ket{A_1}$, $U(\tau_k)\ket{A_2}=\ket{A_2}$, \ldots, $U(\tau_k)\ket{A_m}=\ket{A_m}$, and $U(\tau_k)\ket{\psi}=\ket{\tau_k}$. Therefore, analogously to \cref{sec:backwards in time analysis}, it follows that $\bra{\tau_l}U(\tau_k)=\bra{\psi_\perp^{(l,k)}}$ for all $k,l=0,1,\ldots,\NT$ with $k\neq l$ , where $\{ \bra{\psi_\perp^{(l,k)}} \}_{l,k}$ are orthogonal to $\ket{A_0}$ and $\ket{\psi}$. This follows from writing the ansatz $\bra{\tau_l}U(\tau_k)= c_0^{(l,k)} \bra{\psi_\perp^{(l,k)}}+c_2^{(l,k)}\bra{A_0}+c_3^{(l,k)}\bra{\psi}$, followed by observing the two equalities $\delta_{k,l}=\braket{\tau_l|\tau_k}= \braket{\tau_l|U(\tau_k)U(\tau_k)^\dag|\tau_k}=\big( c_0^{(l,k)} \bra{\psi_\perp^{(l,k)}}+c_2^{(l,k)}\bra{A_0}+c_3^{(l,k)}\bra{\psi} \big)\ket{\psi}= c_3^{(l,k)}$, where $\delta_{k,l}$ is the 
Kronecker delta, and $0=\braket{\tau_l|A_0}= \braket{\tau_l|U(\tau_k)U(\tau_k)^\dag|A_0}=\big( c_0^{(l,k)} \bra{\psi_\perp^{(l,k)}}+c_2^{(l,k)}\bra{A_0}+c_3^{(l,k)}\bra{\psi} \big)\ket{A_0}= c_2^{(l,k)}$. Thus from~\cref{eq:A k on U m general} we arrive at
\begin{align}\label{eq:A k on U m general 2}
\begin{split}
\bra{A_k} U_\measure U(\tau_k) = & A_k^0 \bra{A_0}-\!A_k^0 r\bra{\psi_\perp^{(0,k)}} - \!A_k^0 r\bra{\psi_\perp^{(1,k)}}-\ldots-\!A_k^0 r\bra{\psi_\perp^{(k\!-\! 1,k)}} - \!A_k^0 r \bra{\psi_\perp^{(k+1,k)}}-\! A_k^0 r\bra{\psi_\perp^{(k+2,k)}}\\
 &-\ldots -\! A_k^0 r\bra{\psi_\perp^{(\NT,k)}}  + B_{\NT+2,k} \bra{A_1} + B_{\NT+2,k} \bra{A_2}+ B_{\NT+3,k} \bra{A_3}+ \ldots+ B_{\NT+m,k} \bra{A_m}\\
 &=   A_k^0 \bra{A_0} + \bra{\psi_\perp''^{(k)}},
\end{split} 
\end{align}
where $\bra{\psi_\perp''^{(k)}}$ is another (unnormalised) state orthogonal to $\ket{A_0}$ and $\ket{\psi}$.
Thus recalling that $U_0$ only acts nontrivially on the subspace spanned by $\ket{A_0}$ and $\ket{\psi}$,  we find that 
\begin{align}\label{eq:A k on U m general 3}
\begin{split}
\bra{A_k} U_\measure U(\tau_k) = & A_k^0 \big( c\bra{A_0}-s\bra{\psi} \big)+ \bra{\psi_\perp''^{(k)}}.
\end{split} 
\end{align}
Thus upon pre-selecting onto the initial energy eigenstate $\ket{A_0}$, we observe that the only terms contributing to $\braket{A_k| U_\measure U(\tau_k)|A_0}$ come from non dynamical branches of the wave function.\Mspace

As with the case in which $\NT=1$ presented in \meth~\ref{sec:backwards in time analysis}, we can analyse the probability that a measurement of whether the clock was dynamically evolving during the dynamical stage of our pre and post selected setup. For the case of post-selecting on \ifreeoutcome{} $\ket{A_k}$, following~\cite{TwoStateFormalisumAharonov2008}, we have the probability of the clock being on at time $t\in(0,\tau_k)$ is given by 
\begin{align}
\textup{Prob}\big(\on, A_k\big)&:= \frac{\big{|}\braket{A_k|U_\measure U(\tau_k-t) P_\on U(t) U_0 |A_0}\big{|}^2 }{\big{|}\braket{A_k|U_\measure U(\tau_k-t) P_\on U(t) U_0 |A_0}\big{|}^2+ \big{|}\braket{A_k|U_\measure U(\tau_k-t) P_{{\scriptscriptstyle \backslash}\on} U(t) U_0 |A_0}\big{|}^2},
\end{align}
where $P_{\on}:=\proj{\psi}+\proj{\tau_0}+\proj{\tau_1}+\ldots+\proj{\tau_\NT}$ is the projection onto the on subspace and $P_{{\scriptscriptstyle \backslash}\on}:=\proj{A_0}+\proj{A_1}+\ldots+\proj{A_m}$ projects onto the complementary space (the subspace where no dynamics occurs). Note that since the ancillas and energy eigenstate are stationary, we have that $U(t)=U_{\perp A_0,\ldots, A_m}(t) + \id_{A_0,\ldots, A_m}$, where $\id_{A_0,\ldots, A_m}$ is the identity operator on the subspace spanned by $\ket{A_0}$, $\ket{A_1}$, \ldots, $\ket{A_m}$, and that $U_{\perp A_0,\ldots, A_m}(t)$ is orthogonal to said subspace. Therefore, if we assume that the dynamics is Markovian, namely $U(t_1)U(t_2)=U(t_1+t_2)$ for all $t_1, t_2\in\rr$, it follows that $U(\tau_k-t) P_\on U(t)=U_{\perp A_0,\ldots,A_m}(\tau_k)$ and hence that 
\begin{align}
\textup{Prob}\big(\on, A_k\big)=0
\end{align}
for all $t\in(0,\tau_k)$ and for all $k=0,1,2,\ldots,\NT$.

\subsection{Structure and achievability proofs}\label{sec:Optimality and achievability proofs}

In this subsection we present two propositions. The first one shows that ancillary states are necessary in order for the counterfactual clock to function, while the second shows that counterfactual clocks exist which can distinguish between an arbitrary number of times (i.e. for all $\NT\in\nnp$).

To start with, we need to motivate why it could have been that the ancillary states were not strictly necessary. The most straightforward way to see this is by recalling~\cref{eq:two time state general} from the previous section, and noting that the unitary $U_\measure$ has the form $U_\measure=U_\textup{ex} U_\measure'$. It follows that
\begin{align}\label{eq:two time state general 2}
\braket{A_k| U_\measure U(\tau_k) U_0 |E}= 
\begin{cases}
\braket{E| U_\measure' U(\tau_0) U_0 |E} &\text{if } k=0 \vspace{0.1cm}\\ 
\braket{\tau_{k-1}| U_\measure' U(\tau_k) U_0 |E} &\text{if } k=1,2,3,\ldots,\NT.
\end{cases}
\end{align} 
The physical interpretation of this is that rather than post-selecting on $\ket{E}=\ket{A_0}$ and ancillas $\ket{A_1}$, \ldots, $\ket{A_\NT}$, we could have implemented the unitary $U_\measure'$, in stead of $U_\measure$, and post-selected on $\ket{E}$ and $\ket{\tau_0}$, \ldots, $\ket{\tau_{\NT\!-\!1}}$.  In this latter method, since the states $\ket{\tau_k}$ evolve in time, upon implementing the \ifree{} measurement, the clock would start evolving unless it was immediately turned off (note that this does not happen when post-selecting onto the ancillas, since these are by definition stationary states). However, in the latter method, we see from the r.h.s. of~\cref{eq:two time state general 2} that one could have used unitaries in which the number of ancillas is zero, i.e. $m=0$, since the ancillas only appear implicitly (no pre or post selection onto them is required). The following proposition proves that in the case of $\NT=1$, the unitary $U_\measure'$ needs at least one ancilla. This result implies that the ancillas are necessary.

\begin{proposition}[Ancillas are necessary]\label{Ancillas are necessary}
Consider the counterfactual clock protocol described by the two state formalism on the r.h.s. of~\cref{eq:two time state general 2} for the case where the clock can distinguish between two times ($\NT=1$) and the number of ancilla states is zero ($m=0$), namely 
\begin{align}\label{eq:two time state general 3}
\braket{E| U_\measure' U(\tau_0) U_0 |E} \qquad\text{and }\qquad \braket{\tau_0| U_\measure' U(\tau_1) U_0 |E}.
\end{align} 
 There is no solution for which the probability of the elapsed time being $\tau_0$, namely $P_\cf^{(0)}:=|\braket{E| U_\measure' U(\tau_0) U_0 |E}|^2$, and the probability of the elapsed time being $\tau_1$, namely $P_\cf^{(1)}=|\braket{\tau_{0}| U_\measure' U(\tau_1) U_0 |E}|^2$, are both non-zero.
\end{proposition}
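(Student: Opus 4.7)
The plan is to reduce to a three-dimensional unitarity computation. In the $m=0,\,\NT=1$ setting, $U_\measure'$ acts as a $3\times 3$ unitary on the span of $\{\ket{E},\ket{\tau_0},\ket{\tau_1}\}$: the matrix in \cref{table:generix form} has rows and columns indexed by $\ket{E},\ket{\tau_0},\ldots,\ket{\tau_\NT},\ket{A_1},\ldots,\ket{A_m}$, which collapses to exactly $\{\ket{E},\ket{\tau_0},\ket{\tau_1}\}$ in the present regime. Writing its entries as $u_{ij}$ in this ordered basis and unpacking the \ifree{} measurement conditions of \cref{def:IFM} via $U_\measure = U_\textup{ex}U_\measure'$: demanding that outcome $\ket{E}$ be the \ifreeoutcome{} for elapsed time $\tau_0$ gives $u_{12}=0$ (the on-branch carries no $\ket{E}$ amplitude at $\tau_0$) and $cu_{11}+su_{13}=0$ (destructive interference at $\tau_1$), while demanding that outcome $\ket{\tau_0}$ be the \ifreeoutcome{} for elapsed time $\tau_1$ gives $u_{23}=0$ and $cu_{21}+su_{22}=0$. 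A direct calculation then yields $P_\cf^{(0)}=c^2|u_{11}|^2$ and $P_\cf^{(1)}=c^2|u_{21}|^2=s^2|u_{22}|^2$, so the simultaneous non-vanishing claim becomes $c,s\neq 0$ together with $u_{11}\neq 0$ and $u_{22}\neq 0$.

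The core of the proof is column orthogonality of the resulting $3\times 3$ unitary. With $u_{12}=u_{23}=0$, orthogonality of the $\ket{\tau_0}$- and $\ket{\tau_1}$-columns reduces to $\overline{u_{32}}\,u_{33}=0$, forcing $u_{32}=0$ or $u_{33}=0$. If $u_{32}=0$, orthogonality of the $\ket{E}$- and $\ket{\tau_0}$-columns, combined with $u_{21}=-(s/c)u_{22}$, collapses to $-(s/c)|u_{22}|^2=0$ and hence $u_{22}=0$. Symmetrically, if $u_{33}=0$ then orthogonality of the $\ket{E}$- and $\ket{\tau_1}$-columns, combined with $u_{13}=-(c/s)u_{11}$, collapses to $-(c/s)|u_{11}|^2=0$ and forces $u_{11}=0$. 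Either possibility directly contradicts simultaneous non-vanishing of $P_\cf^{(0)}$ and $P_\cf^{(1)}$.

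The boundary cases are immediate: $c=0$ makes $P_\cf^{(0)}=c^2|u_{11}|^2=0$ regardless, while $s=0$ means $U_0\ket{E}=c\ket{E}$, so the on-branch never forms and $P_\cf^{(1)}=0$ holds trivially. I anticipate that the only step that may warrant an explicit sentence of justification in the actual writeup is the reduction to the three-dimensional measurement subspace — i.e.\ the observation that $U_\measure'$ need have no component onto $\ket{\psi}$, so that only a $3\times 3$ block is relevant — after which the contradiction is essentially a two-line consequence of the very constrained zero-pattern of a $3\times 3$ unitary.
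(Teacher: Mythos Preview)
Your proof is correct and takes essentially the same approach as the paper: reduce to a $3\times 3$ unitary and exploit orthogonality constraints. The paper's argument is even shorter, though: instead of using column orthogonality and a case split on $u_{32}$ versus $u_{33}$, it checks orthogonality of the first two \emph{rows} of your matrix, which reads $\overline{u_{11}}\,u_{21}+0\cdot u_{22}+\overline{u_{13}}\cdot 0=\overline{u_{11}}\,u_{21}=0$, directly forcing $u_{11}=0$ or $u_{21}=0$ and hence $P_\cf^{(0)}=0$ or $P_\cf^{(1)}=0$ in a single step.
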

\begin{proof}
	For the case $\NT=1$, $m=0$, the matrix representation of $U_\measure'$ in \cref{table:generix form} reduces to
	\begin{equation}
	\begin{matrix}
	\hspace{0.67cm}\vline & \ket{E} & \hspace{-0.16cm}\quad\ket{\tau_0} & \quad\ket{\tau_1} \\
	\hline
	\ket{E}\hspace{0.16cm} \vline & A_0^0 &  A_1^0  &  \gamma_0  \\ 
	\ket{\tau_0}\hspace{0.13cm} \vline & 0  &  -A_1^0 r  &  \gamma_1    \\ 
	\ket{\tau_1} \hspace{0.13cm}\vline & -A_0^0\,r  &  0  &    \gamma_2  	\\
	\end{matrix}
	\end{equation}
	Unitary matrices require their row and column vectors to be orthogonal. Applying this constraint to the 1st two columns, one finds $A_0^0\, (A_1^0)^\dag=0$, which implies $|A_0^0| |A_1^0|=0$ and hence $A_0^0=0$ and/or $A_1^0=0$. Therefore, since $P_\cf^{(0)}= |c A_0^0|^2$, and $P_\cf^{(1)}= |c A_1^0|^2$, there is no solution for which $P_\cf^{(0)}>0$ and $P_\cf^{(1)}>0$.
\end{proof}

We now prove that for all $\NT\in\nn^+$, there exits a finite dimensional ancilla system and unitary $U_\measure$ such that there is a non zero probability of finding the clock off at all measurement times.

\begin{proposition}[Counterfactual clocks with arbitrarily many distinguishable times exist]\label{prop:clock existence}
Let $\NT\in\nnp$. Then, for any $\{ \tilde A^0_0, \tilde A^0_1,\tilde A^0_2, \ldots, \tilde A^0_\NT, \tilde \gamma_0, \tilde \gamma_1, \tilde \gamma_2, \ldots,\tilde\gamma_{\NT+1}    \}\in \ccr^{2\NT+3}$, 
and $r \neq 0$, there exists a $\gamma>0$ for which a unitary matrix of the form \cref{table:generix form} exists, with the amplitudes corresponding to \ifreeoutcomes{} given by $ A^0_0= \gamma \tilde A^0_0$,  $A^0_1= \gamma \tilde A^0_1$, $A^0_2= \gamma \tilde A^0_2$, \ldots,  $A^0_\NT= \gamma \tilde A^0_\NT$, and gamma coefficients given by $\gamma_0=\gamma\,\tilde \gamma_0$, $\gamma_1=\gamma\,\tilde \gamma_1$, $\gamma_2=\gamma\,\tilde \gamma_2$, \ldots, $\gamma_{\NT+1}=\gamma\,\tilde \gamma_{\NT+1}$, and $m=2(\NT+2)$ ancillary states.
\end{proposition}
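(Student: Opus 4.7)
The plan is to reduce the existence claim to a rescaled operator-dilation problem. First I would observe that, once $A_l^0 = \gamma\,\tilde A_l^0$ and $\gamma_l = \gamma\,\tilde \gamma_l$ are substituted, the top-left $(\NT+2)\times(\NT+2)$ block $V$ of $U_\measure'$ dictated by \cref{table:generix form} factorises as $V = \gamma\,\widetilde V$, where $\widetilde V$ is a fixed matrix depending only on $\{\tilde A_l^0\}_{l=0}^\NT$, $\{\tilde \gamma_l\}_{l=0}^{\NT+1}$ and $r$. Since $\widetilde V$ is fixed, its operator norm $\|\widetilde V\|_{\textup{op}}$ is a finite constant, so choosing any $\gamma \in (0, 1/\|\widetilde V\|_{\textup{op}})$ (or any $\gamma>0$ in the degenerate case $\widetilde V = 0$) guarantees $\|V\|_{\textup{op}} < 1$. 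In particular, both defect operators $I - VV^\dag$ and $I - V^\dag V$ are strictly positive definite and hence admit well-defined positive-definite square roots $D_1 := (I - VV^\dag)^{1/2}$ and $D_2 := (I - V^\dag V)^{1/2}$.

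The second step will be to exhibit an explicit unitary completion that respects the ancilla block size $m = 2(\NT+2)$. A convenient candidate is the padded Halmos-style dilation
\begin{equation*}
U_\measure' \;=\; \begin{pmatrix} V & D_1 & 0 \\ -D_2 & V^\dag & 0 \\ 0 & 0 & I_{\NT+2} \end{pmatrix},
\end{equation*}
where each block is $(\NT+2)\times(\NT+2)$, giving a total dimension $3(\NT+2) = (\NT+2)+m$. Unitarity follows by direct block multiplication together with the standard intertwining identity $V D_2 = D_1 V$, which itself is a consequence of the functional calculus applied to $V^\dag V$ and $V V^\dag$ (they share the same non-zero spectrum). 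All entries outside the top-left block fall into positions that \cref{table:generix form} marks as free parameters $B_{\cdot,\cdot}$, so this matrix fits the prescribed structural form. From the first row one then reads off the amplitudes of the \ifreeoutcomes{} as $A_l^0 = \gamma\,\tilde A_l^0$, and the last column of the top-left block gives $\gamma_l = \gamma\,\tilde \gamma_l$, as required.

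The hard part, insofar as there is one, will be the bookkeeping: one must verify that the entries $-\,A_l^0 r$ and the zero diagonal positions inside the top-left block of \cref{table:generix form} arise automatically from the linear rescaling $V = \gamma \widetilde V$, and that the bottom-right identity padding does not clash with any implicit constraint. Both checks are routine because \cref{table:generix form} imposes no constraints on the $B$-entries, and the forced entries in the top-left block depend affinely on $A_l^0, \gamma_l, r$, so any simultaneous rescaling of $A_l^0$ and $\gamma_l$ preserves the pattern. The choice of $m = 2(\NT+2)$ is generous rather than tight — it merely supplies enough ancillary dimensions to accommodate both the defect blocks $D_1, D_2$ and an identity padding — so once $\gamma$ is taken small enough for contractivity, no further delicate estimates are needed.
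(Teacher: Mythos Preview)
Your proof is correct and takes a genuinely different route from the paper's. The paper works column-by-column: it writes the $j$th column of $U_\measure'$ as $\gamma(\vec e_j\oplus\vec X_j\oplus\vec Y_j)$, where $\vec e_j\in\rr^{\NT+2}$ is the prescribed top part, then invokes an external ``dimensional lifting'' algorithm to choose $\vec X_j\in\rr^{\NT+2}$ so that the family $\{\vec e_j\oplus\vec X_j\}$ is orthogonal, appends a diagonal block $\vec Y_j=c_j\hat e_j$ to equalise norms, picks $\gamma$ to make the common norm one, and finally fills in the remaining $m$ columns via Gram--Schmidt. Your approach instead treats the prescribed $(\NT+2)\times(\NT+2)$ block globally as a scaled contraction and completes it in one stroke via the Julia/Halmos dilation.

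Your argument is cleaner and more conceptual: the intertwining relation $VD_2=D_1V$ handles orthogonality and normalisation simultaneously, so no separate lifting step or diagonal padding is needed. In fact your construction shows that $m=\NT+2$ ancillas already suffice (the third block row/column is pure identity padding to meet the stated $m=2(\NT+2)$), whereas in the paper's construction both extra $(\NT+2)$-dimensional slots are used essentially---one for orthogonalisation, one for normalisation. On the other hand, the paper's approach is more elementary in the sense that it does not presuppose the functional-calculus identity $Vf(V^\dag V)=f(VV^\dag)V$, and its explicit choice $\gamma=1/\sqrt{D}$ with $D$ the maximal column defect may give a larger (hence better for the counterfactual probabilities $|cA_l^0|^2$) value of $\gamma$ than the operator-norm bound $\gamma<1/\|\widetilde V\|_{\textup{op}}$ you use.
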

\begin{proof} It is by construction, and hence can be used to work out particular values of $\gamma$ for any instance of the problem. Denote by $\vec F_j=\gamma \vec f_j\,$ ($j=1,2,3,\ldots,m$) the $j$th column vector of \cref{table:generix form}. As is well known, since \cref{table:generix form} is a square matrix, it follows that it is a unitary matrix if the vectors $\{\vec F_j\}_{j=1}^m$ form an orthonormal family. We therefore need to prove that the vectors $\vec F_j$ can be made orthonormal.
	
To start with, we will decompose the vector  $\{\vec f_j\}_{j=1}^{\NT+2}$ into a direct sum of three other vectors, namely for $j=1,2,3,\ldots, \NT+2$, let $\vec f_j = \vec e_j \oplus \vec X_j \oplus \vec Y_j$, where $\vec e_j, \vec X_j, \vec Y_j \in \ccr^{\NT+2}$. Since the matrix in \cref{table:generix form} is a square matrix, this choice fixes the number of ancillas to be $m=2(\NT+2)$. Note that the vectors $\{\vec e_j\}_{j=1}^{\NT+2}$ are completely fixed, up to the constant $\gamma$, by the given parameters in the proposition statement. Meanwhile the vectors $\vec X_j, \vec Y_j$ are complete undetermined at this stage. We first fix the $\vec Y_j$ vectors: for $j,k=1,2,3,\ldots, \NT+2$, let $[\vec Y_j]_k= \delta_{j,k} c_j$, where $\delta_{j,k}$ is the Kronecker delta, and the coefficients  $\{c_j\}_{j=1}^{\NT+2}\in\ccr^{\NT+2}$ are to be determined. We will now fix the $\vec X_j$ vectors. We will use them for so-called \emph{dimensional lifting} of the vectors $\vec e_j$ to an orthogonal set. The algorithm in~\cite{DimLift} shows how to choose vectors $\{ \vec x_j\}_{j=1}^{\NT+2}$ so that the vectors $\{\vec e_j\oplus \vec x_j\}_{j=1}^{\NT+2}$ form an orthogonal family for any given set $\{\vec e_j\}_{j=1}^{\NT+2}$. It thus follows due to the form of the vectors $\vec Y_j$, that the vectors $\{\vec F_j = \gamma\,\vec e_j \oplus \vec X_j \oplus \vec Y_j\}_{j=1}^{\NT+2}$ form an orthogonal family for all $\{c_j\}_{j=1}^{\NT+2}\in\ccr^{\NT+2}$ and for all $\gamma>0$. Imposing normalisation on the vectors $\{\vec F_j \}_{j=1}^{\NT+2}$, we find for $j=1,2,3\ldots, \NT+2$:
\begin{align}\label{eq:normalisation of F}
\frac{1}{\gamma^2}-|c_j|^2= (\vec X_j)^\dag\, \vec X_j+(\vec Y_j)^\dag \,\vec Y_j.
\end{align}
Now denote by $D:=\max_{j\in\{1,2,\ldots, \NT+2\}} \left\{ (\vec X_j)^\dag\, \vec X_j+(\vec Y_j)^\dag \,\vec Y_j\right\}$, and the value of $j$ which solves the maximization by $j^*$.  Furthermore, set $c_{j^*}=0$ so that it follows from \cref{eq:normalisation of F}, that $\gamma=1/\sqrt{D}$. We can solve \cref{eq:normalisation of F} for all $j\neq j^*$, with a solution for the coefficients $c_j$ satisfying $0\leq |c_j|^2\leq 1/\gamma^2=D$.

We now have an orthonormal set of vectors  $\{\vec F_j\}_{j=1}^{\NT+2}$, and all that remains is to find the remaining vectors $\{\vec F_j\}_{j=\NT+3}^{\NT+2+m}$, with $m=2(\NT+2)$. Since this latter set of vectors contains elements without any constraint on them, other than those which allow \cref{eq:normalisation of F} to be a unitary matrix, we can simply apply the Gram-Schmidt orthonormalization procedure on the input sequence $\big( \vec F_j \big)_{j=1}^{\NT+2}\,{}^\frown \big(\vec z_j\big)_{j=\NT+3}^{\NT+2+m}$, where $\{\vec z_j\}_{j=\NT+3}^{\NT+2+m}$ is an arbitrary set of vectors linearly independent of $\{\vec F_j\}_{j=1}^{\NT+2}$, and ${}^\frown$ denotes sequence concatenation. The output of the Gram-Schmidt orthogonalisation procedure is then the orthonormal family $\big( \vec F_j \big)_{j=1}^{3(\NT+2)}$ where the 1st $\NT+2$ elements are identical to the first $\NT+2$ orthonormal vectors of the input sequence to the Gram-Schmidt orthogonalisation procedure. This completes the proof.
\end{proof}

\section{Implementation of the proof of no classical model}\label{sec:non contextual proof numerics}
Here we describe the numerical implementation of the proof that there does not exist a non contextual ontic model for the counterfactual clock described in~\meth{} \ref{sec:non contextual proof}.
The entire problem is fully determined by the set $\sets_\cf$ of states (\cref{eq:states}) and the set $\sete_\cf$ of effects (\cref{eq:effects}), with $U_\measure$ and $c,s$ given by~\cref{table:special form,eq:eq:c equals s equal 1}.

We will follow the outline of the algorithm~\cite{Gitton2020} and perform the following steps. The inner product is taken to be the Hilbert-Schmidt inner product:
\begin{enumerate}
	\item Project the set of extreme points of the set $\sets_\cf$ onto $\textup{span}(\sete_\cf)$, were $\textup{span}$ denotes the linear span. Let $P_\sete(\sets)$ denote the resulting set vectors. We call $\textup{span}\big(P_\sete(\sets)\big)$ the \emph{Reduced space} and denote it by $\mathcal{R}$. It is the effective vector space of our clock. 
	\item Using the Gram-Schmidt orthogonalization procedure, construct an orthonormal basis for $\mathcal{R}$.
	\item Project the extreme points of the sets $\sets_\cf$ and $\sete_\cf$ onto $\cal R$. Denote the corresponding new sets as $P_{\cal R}(\sete)$ and $P_{\cal R}(\sets)$. The elements of the latter sets represent rays (also known as half-lines) emanating from the origin (which is the point $\mathbf{0}:=(0,\ldots,0)$ with dimension $\text{dim}(\mathcal{R})$ entries in the orthonormal basis for $\mathcal{R}$.). 
	\item Run Vertex Enumeration algorithm~\cite{avis2000revised} twice. Once for rays $P_{\cal R}(\sete)$ and again for the rays $P_{\cal R}(\sets)$. The input to the algorithm is given using the V-representation which has the format (list of vertices, list of rays) and in our case it simplifies to ($\mathbf{0}$, list of rays).  Note that it differs from the H-representation which is set of linear inequalities corresponding to the intersection of halfspaces. We used Matlab wrapper GeoCalcLib (http://worc4021.github.io/) and performed all the computations using Matlab R2020a. The two runs of the algorithm on inputs ($\mathbf{0}$, $P_{\cal R}(\sets)$) and  ($\mathbf{0}$, $P_{\cal R}(\sete)$) produce the following sets of extreme rays:  $\text{Ray}_S$, $\text{Ray}_E$ respectively.
	\item Form a new set $\text{Ray}_{final} = \{a\otimes b\,\, |\,\, \forall\, a\in \text{Ray}_S \text{ and } \forall\,  b\in \text{Ray}_E \}$.
	\item Run Vertex Enumeration algorithm on ($\mathbf{0}\otimes \mathbf{0}$, $\text{Ray}_{final}$) using the V-representation to obtain a set of extreme rays $W$. The set $W$ contains the list of potential non-classicality witnesses.
	\item To verify that the pair of sets $\sets_\cf$, $\sete_\cf$ does not admit a non contextual ontic model, it suffices to check whether there exists $w\in W$ such that
	\begin{equation}
	\Bigg\langle \sum_{j=1}^{\textup{dim}(\mathcal{R})} \mathcal{R}_j\otimes \mathcal{R}_j\,\, ,\,\, w \Bigg\rangle < 0,
	\end{equation}
	where $\big(\mathcal{R}_j\big)_{j=1}^{\textup{dim}(\mathcal{R})}$ is the sequence of orthonormal basis elements for $\mathcal{R}$ calculated in step 2, and $\langle\cdot,\cdot\rangle$ represent the Hilbert-Schmidt inner product.
\end{enumerate} 
Note that computation times for the last instance of vertex enumeration (with input ($\mathbf{0}\otimes\mathbf{0}$,$\text{Ray}_{final}$) runs for a long time, so we employed a slight optimization by preprocessing the sets $\text{Ray}_S$, $\text{Ray}_E$ by applying the GeoCalcLib vertex Reduction routine which removed non-extremal (redundant) rays of the polyhedrons in V-representation. This allowed us to run the Vertex Enumeration algorithm with input ($\mathbf{0}$,T), where  $T\subset \text{Ray}_{final}$.

We were able to successfully identify an element of $w\in W$ which provides a violation which is two orders of magnitude larger than any error due to rounding and approximations that stem
from using floating-point arithmetic.


\section{Engineered counterfactual clock}
\label{sec:telling time when all is off}

Here we will explain in detail how the counterfactual engineered clock works. The material is divided into four subsections to aid comprehension.

\subsection{Preliminaries}
Here we will describe the required dynamics of the on and off states of the clock, and show that such dynamics is indeed achievable. In particular, we will need to consider two orthonormal states $\ket{\psi_\off}$ and $\ket{\psi_\on}$, on an infinite dimensional Hilbert space, whose dynamics is generated via a Hamiltonian $\hat H$ of the form 
\begin{align}\label{eq:Hamiltonina}
\hat H= \hat H_\on - \hat H_\on \proj{\psi_\off} \hat H_\on/r_0,
\end{align}
where $\hat H_\on$ and $\ket{\psi_\off}$ are arbitrary so long as  $r_0:=\braket{\psi_\off | \hat H_\on | \psi_\off }\neq 0$. We require that the dynamics of $\ket{\psi_\on}$ under $\hat H_\on$ is orthogonal to $\ket{\psi_\off}$ at all times relevant to the experiment: 
\begin{align}\label{eq:orthogonality condition}
\braket{\psi_\off | \psi_\on (t)}=0, \quad \forall\,t\in[0,x_0 T_0]
\end{align}
where 
\begin{align}\label{eq:H on dynamics def}
\ket{\psi_\on(t)}:=\me^{-\mi t\hat H_\on}\ket{\psi_\on},
\end{align}
and $[0,x_0 T_0]$ is the time during which the clock will function and will be detailed later at the beginning of \cref{Protocol and model derivation}). We now demonstrate a simple proposition which shows how $\ket{\psi_\off}$ and $\ket{\psi_\on(t)}$ evolve under the total Hamiltonian $\hat H$: 
\begin{proposition}
For all $t\in[0,x_0 T_0]$,
\begin{align}
\me^{-\mi t\hat H}\ket{\psi_\off}&=\ket{\psi_\off} \label{eq:dynam psi on}\\
\me^{-\mi t\hat H}\ket{\psi_\on}&=\ket{\psi_\on(t)}, \label{eq:dynam psi off}
\end{align}
where $\ket{\psi_\on(t)}$ is given by \cref{eq:orthogonality condition}.
\end{proposition}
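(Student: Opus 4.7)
The plan is to handle the two equations separately. For \cref{eq:dynam psi on}, I will show that $\ket{\psi_\off}$ is a zero-energy eigenstate of $\hat H$. Indeed, expanding the definition in \cref{eq:Hamiltonina} gives
\begin{align}
\hat H\ket{\psi_\off}=\hat H_\on\ket{\psi_\off}-\hat H_\on\ket{\psi_\off}\frac{\braket{\psi_\off|\hat H_\on|\psi_\off}}{r_0}=\hat H_\on\ket{\psi_\off}-\hat H_\on\ket{\psi_\off}=0,
\end{align}
using $r_0=\braket{\psi_\off|\hat H_\on|\psi_\off}$. The first equation then follows immediately from $\me^{-\mi t\hat H}\ket{\psi_\off}=\sum_{n\geq 0}(-\mi t)^n\hat H^n\ket{\psi_\off}/n!=\ket{\psi_\off}$.

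For \cref{eq:dynam psi off}, the key observation is that the orthogonality condition \cref{eq:orthogonality condition} is actually stronger than it looks: differentiating $\braket{\psi_\off|\psi_\on(t)}=0$ with respect to $t$ and using \cref{eq:H on dynamics def} gives $-\mi\braket{\psi_\off|\hat H_\on|\psi_\on(t)}=0$, so the matrix element $\braket{\psi_\off|\hat H_\on|\psi_\on(t)}$ vanishes throughout $(0,x_0 T_0)$, and by continuity on the closed interval. Substituting this into \cref{eq:Hamiltonina} shows that the rank-one perturbation annihilates $\ket{\psi_\on(t)}$ for all $t\in[0,x_0 T_0]$, giving
\begin{align}
\hat H\ket{\psi_\on(t)}=\hat H_\on\ket{\psi_\on(t)}.
\end{align}

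The final step is to invoke uniqueness of the Schr\"odinger evolution. Define $\ket{\phi(t)}:=\me^{-\mi t\hat H}\ket{\psi_\on}$, which by construction satisfies $\mi\partial_t\ket{\phi(t)}=\hat H\ket{\phi(t)}$ with $\ket{\phi(0)}=\ket{\psi_\on}$. On the other hand, the candidate $\ket{\psi_\on(t)}$ satisfies $\mi\partial_t\ket{\psi_\on(t)}=\hat H_\on\ket{\psi_\on(t)}=\hat H\ket{\psi_\on(t)}$ on $[0,x_0T_0]$ by the previous step, with the same initial condition $\ket{\psi_\on(0)}=\ket{\psi_\on}$. Uniqueness of solutions to the Schr\"odinger equation generated by the (self-adjoint) operator $\hat H$ then forces $\ket{\phi(t)}=\ket{\psi_\on(t)}$ on $[0,x_0T_0]$, which is precisely \cref{eq:dynam psi off}.

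The main subtlety to watch is self-adjointness of $\hat H$ on an infinite-dimensional Hilbert space (so that $\me^{-\mi t\hat H}$ is well-defined and the uniqueness argument is rigorous), together with the implicit assumption that $\ket{\psi_\on}$ and $\ket{\psi_\off}$ lie in the relevant domains so that the differentiation of $\braket{\psi_\off|\psi_\on(t)}$ used above is legitimate; these hold whenever $\hat H_\on$ is essentially self-adjoint on a common invariant domain containing these states, which is a natural assumption for the models considered. Apart from that, the argument is entirely algebraic.
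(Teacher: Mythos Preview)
Your proof is correct and follows essentially the same route as the paper: both show $\hat H\ket{\psi_\off}=0$ directly from the definition, and both differentiate the orthogonality condition \cref{eq:orthogonality condition} to obtain $\braket{\psi_\off|\hat H_\on|\psi_\on(t)}=0$, whence $\hat H\ket{\psi_\on(t)}=\hat H_\on\ket{\psi_\on(t)}$. The only difference is in the final integration step for the on state: the paper expands $\me^{-\mi t\hat H}$ as the Trotter-type limit $\lim_{N\to\infty}(\id-\mi t\hat H/N)^N$ and iterates the identity, whereas you invoke uniqueness of the Schr\"odinger evolution; your version is slightly cleaner and makes the domain assumptions more explicit, but the two arguments are interchangeable here.
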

\begin{proof}
From \cref{eq:H on dynamics def} it follows
	\begin{align}
	0= \frac{d}{dt} \braket{\psi_\off  | \psi_\on(t) }=-\mi \braket{\psi_\off | \hat H_\on | \psi_\on (t)}.
	\end{align}
We thus conclude $\braket{\psi_\off | \hat H_\on | \psi_\on (t)}=0$ for all $t\in[0,x_0 T_0]$. Therefore, using \cref{eq:Hamiltonina} we find
	\begin{align}
	\hat H \ket{\psi_\on(t)}&= \hat H_\on\ket{\psi_\on(t)},\quad t\in[0,x_0 T_0]\label{eq:H on psi on}\\
	\hat H \ket{\psi_\off}&= \bf{0},\label{eq:H on psi off}
	\end{align}
	where $\bf{0}$ is the zero vector. We can use \cref{eq:H on psi on} to obtain \cref{eq:dynam psi on}:
	\begin{align}
	\me^{-\mi t \hat H}\ket{\psi_\on}&=\lim_{N\to\infty} \left(\id -\mi t \hat H/N\right)^N \ket{\psi_\on}= \lim_{N\to\infty} \left(\id -\mi t\hat H/N\right)^{N-1}\Big( \ket{\psi_\on(t/N)}+\bo(t/N)^2\Big)\\&= \lim_{N\to\infty}\left(\id -\mi t\hat H/N\right)^{N-2}\Big( \ket{\psi_\on(2t/N)}+ \bo(t/N)^2\Big)= \lim_{N\to\infty}\Big( \ket{\psi_\on(t)}+N \bo(t/N)^2\Big)\\
	&=\ket{\psi_\on(t)}.
	\end{align}
	Likewise, \cref{eq:dynam psi off} follows from \cref{eq:H on psi off}.
\end{proof}

\subsection{Protocol and model derivation}\label{Protocol and model derivation}
We 1st state the general dynamical properties of the clock at a qualitative level when turned on (i.e. operated in standard fashion). This will allow for a mental picture which will aid comprehension of the quantitative study which is to follow. When the 1st event occurs, the clock is turned to the on state, $\ket{\psi_\on(0)}$, and it starts ticking at elapsed times $t_1$, $2 t_1$, $3 t_1$, \ldots, $\NT t_1$. Then, at time $T_0=(\NT+1) t_1$, the dynamics of the clock repeats itself. The periodic behaviour is repeated $x_0\in\nnp$ times. Evidently, this clock can only tell the time modulo the period $T_0$. Importantly, the time at which the 2nd event occurs can be \emph{any} time in the interval $[0,x_0 T_0)$, with the answer being an estimate on the number of ticks which have occurred between the 1st and 2nd events.\Mspace

One starts with the clock in the off state $\ket{\psi_\off}$ and applies a unitary to turn it to $\ket{\psi_\on(t)}$ when the 1st event occurs. The clock will then evolve unitarily until the 2nd event occurs at some time $t$, at which point we measure the state $\ket{\psi_\on(t)}$ using an appropriately chosen measurement. The engineered clock's on state, $\ket{\psi_\on(t)}$, will not be orthogonal to itself after ticking: $\braket{\psi_{\on}(t)|\psi_\on(t')}\neq 0$ for $t\in\big[l t_1, (l\!+\! 1)t_1\big)$, $t'\in\big[l' t_1, (l'\!+\! 1)t_1\big)$, with $l\neq l'$, and $l,l'\in\{0,1,2,\ldots, \NT\}$. Since quantum measurements cannot perfectly distinguish non-orthogonal states, this implies that the clock will not be able to tell the time perfectly when used. However, the overlaps will decrease with increasing $|l-l'|$, and hence a well chosen measurement can still provide a good estimate of the number of ticks occurred. This point will not be of much importance when using the clock counterfactually, since in this modus operandi, the clock can only achieve the \ifreeoutcomes{} with a probability less than one anyway (analogously to the elementary clocks from \cref{sec app: clocks in nature}) and we will effectively be performing a form of unambitious quantum state discrimination.\Mspace

While the quantum system used as a counterfactual clock is much more complex in the engineered quantum clock case, the actual protocol is very similar to the one we have seen already in the elementary clock case, namely, starting with the clock in the off state, $\ket{\psi_\off}$, a unitary $U_0$ is applied when the 1st event occurs transforming the clock to $c\ket{\psi_\off}+s \ket{\psi_\on(0)}$, followed by applying a unitary $U_\measure$ when the 2nd event occurs and measuring in the measurement basis. The main physically important new feature is that the elapsed time between the two events can be \emph{any} time in the interval $[0, x_0 T_0]$. It is useful to introduce a set of orthonormal ancillary states, $\{\ket{\tilde A_l}\}_{l=1}^\NT$, which are orthogonal to the Hamiltonian, namely $\hat H \ket{ \tilde{A}_l}=0$. These form part of the basis in which we measure. In this section, the measurement basis, is the set of orthogonal projectors $\{$ $\proj{\Psi_\off}$, $\proj{\tilde{A}_1}$, $\proj{\tilde{A}_2}$, \ldots, $\proj{\tilde{A}_\NT}$, $\id - \proj{\Psi_\off} -\proj{\tilde{A}_1} -\proj{\tilde{A}_2}- \ldots -\proj{\tilde{A}_\NT}$ $\}$, where $\id$ denotes the identity operator.\footnote{\text{Strictly speaking, we are not projecting onto a basis for the entire Hilbert space, but only onto the relevant space for us.}} (In the case considered in main text, for which $\NT=1$, we denoted $\ket{\tilde{A}_1}$ by $\ket{A}$ for simplicity). Diagrammatically, just before the measurement, the protocol is as follows:
\begin{align}\label{eq:app:genreal proto type-2 clock}
\begin{tikzpicture}
\tikzstyle{level 1}=[level distance=1cm, sibling distance=1.5cm]
\tikzstyle{end} = [circle, minimum width=3pt,fill, inner sep=0pt]
\node (root) {$\!\!\!\ket{\psi_\off}$}[grow'=right]
child {
	node [end, label=right:{$c\ket{\psi_\off}$ $\;\quad$\myarrow{t} $c\ket{\psi_\off}\quad\,\overset{U_\measure\,}{\xrightarrow{\hspace*{0.7cm}}}\,\,\,\qquad\, \frac{A_1}{N} \ket{\psi_\off}\,\,\, +\,\,\, \frac{A_1}{N} \ket{\tilde{A}_1}\,\,\,+\,\,\,\ldots\,\,\,+\,\,\, \frac{A_1}{N} \ket{\tilde{A}_{\NT}} + A_2 \ket{\An_\off}$} ] {}}
child {	node [end, label=right:{$s\ket{\psi_\on(0)}$ \myarrow{t} $s\ket{\psi_\on(t)}\overset{U_\measure\,}{\xrightarrow{\hspace*{0.7cm}}}\,-\left( \frac{c}{s} \right)\! A_1\! \ket{\bar \psi_\off(t)} -\left( \frac{c}{s} \right)\! A_1\! \ket{\bar A_1(t)}-\ldots-\left( \frac{c}{s} \right)\! A_1 \!\ket{\bar A_{\NT}(t)} + A_3 \!\ket{\An_\on(t)},$} ] {}};
\end{tikzpicture}
\end{align}
where $N>0$, $A_1>0$, are to be determined and we have defined the amplitudes
\begin{align}
A_2=\sqrt{1-(\NT\! +\! 1)\left(\frac{A_1}{N}\right)^2},\quad 
A_3= \sqrt{1-(\NT\! +\! 1)\left(\frac{c}{s}\right)^2\! A_1^2},
\end{align}
and where all kets are normalised, and we will show that $\braket{\tilde{A}_r |\psi_\off}=\braket{A_\off |\psi_\off}=\braket{\bar A_r(t) |\psi_\off}=\braket{A_\on(t) |\psi_\off}=\braket{\tilde{A}_r |\tilde{A}_l}=\braket{A_\off |\tilde{A}_l}=\braket{\bar\psi_\off |\tilde{A}_l}=\braket{\bar A_r(t) |\tilde{A}_l}=\braket{A_\on(t) |\tilde{A}_l}=\braket{\bar\psi_\off(t) |A_\off}=\braket{\bar A_r(t) |A_\off}=\braket{\bar A_r(t) |\bar\psi_\off(t)}=\braket{A_\on(t) |\bar\psi_\off(t)} =\braket{\bar A_r(t) |\bar A_l(t)}=\braket{A_\on(t) |\bar A_l(t)}=0$  for all $t\geq 0$, and $l,r\in 0,1,2,\ldots, \NT-1$ such that $l\neq r$.

The remaining overlaps, namely $\braket{\bar \psi_\off(t) |\psi_\off}$, $\braket{\bar A_l(t) |A_l}$, $\braket{A_\on(t) |A_\off}$, are to be determined.

As we will soon see, we will associate the \ifreeoutcomes{} with measurement outcomes $\psi_\off$, $\tilde{A}_1$, $\tilde{A}_2$, \ldots, $\tilde{A}_\NT$, resultant from a measurement in the measurement basis performed immediately after $U_\measure$ is applied. To make the analysis a bit easier, we can consider the mathematically equivalent scenario in which, rather than applying the unitary $U_\measure$ to the state before measuring in the measurement basis, we can apply $U_\measure$ to the projectors onto the measurement basis instead, rotating the basis in which me measure to: $\proj{\tilde x_0}:=U_\measure^\dag \proj{\psi_\off}U_\measure$, $\proj{\tilde x_1}:=U_\measure^\dag \proj{\tilde{A}_1}U_\measure$, $\proj{\tilde x_2}:=U_\measure^\dag \proj{\tilde{A}_2}U_\measure$, \ldots, $\proj{\tilde x_\NT}:=U_\measure^\dag \proj{\tilde{A}_{\NT}}U_\measure$,  $\id-\sum_{l=0}^{\NT}\proj{\tilde x_l}$.\Mspace

We additionally chose $U_\measure$ to act trivially on $\ket{A_\off}$ and $\ket{A_\on(t)}$, for all $t\in[0, x_0 T_0)$. Writing the states $\ket{\psi_\off}$ and $\ket{\psi_\on(t)}$ in the new basis, we find
\begin{align}
\ket{\psi_\off}&=\frac{A_1}{N} \ket{\tilde x_0} + \frac{A_1}{N} \ket{\tilde x_1}+\ldots+ \frac{A_1}{N} \ket{\tilde x_\NT} + \sqrt{1-(\NT\! +\! 1)\left(\frac{A_1}{N}\right)^2} \ket{\An_\off}\label{eq:psi off def}\\
\ket{\psi_\on(t)}&=\me^{-\mi t \hat H_\on}\!\left( -\left( \frac{c}{s} \right)\! A_1 \ket{\bar x_0(0)} -\left( \frac{c}{s} \right)\! A_1 \ket{\bar x_1(0)}-\ldots-\left( \frac{c}{s} \right)\! A_1 \ket{\bar x_\NT(0)} + \sqrt{1-(\NT\! +\! 1)\left(\frac{c}{s}\right)^2\! A_1^2} \ket{\An_\on(0)}\!\right)\\
&=-\left( \frac{c}{s} \right)\! A_1 \ket{\bar x_0(t)} -\left( \frac{c}{s} \right)\! A_1 \ket{\bar x_1(t)}-\ldots-\left( \frac{c}{s} \right)\! A_1 \ket{\bar x_\NT(t)} + \sqrt{1-(\NT\! +\! 1)\left(\frac{c}{s}\right)^2\! A_1^2} \ket{\An_\on(t)}\label{eq:psi on def}
\end{align}
where $\ket{\bar x_0(t)}:= U_\measure^\dag \ket{\psi_\off(t)}$, $\ket{\bar x_1(t)}:= U_\measure^\dag \ket{\bar A_1(t)}$, $\ket{\bar x_2(t)}:= U_\measure^\dag \ket{\bar A_2(t)}$, $\ldots$,  $\ket{\bar x_{\NT}(t)}:= U_\measure^\dag \ket{\bar A_\NT(t)}$.

Hence $\braket{\bar \psi_\off(t) |\psi_\off}=\braket{\tilde x_{0}|\bar x_{0}(t)}$ and $\braket{\bar  A_l(t) |\tilde{A}_l}=\braket{\tilde x_{l}|\bar x_{l}(t)}$. We impose the constraint that the overlaps must satisfy
\begin{align}\label{eq:inner produc in term of G def}
\braket{\tilde x_{l}|\bar x_{l}(t)}= \frac{1}{N}\, G_\sigma\!\left( \frac{t-(\theta l +1) t_1}{\NT}-\frac{t_1}{2} \right),\quad l=0,1,\ldots, \NT
\end{align}
where $\theta=-1$ if $\NT=1$, and $\theta=1$ otherwise; and where
\begin{align}\label{eq: G sigma def}
G_\sigma(x):= \sum_{q=-(\theta+1)/2}^{x_0-1} G_{0,\sigma}\left( x- q \frac{T_0}{\NT} \right)
\end{align}
with $T_0=(\NT+1)t_1$ the clock period, $x_0\in\nnp$ the number of cycles through the period the clock performs 
and $G_{0,\sigma}$ is an approximation to the top hat function; namely
\begin{align}
G_{0,\sigma}(t):= \frac{1}{2}\left[ \erf\left(\frac{t/t_1+1/2}{\sqrt{2}\sigma}\right)- \erf\left(\frac{t/t_1-1/2}{\sqrt{2}\sigma}\right)  \right],
\end{align}
where $\erf$ is the \emph{error function} and $\sigma>0$ controls the quality of the approximation. Approximations to the top hat function are know as ``bells''; see \cite{Tophat} and references in the introduction for other possibilities.  The overlap $\braket{\tilde x_{l}|\bar x_{l}(t)}$ is plotted in \cref{fig:overlap} in the limit of small positive $\sigma$. By considering the conditions for a \ifree{} measurement, we see that this is precisely the necessary form of the function $G_\sigma$. Namely, it guarantees that for $t\notin \big[lt_1, (l+1)t_1 \big)$, the probability of obtaining outcome $\tilde x_l$ is zero, while for $t\in \big[lt_1, (l+1)t_1 \big)$, the probability of obtaining outcome $\tilde x_l$ is non zero but the overlap with the on branch is zero, namely $\braket{\tilde x_{l}|\psi_\on (t)}=\braket{\tilde x_{l}|\bar x_{l}(t)}=0$.
\begin{figure}
	\includegraphics[scale=0.8]{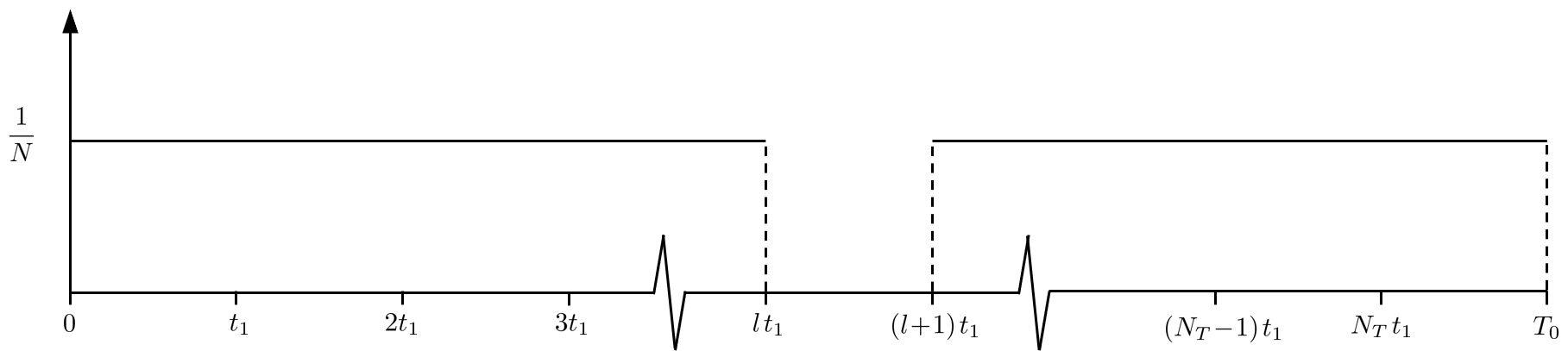}
	\centering
	\caption{Plot of $\braket{\tilde x_l|\bar x_l(t)}$ over one period: the function is zero in the time interval $[lt_1,(l+1)t_1)$ and $1/N$ at all other times.}\label{fig:overlap}
\end{figure}

Expanding the kets $\ket{\tilde x_{l}},\ket{\bar x_{l}}$ in the energy basis, $\ket{\tilde x_{l}}= \int \textup{d}E\,  \tilde x_{l}(E) \ket{E}$, $\ket{\bar x_{l}}=\int \textup{d}E\, \bar x_{l}(E) \ket{E}$, the overlap $\braket{\tilde x_{l}|\bar x_{l}(t)}$ becomes
\begin{align}
\braket{\tilde x_{l}|\bar x_{l}(t)}= \int \textup{d}E\, \tilde{x}^*_{l}(E) \bar x_{l}(E) \,\me^{-\mi 2 \pi E t},
\end{align}
which we identify as the Fourier Transform of $\tilde{x}^*_{l}(E) \bar x_{l}(E)$. Denoting the Fourier Transform and its inverse by $\mathcal{F}[f(t)](E):=\int \textup{d}E f(E) \me^{-\mi 2\pi E t}$ and $\mathcal{F}^{-1}[f(E)](t):=\int \textup{dt} f(t) \me^{\mi 2\pi E t}$ respectively, and using the  shift and rescaling properties of the Fourier Transform, we find 
\begin{align}
\tilde{x}^*_{l}(E) \bar x_{l}(E) &= \mathcal{F}^{-1} \big[\braket{\tilde x_{l}|\bar x_{l}(t)} \big](E)= \frac{1}{N}\,  \mathcal{F}^{-1} \left[ G_\sigma\!\left( \frac{t-(\theta l+1) t_1}{\NT}-\frac{t_1}{2} \right) \right] (E)\\ 
&= \frac{\NT}{N}\, \me^{\mi \pi E \big(2(\theta l+1) +\NT\big)t_1} \mathcal{F}^{-1} \big[ G_\sigma(t)\big] \big(E \NT\big),
\end{align}
Using \Cref{lem:fourier of quasi period fuction}, we find the inverse Fourier transform of $G_\sigma(t)$: 
\begin{align}
\mathcal{F}^{-1} \big[ G_\sigma(t)\big] \big(E\NT\big) &= \mathcal{F}^{-1} \big[ G_{0,\sigma}(t)\big] \big(E\NT\big)\, \frac{\sin\Big(\pi \big(x_0+(\theta+1)/2\big) E T_0 \Big)}{\sin\big(\pi E T_0\big)}\, \me^{-\mi \pi ET_0 (\theta+1)(x_0+2)/2}\\
&=\mathcal{F}^{-1} \big[ G_{0,\sigma}(t)\big] \big(E\NT\big)\, \frac{\sin\Big(\pi (\NT+1) \big(x_0+(\theta+1)/2\big) E t_1 \Big)}{\sin\big(\pi (\NT+1) E t_1\big)}\, \me^{-\mi \pi E t_1 (\NT+1) (\theta+1)(x_0+2)/2} ,
\end{align} 
where by direct calculation one finds
\begin{align}\label{eq:FT Gauss bell}
\mathcal{F}^{-1} \big[G_{0,\sigma}(t)\big] (x)= t_1 \me^{-2 \pi^2 \sigma^2 (x t_1)^2}\, \sinc(\pi x t_1),
\end{align}
where $\sinc(x)=\sin(x)/x$ is the \emph{sinc} function. We now make the trial solutions 
\begin{align}\label{x bar and tilde eqs}
\begin{split}
\tilde x^*_{l}(E)&= \sqrt{\frac{\NT}{N}\,  \mathcal{F}^{-1} \big[ G_\sigma(t)\big] \big(E \NT\big) \,\frac{\sin\Big(\pi \big( x_0+(\theta+1)/2 \big) E T_0 \Big)}{\sin\big(\pi E T_0\big)}\, } \, \me^{\mi l \lambda E t_1 }\\
\bar x_{l}(E)&= \sqrt{\frac{\NT}{N}\,  \mathcal{F}^{-1} \big[ G_\sigma(t)\big] \big(E \NT\big) \,\frac{\sin\Big(\pi \big( x_0+(\theta+1)/2 \big) E T_0 \Big)}{\sin\big(\pi E T_0\big)}\, } \, \me^{-\mi l \lambda E t_1 }  \me^{\mi \pi E \big(2(\theta l+1) +\NT\big)t_1} \, \me^{-\mi \pi E T_0 (\theta+1)(x_0+2)/2}
\end{split}
\end{align}
 for $l=0,1,2,\ldots,\NT$; and where $\sqrt{\cdot}$ denotes the principle square root and $\lambda\in\rr$ is to be determined. We can now calculate the normalisation constant $N$: 
\begin{align}
1= \braket{\tilde x_{l} | \tilde x_{l}}= \int \textup{d}E \,\tilde x^*_{l}(E) \tilde x_{l}(E)= \frac{\NT}{N} \int \textup{d}E \,\left| \sqrt{\mathcal{F}^{-1} \big[ G_\sigma(t)\big] \big(E \NT\big) \,\frac{\sin\Big(\pi \big( x_0+(\theta+1)/2 \big) E T_0 \Big)}{\sin\big(\pi E T_0\big)}\,} \,\right|^2,
\end{align}
hence using the identity $|\sqrt{z}|^2=|z|$ for all $z\in\cc$, 
\begin{align}
N &= \NT \int \textup{d}E\, \left|\mathcal{F}^{-1} \big[ G_\sigma(t)\big] \big(E \NT\big) \,\frac{\sin\Big(\pi \big( x_0+(\theta+1)/2 \big) E T_0 \Big)}{\sin\big(\pi E T_0\big)}  \right|\\
& = \NT \int \textup{d}E\, t_1 \me^{-2 \pi^2 \sigma^2 (E\NT t_1)^2}\,  \left|\sinc(\pi E\NT t_1) \,\frac{\sin\Big(\pi \big( x_0+(\theta+1)/2 \big) E T_0 \Big)}{\sin\big(\pi E T_0\big)}\, \right|\\
& = \NT \int \textup{d}y\, \me^{-2 (\pi \NT \sigma)^2 y^2}\,  \left|\sinc(\pi\NT y) \,\frac{\sin\Big(\pi (\NT+1)\big( x_0+(\theta+1)/2 \big) y \Big)}{\sin\big(\pi (\NT+1)y \big)}\, \right|.\label{eq:N as a function of sigma} 
\end{align}
Observe that this choice for $N$ also implies $\braket{\bar x_{l} (t)| \bar x_{l}(t)}=1$ for all $t\in\rr$. 
We now examine the implications of the requirement that the states $\ket{\psi_\off}$ and $\ket{\psi_\on(t)}$ have to be orthogonal at all times:
\begin{align}
0= \braket{\psi_\off|\psi_\on(t)}= - \left(\frac{c}{s}\right) \frac{A_1^2}{N} \left(\sum_{l=0}^\NT \braket{\tilde x_{l}|\bar x_{l}(t)} \right) + \sqrt{\left(1-(\NT\! +\! 1) \left(\frac{A_1}{N}\right)^2\right) \left(1-(\NT\! +\! 1) \left(\frac{c}{s}\right)^2 A_1^2 \right)} \braket{A_\off|A_\on(t)}.
\end{align} 
Therefore, $A_1$ must satisfy the equation
\begin{align}\label{eq:a1 must satisfy}
c_0= \frac{\left(\frac{c}{s}\right) \frac{A_1^2}{N}}{\sqrt{\left(1-(\NT\! +\! 1) \left(\frac{A_1}{N}\right)^2\right) \left(1-(\NT\! +\! 1) \left(\frac{c}{s}\right)^2 A_1^2 \right)}}\in\rr,
\end{align}
where $c_0$ satisfies
\begin{align}\label{eq:c0 def}
\braket{A_\off|A_\on(t)}= c_0  \sum_{l=0}^\NT \braket{\tilde x_{l}|\bar x_{l}(t)} .
\end{align}
We will now find a value of $c_0$ which satisfies \cref{eq:c0 def} and come back to \cref{eq:a1 must satisfy} later. Expanding $\ket{A_\off}$ and $\ket{A_\on(t)}$ in the energy basis, $\ket{A_\off}=\int\textup{d}E A_\off(E) \ket{E}$, $\ket{A_\on(0)}=\int\textup{d}E A_\on(E) \ket{E}$, we have that $\braket{A_\off|A_\on(t)}=\int\textup{d}E A_\off^*(E)A_\on(E) \me^{-\mi 2\pi E t} $ and hence
\begin{align}\label{eq:A off on constraint}
A_\off^*(E)A_\on(E)= c_0\, \mathcal{F}^{-1} \left[ \sum_{l=0}^\NT \braket{\tilde x_{l}|\bar x_{l}(t)} \right](E).
\end{align}
We calculate the summation before proceeding with taking the inverse Fourier transform. Taking into account \cref{eq:inner produc in term of G def,eq: G sigma def} one finds
\begin{align}
 \sum_{l=0}^\NT \braket{\tilde x_{l}|\bar x_{l}(t)} &= \sum_{l=0}^\NT \frac{1}{N}\, G_\sigma\!\left( \frac{t-(\theta l+1) t_1}{\NT}-\frac{t_1}{2} \right)= \frac{1}{N}\sum_{l=0}^\NT \sum_{q=-(\theta+1)/2}^{x_0-1} G_{0,\sigma}\!\left( \frac{t-(\theta l+1) t_1}{\NT}-\frac{t_1}{2}-q \frac{T_0}{\NT} \right)\\
 &  = \frac{1}{N}\sum_{l=0}^\NT \sum_{q=-(\theta+1)/2}^{x_0-1} G_{0,\sigma}\!\left( \frac{t}{\NT}-\frac{t_1}{2}- \big((\theta l+1)+ q(\NT+1) \big)\frac{ t_1}{\NT} \right)\label{eq:line 2nd} \\
 &= \frac{1}{N} \sum_{q=-\NT -(\theta-1)/2}^{\NT +(x_0-1)(\NT+1)-(\theta+1)/2} G_{0,\sigma}\!\left( \frac{t}{\NT}-\frac{t_1}{2}- q\frac{ t_1}{\NT} \right)=\frac{1}{N}  G_{\sigma}^{(2)}\!\left( \frac{t}{\NT}-\frac{t_1}{2} \right),\label{eq:line 3nd}
\end{align}
where in going from line \ref{eq:line 2nd} to line \ref{eq:line 3nd}, we have use the identity $\sum_{n=0}^{d-1} \sum_{m=a}^b f(n+md)=\sum_{n=a d}^{d-1+b d} f(n)$, which holds for arbitrary function $f$ and $d\in\nnp$, $a\leq b,$\, $a,b\in\zz$ and we have fined
\begin{align}
G_{\sigma}^{(2)}(x) :=\sum_{q=-\NT -(\theta-1)/2}^{\NT +(x_0-1)(\NT+1)-(\theta+1)/2} G_{0,\sigma}\!\left(x- q\frac{ t_1}{\NT} \right).
\end{align}
We now return to the task of taking the inverse Fourier transform in \cref{eq:A off on constraint}:
\begin{align}
A_\off^*(E)A_\on(E)&= c_0\, \mathcal{F}^{-1} \left[ \sum_{l=0}^{\NT} \braket{\tilde x_l|\bar x_l(t)} \right](E) = \frac{c_0}{N}  \mathcal{F}^{-1} \left[  G_{\sigma}^{(2)}\!\left( \frac{t}{\NT}-\frac{t_1}{2} \right)\right](E)\\
&= c_0 \frac{\NT}{N} \me^{\mi \pi E t_1 \NT}  \mathcal{F}^{-1} \left[  G_{\sigma}^{(2)}(t) \right](E \NT),
\end{align}
where by direct calculation using \cref{lem:fourier of quasi period fuction}, one has
\begin{align}
\mathcal{F}^{-1} \left[  G_{\sigma}^{(2)}(t) \right](E\NT)&= \mathcal{F}^{-1} \big[ G_{0,\sigma}(t)\big] (E\NT)\, \frac{\sin\big(\pi (x_0+1) E T_0\big)}{\sin(\pi E t_1)}\, \me^{-\mi \pi Et_1 \big( (\NT+1)(x_0+1)-1 \big)\big(\NT+(\theta-1)/2\big) },\\
&= \mathcal{F}^{-1} \big[ G_{0,\sigma}(t)\big] (E\NT)\, \frac{\sin\big(\pi (\NT+1)(x_0+1) E t_1\big)}{\sin(\pi E t_1)}\, \me^{-\mi \pi E t_1\big( (\NT+1)(x_0+1)-1 \big)\big(\NT+(\theta-1)/2\big) },
\end{align}
where recall that $\mathcal{F}^{-1} \big[ G_{0,\sigma}(t)\big] (E\NT) $ is given by \cref{eq:FT Gauss bell}. We thus make the following trial solutions for $A_\off^*(E)$ and $A_\on(E)$
\begin{align}\label{eq:A on off def}
\begin{split}
A_\off^*(E)&= \sqrt{c_0\frac{\NT}{N}\, \mathcal{F}^{-1} \big[ G_{0,\sigma}(t)\big] (E\NT)\, \frac{\sin\big(\pi (\NT+1)(x_0+1) E t_1\big)}{\sin(\pi E t_1)} \,} \, \me^{\mi \lambda E t_1/2}\\
A_\on(E)&= \sqrt{c_0 \frac{\NT}{N}\, \mathcal{F}^{-1} \big[ G_{0,\sigma}(t)\big] (E\NT)\, \frac{\sin\big(\pi (\NT+1)(x_0+1) E t_1\big)}{\sin(\pi E t_1)}\, } \, \me^{-\mi \lambda E t_1/2}\, \me^{-\mi \pi E t_1\big( (\NT+1)(x_0+1)-1 \big) \big( \NT+(\theta-1)/2 \big) },
\end{split}
\end{align}
where, as before, we use the principle square root. Normalisation of $\ket{A_\off}$ and $\ket{A_\on(t)}$ imply
\begin{align}
1=  \frac{\NT}{N}\int \textup{d}E\, \left|\sqrt{c_0\, \mathcal{F}^{-1} \big[ G_{0,\sigma}(t)\big] (E\NT)\, \frac{\sin\big(\pi (\NT+1)(x_0+1) E t_1\big)}{\sin(\pi E t_1)}\,}\, \right|^2,
\end{align}
from which, using the identity $|\sqrt{z}\,|^2=|z|$ for all $z\in\cc$, we determine the value of  $1/|c_0|$ to be
\begin{align}
1/|c_0| &= \frac{\NT}{N}\int \textup{d}E \left|\mathcal{F}^{-1} \big[ G_{0,\sigma}(t)\big] (E\NT)\, \frac{\sin\big(\pi (\NT+1)(x_0+1) E t_1\big)}{\sin(\pi E t_1)}\, \right|\\
&=  \frac{\NT}{N} \int \textup{d}y\, \me^{-2 (\pi\NT \sigma)^2 y^2}\, \left|\, \sinc(\pi \NT y) \, \frac{\sin\big(\pi (\NT+1)(x_0+1) y\big)}{\sin(\pi y)}\, \right|,\label{eq:c_0 xpression}
\end{align}
and where the sign of $c_0\in\rr$ is given by the sign of $c/s$; this follows from \cref{eq:a1 must satisfy}.
Now that we have a value for $c_0$, we can determine $A_1^2$ using \cref{eq:a1 must satisfy}, which is a quadratic equation in $A_1^2$. We thus find
\begin{align}\label{}
\frac{A_1^2}{N^2}= \frac{(\NT+1)\Big( \left(\frac{c}{s}\right)^2 N^2+1\Big) +\Gamma_\pm \sqrt{(\NT+1)^2 \Big( \left(\frac{c}{s}\right)^2 N^2+1\Big)^2 -4 \left(\frac{c}{s}\right)^2 \Big( (\NT+1)^2 - 1/c_0^2\Big)} }{2 \left(\frac{c}{s}\right)^2  N^2\Big( (\NT+1)^2 - 1/c_0^2\Big)},
\end{align}
where $\Gamma_\pm$ is -1 or 1 depending on which solution to the equation we choose. Since we have assumed the amplitudes associated with the kets $\ket{A_\off}$ and $\ket{A_\on}$ to be both real in \cref{eq:psi off def,eq:psi on def} respectively, we need to take the $\Gamma_\pm=-1$ solution. 
Since for small $\sigma$, the probability of measuring a tick counterfactually is 
$P_{cf}= c^2 A_1^2/N^2$ for any one of the $\NT$ ticks, we have that
\begin{align}
P_{cf}= s^2\frac{(\NT+1)\Big( \left(\frac{c}{s}\right)^2 N^2+1\Big) - \sqrt{(\NT+1)^2 \Big( \left(\frac{c}{s}\right)^2 N^2+1\Big)^2 -4 \left(\frac{c}{s}\right)^2 \Big( (\NT+1)^2 - 1/c_0^2\Big)} }{2 N^2\Big( (\NT+1)^2 - 1/c_0^2\Big)},
\end{align}
where recall that $c=\cos(\theta), s=\sin(\theta)$, and $N=N(\sigma)$, $c_0=c_0(\sigma)$ are given by \cref{eq:N as a function of sigma,eq:c_0 xpression} respectively. The numerical values of $P_{cf}$ for a given $\sigma$ presented in the main text and are calculated by setting $\NT=x_0=1$ and evaluating $N$ and $c_0$ numerically for this $\sigma$, followed by maximising $P_{cf}$ numerically over $\theta\in[0,2 \pi]$.\Mspace

Finally, in order for the trial solutions for the wave functions ${\tilde x_l}(E)$, ${\bar x_l}(E)$, ${A_\off}(E)$ and ${A_\on}(E)$ to be valid, we must verify their orthogonality relations; recall that we have so far assumed 
$\braket{\tilde x_l|\tilde x_r}=\braket{\tilde x_l|\bar x_r(t)}= \braket{\tilde x_l|A_\on(t)}= \braket{\bar x_l(t)|A_\on(t)}= \braket{\tilde x_l|A_\off}= \braket{\bar x_l(t)|A_\off}=0$ for all $t\geq 0$, and $l,r\in 0,1,2,\ldots, \NT$ such that $l\neq r$.  This is where the phase factors $\lambda$ come in to play.
From \cref{x bar and tilde eqs,eq:A on off def}, we see that these overlaps are proportional in absolute value, to integrals of the two following different forms:
\begin{itemize}
	\item [Case 1:]
\begin{align}\label{eq:F 1 lambda}
F_1(\lambda):= \left|\, \int \textup{d}x\, \me^{-2 (\pi A \sigma)^2 x^2}\,  \sinc(\pi B_1 x) \, \frac{\sin\big(\pi C_1 x\big)}{\sin(\pi D_1 x)}\,\me^{\mi H_1 x}  \me^{\mi H_2  \lambda x}\,  \right|,
\end{align}
where $A, B_1, C_1, D_1 , C_1/D_1 \in\nnp$, $H_1\in\zz$, while $H_2$ is either a non zero integer or half integer. \\
\item [Case 2:]
\begin{align}\label{eq:F 2 lambda def}
F_2(\lambda):= \left|\, \int \textup{d}x\, \me^{-2 (\pi A \sigma)^2 x^2}\, \sqrt{ \sinc(\pi B_1 x) \, \frac{\sin\big(\pi C_1 x\big)}{\sin(\pi D_1 x)}} {\left(\!\sqrt{ \sinc(\pi B_2 x) \, \frac{\sin\big(\pi C_2 x\big)}{\sin(\pi D_2 x)}}\,\right)\!\!}^* \,\me^{\mi H_1 x}  \me^{\mi H_2  \lambda x} \,\right|,
\end{align}
where $C_1,D_1,H_1,H_2$ satisfy the same constraints as in Case 1, while $B_2, C_2, D_2 , C_2/D_2 \in\nnp$. 
\end{itemize}

We will now verify that we can make $F_1(\lambda), F_2(\lambda)$ arbitrarily small by choosing $\lambda>0$ large enough.

For case 1, since the integrand of $F_1(\lambda)$ is smooth and converges absolutely, it follows from the principle of stationary phase, that $F_1(\lambda)$ tends to zero as $\lambda$ tends to infinity. 

For case 2, the integrand is not absolutely continuous, due to the changes in sign of the functions under the square roots. As such, the principle of stationary phase does not directly apply. However, a variant of the usual stationary phase type argument still applies. For this we will make use of \cref{lem:upper bound on integral}. We start by choosing a consistent expression for $c,f,G,$ and $g$:

\begin{align}
c(x)&= \sqrt{\sign\left( \sinc(\pi B_1 x) \frac{\sin\big(\pi C_1 x\big)}{\sin(\pi D_1 x)} \right)} \left(\sqrt{\sign\left(  \sinc(\pi B_2 x) \, \frac{\sin\big(\pi C_2 x\big)}{\sin(\pi D_2 x)} \right)}\,\,\right)^{\!\!\!*},\\
f(x)&=  \sqrt{ \left|\, \sinc(\pi B_1 x) \frac{\sin\big(\pi C_1 x\big)}{\sin(\pi D_1 x)} \right|} {\left(\!\sqrt{\left|  \sinc(\pi B_2 x) \, \frac{\sin\big(\pi C_2 x\big)}{\sin(\pi D_2 x)} \right|}\,\right)\!\!}\\
&= \sqrt{\left| \,\sin(\pi B_1 x)\sin(\pi B_2 x) \frac{\sin\big(\pi C_1 x\big)  \sin\big(\pi C_2 x\big)}{\sin(\pi D_1 x)  \sin(\pi D_2 x)} \right|}\\
G(x)&=\me^{-2 (\pi A \sigma)^2 x^2},\\
g(x)&= \me^{\mi H_1 x}  \me^{\mi H_2  \lambda x},
\end{align}
where $\sign(\cdot)$ is the \emph{sign} function. \Cref{lem:upper bound on integral} allows us to upper bound $F_2(\lambda)$. Since $\sum_{n\in\zz} G(nT)= \sum_{n\in\zz} \me^{-2 (\pi A \sigma)^2 T^2 n^2} < \infty$, the right hand side of \cref{eq:I up bound} is finite in this case. Moreover, the only $\lambda$ dependency enters in the $v_0$ term. Since $v(x)=\int \textup{d}x\, \me^{\mi H_1 x}  \me^{\mi H_2  \lambda x}= \me^{\mi H_1 x+\mi H_2  \lambda x}/ (\mi H_1 +\mi H_2  \lambda)$, we can choose $v_0= 1 / | H_1 + H_2  \lambda |$. Therefore, $F_2(\lambda)$ tends to zero as $\lambda$ tends to infinity.

\subsection{Precision quantification}\label{Precision quantification}
We now show how to calculate the fidelity between the prescribed protocol in \cref{sec:telling time when all is off} and that of a hypothetical ``idealised'' version of the clock which satisfies both of the following:
\begin{itemize}
\item [1)] The probability that the clock had been on when one of the outcomes associated with the projectors $\Big\{\proj{\tilde x_l}\Big\}_{l=0}^{N_T}$ is obtained, is exactly zero.
\item [2)] The probability that the clock would have ticked $n$ times, had it been on, when the outcome associated with projector $\proj{\tilde x_n}$ is obtained, is exactly one.  
\end{itemize} 

The only difference between the idealised version of the clock and that of \cref{sec:telling time when all is off}, will be the dynamics of the $\ket{\psi_{\on}(t)}$ state. Specifically, we replace \cref{eq:psi on def}, namely, 
\begin{align}
\ket{\psi_\on(t)}&=-\left( \frac{c}{s} \right)\! A_1 \ket{\bar x_0(t)} -\left( \frac{c}{s} \right)\! A_1 \ket{\bar x_1(t)}-\ldots-\left( \frac{c}{s} \right)\! A_1 \ket{\bar x_\NT(t)} + \sqrt{1-(\NT\! +\! 1)\left(\frac{c}{s}\right)^2\! A_1^2} \ket{\An_\on(t)},
\end{align}
with 
\begin{align}
\ket{\psi_\on I (t)}&:=-\left( \frac{c}{s} \right)\! A_1 \ket{\bar x_0 I(t)} -\left( \frac{c}{s} \right)\! A_1 \ket{\bar x_1 I(t)}-\ldots-\left( \frac{c}{s} \right)\! A_1 \ket{\bar x_\NT I(t)} + \sqrt{1-(\NT\! +\! 1)\left(\frac{c}{s}\right)^2\! A_1^2} \ket{\An_\on I(t)},
\end{align}
where \emph{I} stands for idealised, and the kets $\ket{\bar x_m I(t)}$ satisfy $\braket{\tilde x_l | \bar x_m I (t)}=\frac{1}{N} \delta_{l,m} \delta_l(t)$, where $\delta_{l,m}$ is the Kronecker delta, and where $\delta_l(t)=0$ if $t\in\cup_{q=0}^{x_0-1}\big[l\, t_1+qT_0,\, (l+1) t_1+qT_0\big)$ and $\delta_l(t)=1$ otherwise. The ket $\ket{\An_\on I(t)}$ is orthogonal to the kets $\big\{\ket{\tilde x_l}\big\}_{l=0}^{N_T}$ and obeys $\braket{\An_\off| \An_\on I (t)}= c_0 \sum_{l=0}^{\NT} \braket{\tilde x_l | \bar x_l I (t)}$ with $c_0$ given by \cref{eq:a1 must satisfy}, in analogy with \cref{eq:c0 def} satisfied by $\ket{\An_\on (t)}$.

Note that while if one performs the protocol associated with the counterfactual clock laid out in \cref{Protocol and model derivation}, one finds that the clock works ``perfectly'', that is to say, satisfies 1) and 2) above, it is unclear whether such dynamics are achievable with a time independent Hamiltonian. This is the justification for calling it idealised. Its purpose is to show that our protocol which is realised via a time independent Hamiltonian, approximates the idealised clock in fidelity up to an arbitrary precision, by choosing $\delta>0$ sufficiently small. In the special case in which the clock ticks just once ($\NT=1$), this description of an idealised clock is the same as \cref{eq:x 0 overlap,eq:x 1 overlap} in the main text. In particular, we are only concerned with the differences in fidelity when obtaining one of the outcomes $\big\{\ket{\tilde x_l}\big\}_{l=0}^{N_T}$, that is to say, the outcomes which are relevant for the counterfactual operation of the clock. We start by evaluating the difference in fidelities at time $t$ between what is actually obtained with our protocol, and what would have been obtained in the idealised case:
\begin{align}
	\textup{Dif}_p(\sigma,t)&:= \big|  \bra{\tilde x_p} \big(c \ket{\psi_\off}+s \ket{\psi_\on(t)}\big) \big|^2 -\big|  \bra{\tilde x_p} \big(c \ket{\psi_\off}+s \ket{\psi_\on I(t)}\big) \big|^2\\
	&=\frac{A_1^2 c^2}{N^2} \left( \big| 1- N\braket{\tilde x_p | \bar x_p(t)} \big|^2 - \big| 1- \delta_p(t) \big|^2 \right),
\end{align}
for $p=0,1,\ldots, \NT$. Furthermore, since the clock could be measured at any time, it is instructive to consider the time averaged error rate over the total operation time of the clock, namely the interval $[0,x_0 T_0)$. We can furthermore subdivide this interval into two disjoint parts. The first is the interval in which $p$ ticks should have occurred (and thus the overlaps should be close to one), namely $\cup_{q=0}^{x_0-1} [ p t_1+q T_0, (p+1) t_1 + q T_0)$. The second consists in the remaining intervals, namely $\cup_{q=0}^{x_0-1} \big( [q T_0, p t_1 + q T_0) \cup [(p+1) t_1+ q T_0, (q+1) T_0]  \big) $, for which the probability of the clock ticking $p$ times, had it been on, should be very small.  
For the 1st type of error, we find
\begin{align}
\textup{Dif}^{(1)}_p(\sigma) &:= \frac{1}{x_0 t_1}\sum_{q=0}^{x_0-1} \int_{p t_1 + q T_0}^{(p+1) t_1 + q T_0} \textup{Dif}^{(1)}_p(\sigma,t) \\
&= \frac{A_1^2 c^2}{N^2 x_0 t_1}  \sum_{q=0}^{x_0-1} \int_{p t_1 + q T_0}^{(p+1) t_1 + q T_0} dt \left( \bigg| 1- G_\sigma\left( \frac{t-(\theta p+1)t_1}{\NT}-\frac{t_1}{2} \right) \bigg|^2-1\right)\\
&= \frac{A_1^2 c^2}{N^2} \left(-1+  \sum_{q=0}^{x_0-1} \int_{0}^{1} dx\,  \bigg| 1- G_\sigma\left( \left[\frac{x-1+(1-\theta)p+q(\NT+1)}{\NT}-\frac{1}{2}\right]t_1\! \right) \bigg|^2\right),
\end{align}
which, recalling the definition of $G_\sigma$  (\cref{eq:inner produc in term of G def,eq: G sigma def}), is observed to be $t_1$ independent.
For the case studied in the main text, $x_0=1$, $\theta=-1$, $\NT=1$, $p=0,1$, and we find
\begin{align}
\textup{Dif}^{(1)}_p(\sigma) &= \frac{A_1^2 c^2}{N^2} \left(-1+   \int_{0}^{1} dx\,  \bigg| 1- G_\sigma\left( \left[x-1+2p-\frac{1}{2}\right]t_1\! \right) \bigg|^2\right)\\
&= \frac{A_1^2 c^2}{N^2} \left(-1+   \int_{0}^{1} dx\,  \bigg| 1- G_\sigma\left( \left[x-1-\frac{1}{2}\right]t_1\! \right) \bigg|^2\right),
\end{align}
which holds for both $p=0,1$ since $G_\sigma$ is a symmetric function.
Similarly, we have 
\begin{align}
\textup{Dif}^{(2)}_p(\sigma) :=& \frac{1}{x_0(T_0- t_1)}\sum_{q=0}^{x_0-1} \int_{q T_0}^{pt_1 + q T_0} \textup{Dif}^{(2)}_p(\sigma,t)+ \int_{(p+1)t_1+q T_0}^{(q+1) T_0} \textup{Dif}^{(2)}_p(\sigma,t)\\
=&  \frac{1}{x_0 \NT}\sum_{q=0}^{x_0-1} \Bigg[\int_{0}^{p} dx \left| 1 -G_\sigma \left( \left[ \frac{x-(\theta p+1)+q(\NT+1)}{\NT}-\frac{1}{2} \right]t_1 \!\right)  \right|^2\\
&  + \int_{0}^{\NT-p} dx \left| 1 -G_\sigma \left( \left[ \frac{x+(1-\theta) p+q(\NT+1)}{\NT}-\frac{1}{2} \right] t_1 \!\right) \right|^2  \Bigg],
\end{align}
which, similarly to $\textup{Dif}^{(1)}_p(\sigma)$, is observed to be $t_1$ independent. For the special case of one tick described in the main text ( $x_0=1$, $\theta=-1$, $\NT=1$, $p=0,1$), the expression reduces to
\begin{align}
\textup{Dif}^{(2)}_p(\sigma)&=  \Bigg[\int_{0}^{p} dx \left| 1 -G_\sigma \left( \left[ x+ p-1-\frac{1}{2} \right]t_1 \!\right)  \right|^2  + \int_{0}^{1-p} dx \left| 1 -G_\sigma \left( \left[ x+2 p-\frac{1}{2} \right] t_1 \!\right) \right|^2  \Bigg]\\
&= \int_{0}^{1} dx \left| 1 -G_\sigma \left( \left[ x-\frac{1}{2} \right]t_1 \!\right)  \right|^2,
\end{align}
for both $p=0$ and $p=1$.

\subsection{Technical lemmas}

For the following lemma, recall the definition of the inverse Fourier transform: $\mathcal{F}^{-1}[f(x)](y):=\int \textup{d}x f(x) \me^{\mi 2\pi x y}$.   
\begin{lemma}\label{lem:fourier of quasi period fuction}
Consider a function $f(x)=\sum_{m=a}^{b} f_0\left(x-mT\right)$, generated by some Schwartz space function $f_0:\rr\to\rr$, with parameters $a\leq b$,\, $a,b\in\zz$, $T\in\rr$. Its inverse Fourier transform takes on the form
\begin{align}
\mathcal{F}^{-1}\big[f(x)\big](y)= \mathcal{F}^{-1}\big[f_0(x)\big] (y) \,\frac{\sin\big((b-a+1) \pi y T \big)}{\sin\big(\pi y T\big)}\, \me^{\mi \pi y a T (2+b-a)} 
\end{align}
for all $y\in\rr$ and where the singular points are assigned by continuity. 
\end{lemma}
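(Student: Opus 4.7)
The plan is to unpack the inverse Fourier transform directly, exploit linearity and translation covariance to reduce the computation to a geometric sum, and then evaluate that sum in closed form.

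First I would write
\begin{align}
\mathcal{F}^{-1}[f(x)](y) = \int_\rr \sum_{m=a}^{b} f_0(x-mT)\, \me^{\mi 2\pi x y}\, dx,
\end{align}
and move the finite sum outside the integral. For each $m$ I would perform the substitution $u = x-mT$, which decouples the $m$-dependence into a pure phase $\me^{\mi 2\pi m T y}$ multiplying the $m$-independent integral $\mathcal{F}^{-1}[f_0](y)$. Since $f_0$ is in Schwartz space, all manipulations (Fubini, change of variables) are trivially justified and $\mathcal{F}^{-1}[f_0](y)$ is well-defined and smooth. This reduces the problem to evaluating
\begin{align}
S(y) := \sum_{m=a}^{b} \me^{\mi 2\pi m T y}.
\end{align}

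Next I would handle $S(y)$ as a geometric series in $r = \me^{\mi 2\pi T y}$ by factoring out $r^a$ and summing $1 + r + \dots + r^{b-a}$ via the standard formula $(r^{N}-1)/(r-1)$ with $N=b-a+1$. Applying the usual ``half-angle'' trick $\me^{\mi N\theta}-1 = 2\mi\, \me^{\mi N\theta/2}\sin(N\theta/2)$ to both numerator and denominator converts the ratio into $\me^{\mi(N-1)\theta/2}\sin(N\theta/2)/\sin(\theta/2)$ with $\theta = 2\pi T y$. Combining this with the overall factor $r^a = \me^{\mi a\theta}$ collects all phases into a single exponent, yielding
\begin{align}
S(y) = \me^{\mi \pi T y (2a + (b-a))}\, \frac{\sin\!\big((b-a+1)\pi T y\big)}{\sin(\pi T y)},
\end{align}
which I would then rewrite in the form stated in the lemma. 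The only mild bookkeeping is making sure the exponent is expressed in the same grouping as in the statement, and this is purely algebraic.

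Finally, to address the ``singular points are assigned by continuity'' clause, I would observe that the apparent singularities of $S(y)$ occur precisely at $Ty \in \zz$, where $\sin(\pi Ty) = 0$. At such points each term $\me^{\mi 2\pi m T y}$ equals $\me^{\mi 2\pi m T y_0} = (\me^{\mi 2\pi T y_0})^m$, and since $T y_0 \in \zz$ this common ratio is $1$, so $S(y_0) = b-a+1$. On the other hand, a standard $\sin(Nx)/\sin(x) \to N$ as $x\to 0$ argument (applied at each integer) shows the closed-form RHS tends to the same value. Hence the formula extends continuously to all of $\rr$. I do not expect any serious obstacle: the main work is just the geometric-sum identity plus a careful accounting of the overall phase.
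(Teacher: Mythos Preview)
Your argument is correct and is essentially the paper's own proof: the paper writes $f=f_0\ast\sum_{m=a}^{b}\delta(\cdot-mT)$, applies the convolution theorem to get $\mathcal{F}^{-1}[f_0](y)\sum_{m=a}^{b}\me^{\mi 2\pi mTy}$, and then cites a reference for the closed form of that arithmetic-progression sum, which is precisely the geometric-series computation you carry out by hand. One small caution: your (correct) phase $\me^{\mi\pi Ty(a+b)}$ does not literally match the stated exponent $\me^{\mi\pi y\,a\,T(2+b-a)}$ except when $a=1$ or $a=b$, so the ``purely algebraic'' rewriting you anticipate would in fact expose what looks like a typo in the lemma's phase factor rather than go through verbatim.
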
	

\begin{proof}
First observe that $f$ can be written in terms of convolution of $f_0$ with a Dirac comb:
	\begin{align}
		f(x)= f_0(x)\, \hat \otimes \sum_{q=a}^{b} \delta(x - q T),
	\end{align}
	where $\hat \otimes$ denotes convolution. Now apply the inverse Fourier transform to both sides of the equation and invoke the convolution theorem. This gives
	\begin{align}\label{eq:check sign}
		\mathcal{F}^{-1}\big[f(x)\big](y)= 	\mathcal{F}^{-1}\big[f_0(x)\big](y)\, \mathcal{F}^{-1}\left[\sum_{q=a}^{b} \delta(x - q T)\right](y)=	\mathcal{F}^{-1}\big[f_0(x)\big](y) \sum_{q=a}^{b} \me^{ \mi 2\pi q yT}.
	\end{align}	
Finally, take real and imaginary parts of $\sum_{q=a}^{b} \me^{ \mi 2\pi q yT}$ followed by applying the sums of cosines and sines arithmetic progressions formulas from~\cite{GeometriSeries}.	
\end{proof}

\begin{lemma}\label{lem:upper bound on integral}
	Consider a convergent integral of the form $I:=\int_{-\infty}^\infty\textup{d}x\, c(x)f(x)G(x)g(x),$ where $c:\rr\to\cc$, $f:\rr\to\rr$, $G:\rr\to\rro,$ $g:\rr\to\cc$ and $G,g\in {\bf C}^{1}$ with $\frac{d}{dx}G(x)\geq 0$ for $x\in(-\infty,0]$ and $\frac{d}{dx}G(x)\leq 0$ for $x\in[0,\infty)$. 
	Furthermore, let $f$ be periodic with a countable number of zeros greater or equal to one in $[0,T]$, where $T$ is its period; and let $f\in {\bf C}^1$ on the intervals $\rr\backslash \{\ldots, a_{-2},a_{-1},a_1,a_2,\ldots\}$ where $\ldots, a_{-2},a_{-1},a_1,a_2,\ldots$ is the sequence of zeros of $f$ in ascending order.	
	In addition, let there exist a sequence of constants $(c_l)_l$ such that $c(x)=c_l$ for all $x\in \big(a_l,a_{l+1}\big)$, $l\in\zz$, where $a_0=0$. Let both $c(x)$ and  $v(x):=\int \textup{d}x\, g(x)$ be uniformly bounded:  $|c(x)|=1$ and $|v(x)|\leq v_0$ for some $v_0\geq 0$, for all $x\in\rr$. The following bound holds:
	\begin{align}\label{eq:I up bound}
	|I| \leq 2 N_\textup{Z} \big(1+2 N_\textup{TP}\big) \left(\max_{x\in[0,T]} |f(x)| \right)  \left( 2 G(0)+ \sum_{n\in\zz}\, G(nT)\right) v_0,
	\end{align}
	where $N_\textup{TP}\in\nnp$ is the number of tuning points of $f$ in $[0,T]\,\backslash \,(\ldots, a_{-2},a_{-1},a_1,a_2,\ldots)$, while $N_\textup{Z}\in\nnp$ is number of zeros of $f$ in interval $[0,T]$.
\end{lemma}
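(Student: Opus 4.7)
The plan is to exploit the fact that the discontinuities of $c$ at the points $a_l$ coincide with zeros of $f$, so that integration by parts on each sub-interval $(a_l,a_{l+1})$ produces vanishing boundary contributions. First I would decompose $\rr=\bigcup_l (a_l,a_{l+1})$, on each piece of which $c(x)=c_l$ is constant with $|c_l|=1$. With $v$ the antiderivative of $g$ specified in the statement, satisfying $|v|\leq v_0$, integration by parts on $(a_l,a_{l+1})$ gives
\begin{align*}
\int_{a_l}^{a_{l+1}} f(x)G(x)g(x)\,dx = \bigl[f(x)G(x)v(x)\bigr]_{a_l}^{a_{l+1}} - \int_{a_l}^{a_{l+1}} (fG)'(x)\,v(x)\,dx.
\end{align*}
Because $f(a_l)=f(a_{l+1})=0$, the boundary terms vanish, and therefore
\begin{align*}
|I| \leq \sum_l |c_l|\,\Bigl|\int_{a_l}^{a_{l+1}} (fG)'\,v\,dx\Bigr| \leq v_0 \int_\rr |(fG)'(x)|\,dx.
\end{align*}

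Next, I would bound the total variation of $fG$ via the product rule $(fG)'=f'G+fG'$. This splits the task into two pieces. For the second, $\int_\rr |f|\,|G'|\,dx \leq M\int_\rr |G'|\,dx = M\cdot \textup{TV}_\rr(G)\leq 2MG(0)$, where $M:=\max_{x\in[0,T]}|f(x)|$, and the variation bound uses that $G\geq 0$ is unimodal with maximum at the origin. For the first, $\int_\rr |f'|\,G\,dx$, I would exploit the period-$T$ structure of $f$ by writing
\begin{align*}
\int_\rr |f'|\,G\,dx = \sum_{n\in\zz}\int_{nT}^{(n+1)T} |f'(x)|\,G(x)\,dx \leq \sum_{n\in\zz} G_{\max,n}\,\textup{TV}_{[0,T]}(f),
\end{align*}
where $G_{\max,n}:=\sup_{x\in[nT,(n+1)T]}G(x)$ and I used periodicity of $|f'|$ to replace the variation on each period by $\textup{TV}_{[0,T]}(f)$.

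The two remaining pieces are combinatorial. On $[0,T]$ the function $f$ admits at most $N_{\textup{TP}}$ monotone arcs cut off by its turning points (and possibly further refined by its zeros), each contributing at most $2M$ to the variation, so $\textup{TV}_{[0,T]}(f)\leq 2MN_{\textup{TP}}$; and the unimodality of $G$ forces $G_{\max,n}\leq G(nT)$ for $n\geq 1$, $G_{\max,n}\leq G((n+1)T)$ for $n\leq -2$, and $G_{\max,n}\leq G(0)$ for $n\in\{-1,0\}$. Summing these cases yields $\sum_n G_{\max,n}\leq 2G(0)+\sum_{n\in\zz}G(nT)$. Putting everything together produces a bound of the shape $|I|\leq 2Mv_0(1+2N_{\textup{TP}})\bigl(2G(0)+\sum_n G(nT)\bigr)$; since by hypothesis $N_{\textup{Z}}\geq 1$, the stated bound follows with the factor $N_{\textup{Z}}$ absorbed as a loose majorization.

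The main obstacle I expect is the careful justification that no boundary contributions at $\pm\infty$ survive the integration-by-parts step: this requires $fGv\to 0$ at infinity, which should follow from the assumed convergence of $I$ together with the decay of $G$ (since $fv$ is bounded but generally nonvanishing). A secondary subtlety is the precise counting of monotone pieces of $f$ per period when zeros and turning points coincide or interleave in complicated ways; the stated $N_{\textup{Z}}(1+2N_{\textup{TP}})$ prefactor is a generous overcount that cleanly sidesteps this bookkeeping.
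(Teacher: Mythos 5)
Your proposal is correct and follows essentially the same route as the paper's proof: split $\rr$ at the zeros of $f$ so that $c$ is locally constant and the boundary terms of the integration by parts vanish, then bound $\int|(fG)'|$ via the product rule, the unimodality of $G$, and the periodicity of $f$ together with a turning-point count. The only differences are organizational — you sum the $f'G$ contribution over periods $[nT,(n+1)T]$ rather than over the zero-intervals $(a_l,a_{l+1})$, which in fact yields a slightly tighter bound with the $N_\textup{Z}$ factor absorbed as slack ($N_\textup{Z}\geq 1$), whereas in the paper $N_\textup{Z}$ enters through bounding $\sum_l G(a_l)$ zero-by-zero; also note there is no boundary issue at $\pm\infty$ since every sub-interval has finite endpoints at which $f$ vanishes.
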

\begin{proof}
	Since $f$ is periodic and has a countable number of zeros, without loss of generality, let $a_l<0$ for $l<0$ and $a_l\geq 0$ for $l> 0$. Define 
	$c_0:=c_1$ unless $0$ is a zero of $f$, in which case $c_0$ is already defined. We start by observing
	\begin{align}
	I=\sum_{l\in\zz} c_l \int_{a_l}^{a_{l+1}} \textup{d}x\, f(x) G(x)g(x).
	\end{align}
	Since $f,G$ and $g$ are smooth on the intervals $(a_l,a_{l+1})$, we can now integrate by parts:
	\begin{align}
	I=\sum_{l\in\zz}c_l \Big[ f(x) G(x) v(x)\Big]_{a_l}^{a_{l+1}}+  c_l \int_{a_l}^{a_{l+1}} \textup{d}x\, \left(\frac{d}{dx} f(x)G(x)\right) v(x).
	\end{align}
	Taking absolute values, employing the triangle inequality and rearranging the summation, we arrive at
	\begin{align}\label{eq:first part of I bound 0}
	|I| &\leq \sum_{l\in\zz}  \int_{a_l}^{a_{l+1}}  \textup{d}x \left|\frac{d}{dx} f(x) G(x)\right| \Big|v(x)\Big| +2 \sum_{l\in\zz}  \Big|f(a_l)G(a_l)v(a_l)\Big|\\
	& \leq v_0  \sum_{l\in\zz}  \int_{a_l}^{a_{l+1}}  \textup{d}x \left|\left(\frac{d}{dx} f(x)\right) G(x)+ f(x) \left(\frac{d}{dx} G(x)\right) \right|,\\
	& \leq v_0  \sum_{l\in\zz}  \int_{a_l}^{a_{l+1}}  \textup{d}x \left|\frac{d}{dx} f(x)\right| \big| G(x) \big|+ \int_{a_l}^{a_{l+1}}  \textup{d}x \,\big|f(x)\big|   \left|\frac{d}{dx} G(x)\right| \\
	& \leq  v_0\!  \left(\sum_{ \substack{l\in\zz \\ l< 0}} G(a_{l+1})  \! \int_{a_l}^{a_{l+1}}  \textup{d}x \bigg|\frac{d}{dx} f(x)\bigg|\, +\sum_{ \substack{l\in\zz \\ l\geq 0}}G(a_{l})  \! \int_{a_l}^{a_{l+1}}  \textup{d}x \bigg|\frac{d}{dx} f(x)\bigg| + J \sum_{l\in\zz} \int_{a_l}^{a_{l+1}}\textup{d}x  \left|\frac{d}{dx} G(x)\right| \right),
\end{align}
	where we have denoted $J:=\left(\max_{x\in[0,T]} |f(x)| \right)$. Observing that the last term simplifies to 
\begin{align}
	& \sum_{l\in\zz} \int_{a_l}^{a_{l+1}}\textup{d}x  \left|\frac{d}{dx} G(x)\right|=  \sum_{l\in\zz} \left| \int_{a_l}^{a_{l+1}}\textup{d}x  \frac{d}{dx} G(x)\right| =  \sum_{l\in\zz} \big| G(a_{l+1})-G(a_l)\big| = 2 \sum_{l\in\zz} G(a_{l}),
\end{align}
we conclude 
\begin{align}
	|I| \leq  v_0\!  \left(\sum_{ \substack{l\in\zz \\ l< 0}} G(a_{l+1})  \! \int_{a_l}^{a_{l+1}}  \textup{d}x \bigg|\frac{d}{dx} f(x)\bigg|\, +\sum_{ \substack{l\in\zz \\ l\geq 0}}G(a_{l})  \! \int_{a_l}^{a_{l+1}}  \textup{d}x \bigg|\frac{d}{dx} f(x)\bigg| + 2 J \sum_{l\in\zz} G(a_{l}) \right).\label{eq:first part of I bound}
\end{align}
	We now further decompose the integral $\int_{a_l}^{a_{l+1}}  \textup{d}x \big|\frac{d}{dx} f(x)\big|$ into a summation over the interval where the derivative of $f$ does not change sign, namely let $b_{r(l)}, b_{r(l)+1}, b_{r(l)+2},\ldots, b_{r(l)+m(l)}$ be the sequence of the locations of the turning points of $f$ in interval $(a_l,a_{l+1})$ ordered in ascending order. We thus have: 
	\begin{align}\label{eq:second I bound part}
	&\int_{a_l}^{a_{l+1}}  \textup{d}x \bigg|\frac{d}{dx} f(x)\bigg|\\
	& = \int_{a_l}^{b_{r(l)}}  \textup{d}x \left|\frac{d}{dx} f(x)\right| + \int_{b_{r(l)}}^{b_{r(l)+1}}  \textup{d}x \left|\frac{d}{dx} f(x)\right| + \int_{b_{r(l)+1}}^{b_{r(l)+2}}  \textup{d}x \left|\frac{d}{dx} f(x)\right| +\ldots + \int_{b_{r(l)+m(l)}}^{a_{l+1}}  \textup{d}x \left|\frac{d}{dx} f(x)\right| \\
	& =  \left|\int_{a_l}^{b_{r(l)}}  \textup{d}x\, \frac{d}{dx} f(x)\right| + \left|\int_{b_{r(l)}}^{b_{r(l)+1}}  \textup{d}x\, \frac{d}{dx} f(x)\right| + \left|\int_{b_{r(l)+1}}^{b_{r(l)+2}}  \textup{d}x\, \frac{d}{dx} f(x)\right| +\ldots + \left|\int_{b_{r(l)+m(l)}}^{a_{l+1}}  \textup{d}x\, \frac{d}{dx} f(x)\right|\\
	& \leq 2 \Big( |f(b_{r(l)})| + |f(b_{r(l)+1})| + |f(b_{r(l)+2})|+\ldots + |f(b_{r(l)+m(l)})|\Big)\\
	& \leq 2 N_\textup{TP} J
	\end{align}
	for all $l\in\zz$. We can now plug \cref{eq:second I bound part} into \cref{eq:first part of I bound}, yielding:
	\begin{align}
	|I| &\leq  v_0\!  \left(\left(\sum_{ \substack{l\in\zz \\ l< 0}} G(a_{l+1})   +\sum_{ \substack{l\in\zz \\ l\geq 0}}G(a_{l})  \right) 2 N_\textup{TP} J + 2 J \sum_{l\in\zz} G(a_{l}) \right)\\
	&= 2J v_0  \left(2N_\textup{TP} G(0) + \big(2 N_\textup{TP}+1 \big) \sum_{l\in\zz}G(a_l) \right),
	\end{align}
	thus observing that $\sum_{l\in\zz}G(a_l) \leq N_\textup{Z} \left(\sum_{n\in\nno} G(n T) + G(-n T) \right)=N_\textup{Z} \left(G(0)+\sum_{n\in\zz} G\big(n T\big)\right)$ we conclude the proof.
\end{proof}


\end{document}